\renewcommand*\env@matrix[1][*\c@MaxMatrixCols c]{%
  \hskip -\arraycolsep
  \let\@ifnextchar\new@ifnextchar
  \array{#1}}
\newcounter{mnotecount}[section]
\newcommand{\mnotex}[1]%{}
{\protect{\stepcounter{mnotecount}}$^{\mbox{\footnotesize $\bullet$\themnotecount}}$ 
\marginpar{%\color{red}%
\raggedright\tiny\em
$\!\!\!\!\!\!\,\bullet$\themnotecount: #1} }
\newtheorem{lem}{Lemma}
\newtheorem{thm}{Theorem}
\newtheorem{defn}[thm]{Definition}  
\newtheorem{prop}[thm]{Proposition} 
  \newcommand{\oo}{\infty}
  \newcommand{\del}{\partial}
  \newcommand{\sgn}{\operatorname{sgn}}
\renewcommand{\d}{\mathrm{d}}
  \newcommand{\Lie}{\mathcal{L}}
  \newcommand{\bell}{\bm{\ell}}
  \newcommand{\bn}{\mathbf{n}}
  \newcommand{\bem}{\mathbf{m}}
  \newcommand{\bA}{\mathbf{A}}
  \newcommand{\cA}{\mathcal{A}}
  \newcommand{\bB}{\mathbf{B}}
  \newcommand{\bC}{\mathbf{C}}
  \newcommand{\bD}{\mathbf{D}}
  \newcommand{\cF}{\mathcal{F}}
  \newcommand{\cG}{\mathcal{G}}
  \newcommand{\Ia}{I_{\mathrm{a}}}
  \newcommand{\Ib}{I_{\mathrm{b}}}
  \newcommand{\Ic}{I_{\mathrm{c}}}
  \newcommand{\Id}{I_{\mathrm{d}}}
  \newcommand{\Ie}{I_{\mathrm{e}}}
  \newcommand{\ff}{\mathrm{f}}
  \newcommand{\bK}{\mathbf{K}}
  \newcommand{\cK}{\mathcal{K}}
  \newcommand{\bL}{\mathbf{L}}
  \newcommand{\fL}{\mathfrak{L}}
  \newcommand{\bM}{\mathbf{M}}
  \newcommand{\bN}{\mathbf{N}}
  \newcommand{\bp}{\mathbf{p}}
  \newcommand{\bT}{\mathbf{T}}
  \newcommand{\bX}{\mathbf{X}}
\renewcommand{\Im}{\operatorname{Im}}
\renewcommand{\Re}{\operatorname{Re}}
\def\beq#1\eeq{\begin{align}#1\end{align}}
\title{IDEAL characterization of vacuum pp-waves}
\author{Igor Khavkine, David McNutt, Lode Wylleman}
\date{}
\begin{document}
\maketitle

\begin{abstract}

An IDEAL characterization of a particular spacetime metric, $g_0$, consists of a set of tensorial equations $T[g] = 0$ arising from expressions constructed from the metric, $g$, its curvature tensor and its covariant derivatives and which are satisfied if and only if $g$ is locally isometric to the original metric $g_0$. Earlier applications of the IDEAL classification of spacetimes relied on the construction of particular scalar polynomial curvature invariants as an important step in the procedure. 

In this paper we investigate the well-known class of vacuum pp-wave spacetimes, where all scalar polynomial curvature invariants vanish, and determine the applicability of an IDEAL classification for these spacetimes. We consider a modification of the IDEAL approach which permits a corresponding extension of the Stewart-Walker lemma. With this change, we are able to construct invariants and IDEAL-ly classify all of the vacuum pp-wave solutions which admit a two- or higher-dimensional isometry group, with the exception of one case. 

\end{abstract}

\section{Introduction}

An old topic in General Relativity~\cite[Ch.9]{kramer} \cite{mars-local}
is the question of how to decide when two explicit solutions of
Einstein's equations are actually the same solution up to a change of
coordinates, or conversely are distinct even allowing for a change of
coordinates. These questions are special cases of \emph{invariant
equivalence} and \emph{characterization} within the broader scope of
Riemannian, pseudo-Riemannian, and more general kinds of
geometries~\cite{olver-eis}. One traditional approach to such questions
is the so-called \emph{Cartan-Karlhede algorithm}~\cite{karlhede}
\cite[\S9.2]{kramer}, which assigns a normalized
frame, specially adapted to the algebraic structures of the Weyl
curvature tensor and Ricci tensor, to the spacetime. An alternative, which has gained some attention
recently~\cite{fs-schw, fs-kerr, fs-typeD, fs-sphsym, fs-szsa, gpgl-schw,
gpgl-kerr, gpgl-typed, krongos-torre, cdk, kh-gst}, is the so-called
\emph{IDEAL}%
	\footnote{The acronym, explained in~\cite{fs-sphsym} (footnote, p.2),
	stands for Intrinsic, Deductive, Explicit and ALgorithmic.} %
\emph{characterization} approach, based on algebraic identities between
(possibly tensorial) curvature invariants of a spacetime in question,
which does not require auxiliary geometric structures other than the
metric itself. An advantage of the Cartan-Karlhede approach is its
generality and breadth of applicability. On the other hand, IDEAL
characterizations have the advantage of being intimately connected to
linearized gauge-invariant observables~\cite{kh-calabi,fhk}, intrinsic
characterization of a spacetime by its initial data~\cite{gpgl-kerr}, as
well as quantitative estimates of convergence to a target
spacetime~\cite{bobbgst, okounkova}.

Naively, the easiest way to compare two metrics $(g_1, M_1)$ and $(g_2,
M_2)$ is through their \emph{scalar curvature invariants}. In fact,
picking a certain number of scalars $\mathcal{R}[g] =
(\mathcal{R}^1[g],\ldots, \mathcal{R}^N[g])$ constructed by total
contractions of copies of the metric $g$, its curvature tensor
$R_{abcd}[g]$, and its covariant $\nabla_e$-derivatives, one can
immediately conclude that the metrics on domains $U_1 \subset M_1$ and
$U_2 \subset M_2$ are not isometric if the images of the maps
$\mathcal{R}[g_1]\colon U_1 \to \mathbb{R}^N$ and
$\mathcal{R}[g_2]\colon U_2 \to \mathbb{R}^N$ do not coincide. For
Riemannian metrics, under mild uniformity assumptions, one can
conversely conclude that the equality of these images implies the
existence of an isometry between $(g_1|_{U_1}, U_1)$ and $(g_2|_{U_2},
U_2)$, for special choices of $N$ and $\mathcal{R}[g]$~\cite{chp-4d,
chp-higherd}.

In the Lorentzian case, there exist large families of spacetime metrics
that cannot be distinguished by any scalar curvature invariants, while
containing non-isometric members~\cite{chp-4d, chp-higherd}. It is for
this reason that the Cartan-Karlhede algorithms supplements scalar
cuvature invariants with invariants with respect to a normalized frame.
Analogously, the IDEAL approach makes additional use of tensorial
curvature invariants. The algorithmic aspects of the Cartan-Karlhede
approach have been extensively studied and its scope of applicability
and limitations are well-understood~\cite[\S9.2]{kramer}. No systematic
approach has yet been developed for IDEAL characterizations and they
have been so far found only by ad-hoc methods for special families of
spacetime geometries.

\emph{Vacuum pp-waves}~\cite{EhlersKundt} \cite[\S24.5]{kramer} are a
family of non-flat spacetimes with special algebraic and symmetry
properties. All possible scalar curvature invariants of a vacuum pp-wave
vanish, yet the family contains mutually non-isometric subfamilies that are
parametrized by free continuous parameters and even functions. These spacetimes belong to the larger class of VSI spacetimes for which all scalar polynomial curvature invariants vanish \cite{pravda2002all}. 

This property along with the pp-waves interpretation as a toy-model for gravitational waves has lead to the study of these solutions in string theory and alternative gravity theories \cite{horowitz1990strings}. The vacuum pp-waves provide a simple illustration of the significant difference in the approaches to the local characterization of Riemannian metrics and Lorentzian metrics using invariants constructed from the metric, the curvature tensor and its covariant derivatives.

Vacuum pp-waves have been extensively studied in the Cartan-Karlhede
approach, including a complete classification~\cite{mcnutt-thesis,
mcnutt-ppwave1, mcnutt-ppwave2}. To date, no IDEAL characterization of
vacuum pp-waves has been carried out. In this work, we partially rectify
this omission and give an as complete as possible IDEAL characterization
of isometry classes within the $4$-dimensional vacuum pp-wave subfamily
with more than the minimum number of Killing vectors (\emph{highly
symmetric} pp-waves). See Theorem~\ref{thm:ideal-pp-wave-isom} for the
precise statement, including the description of the characterized
subfamily and the small list of exceptions. We continue to use ad-hoc
methods in this case, made difficult by the impossibility to use any
non-trivial scalar curvature invariants. We hope that the experience
gained in this work will help in formulating a systematic approach to
IDEAL characterizations.

In Section~\ref{sec:ideal}, we recall the precise notion of an IDEAL
characterization and introduce some ways that we were forced to generalize
it. Section~\ref{sec:pp-wave} contains relevant background geometric
information about $4$-dimensional vacuum pp-wave spacetimes. In
Section~\ref{sec:main} we present our main results including the
classification of isometry classes within the highly symmetric subfamily
(Theorem~\ref{thm:pp-wave-isom-classes},
Table~\ref{tab:pp-wave-isom-classes}, Figure~\ref{fig:pp-wave-classes}),
and the IDEAL characterization of thse isometry classes
(Theorem~\ref{thm:ideal-pp-wave-isom}, Table~\ref{tab:pp-wave-ideal},
Figure~\ref{fig:pp-wave-flowchart}. Section~\ref{sec:discuss} finishes
with a discussion and an outlook toward future work.

\section{IDEAL characterization} \label{sec:ideal}

As was alluded to in the Introduction, an IDEAL characterization of a
spacetime is a local invariant condition that uses only the metric tensor and
quantities that can be covariantly obtained from it. More specifically,
an ideal characterization of a geometry $(M,g_0)$ is a list of tensorial
equations
\begin{equation} \label{eq:ideal-example}
	\bT_k[g] = 0, \quad k=1,2,\ldots,N,
\end{equation}
constructed covariantly out of a metric $g$ and its derivatives
(concomitants of the Riemann tensor) that are satisfied on a small
neigborhood of a point if and only if that neighborhood is isomorphic to
a neighborhood of a point of $(M,g_0)$. In this case, we say that the
tensor identities~\eqref{eq:ideal-example} characterize the local
isometry class of $(M,g_0)$. We also allow the possibility that $g_0$ is
a family of (possibly non-isometric) metrics and the tensor
identities~\eqref{eq:ideal-example} imply that $g$ is locally isometric
to some member of that family. In such cases, we speak of an IDEAL
characterization of the whole family.

The most classical example is of spaces of locally constant curvature,
characterized by the identity
\begin{equation} \label{eq:ideal-cc}
	R_{abcd} = \alpha \left(g_{ac} g_{bd} - g_{ad} g_{bc}\right) ,
\end{equation}
where $\alpha$ is a constant; in relativity it is re-expressed via the
cosmological constant $\Lambda = \frac{(n-1)(n-2)}{2} \alpha$, in dimensions $n>2$.
Of course $\alpha=0$ corresponds to the locally flat case. The
\emph{locally} attribute is important because~\eqref{eq:ideal-cc} as any
IDEAL characterization~\eqref{eq:ideal-example} is a fundamentally local
condition, which is insensitive to the global topology of the spacetime.
Other notable geometries that have been given IDEAL characterizations
include Schwarzschild~\cite{fs-schw} (including in higher
dimensions~\cite{kh-gst}) and Kerr~\cite{fs-kerr} black holes, as well
as FLRW~\cite{cdk} cosmological solutions. A growing list of other
examples can be gleaned from the references in the Introduction.

For the sake of terminology, let \emph{IDEAL tensors} refer to the
left-hand sides of~\eqref{eq:ideal-example}. If $v$ is a vector field
and $\bT[g]$ is tensor covariantly constructed from the metric, then the
identity (sometimes known as the \emph{Stewart-Walker
lemma}~\cite{sw-pert}) $\dot{\bT}_g[\Lie_v g] = \Lie_v \bT[g]$, where
$\bT[g+h] = \bT[g] + \dot{\bT}_g[h] + O(h^2)$, immediately links an
IDEAL characterization to linear gauge invariant observables. Namely, if
$\bT[g]$ is one of the IDEAL tensors for $g$, then $\dot{\bT}_g[\Lie_v
g] = 0$ implies that $\dot{\bT}_g[h]$ is invariant under local
linearized gauge transformations (diffeomorphisms) of linearized gravity
about the background $g$. In fact, linearizing all the IDEAL tensors for
a given $g$ yields a good candidate for a complete set of linear gauge
invariants on that background~\cite{kh-calabi, fhk, kh-compat}. We
consider this connection to linear gauge invariants through
linearization to be one of the important applications of IDEAL
characterizations.

Later, we will find that the limitations on how IDEAL tensors~$\bT[g]$
are to be constructed to be slightly restrictive. For instance, $\bA[g]$
and $\bB[g]$ may be two such tensors, of equal valence say, that become
proportional, $\bA[g] = X[g] \bB[g]$, for some subclass of metrics. Taking
derivatives and tensor products of $\bA[g]$ and $\bB[g]$, it might be
possible to impose some additional limited set of polynomial conditions
on the scalar $X[g]$ and its derivatives, but non-polynomial conditions
would certainly be impossible. That is a significant limitation while
trying to characterize all isometry classes of pp-wave spacetimes.
However, supposing that the proportionality factor $X[g]$ exists, and
using it in further non-polynomial conditions seems to be sufficient to
do the job. A natural question then: is this step in the
spirit of IDEAL characterizations?

The E in the acronym stands for Explicit, so with $X[g]$ being only
implicitly defined, the strict answer seems to be No. However, if $X[g]$
exists, it is defined uniquely. Moreover, as we shall prove below,
when $X[g]$ is defined, it still satisfies the Stewart-Walker lemma, thus
preserving the connection with linear gauge invariants that we described
above. Let us make the last statement more precise with the following
lemma, where a scalar $X[g]$ is just a special case of a tensor field
$\bX[g]$.

\begin{lem}[extended Stewart-Walker] \label{lem:st-ext}
Let $\bA$ and $\bB$ be tensors covariantly constructed
from the metric. Suppose the background metric $g_{ab}$ is chosen so
that $\bB\ne 0$ and the relation $\bA = \bX \otimes
\bB$ holds for some tensor $\bX$. Then there exists a
differential operator $\dot{\bX}[h]$ acting on metric
perturbations $h_{ab}$ such that $\Lie_v \bX =
\dot{\bX}[\Lie_v g]$ for any vector field $v^a$.
\end{lem}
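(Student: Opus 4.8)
The plan is to leverage the ordinary Stewart-Walker lemma, which already applies to the honest concomitants $\bA$ and $\bB$, and to propagate it through the proportionality relation $\bA = \bX \otimes \bB$. First I would note that since $\bB \neq 0$ on a neighborhood of the chosen point, the tensor $\bX$ is uniquely determined there by $\bA$ and $\bB$: contracting $\bA = \bX \otimes \bB$ against a suitable fixed contravariant tensor built from $g$ and $\bB$ itself (for instance the one dual to $\bB$ with respect to the metric, normalized by $|\bB|^2 = \bB_{\cdots}\bB^{\cdots} \neq 0$) isolates $\bX$ as an explicit rational expression in $g$, $\bA$, $\bB$. Call this expression $\bX = F(g,\bA,\bB)$; it is a smooth function of its arguments wherever $|\bB|^2 \neq 0$.

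Next I would apply the ordinary Stewart-Walker lemma to $\bA$ and to $\bB$ separately: there exist differential operators $\dot{\bA}[h]$ and $\dot{\bB}[h]$, linear in $h_{ab}$ and its derivatives, with $\Lie_v \bA = \dot{\bA}[\Lie_v g]$ and $\Lie_v \bB = \dot{\bB}[\Lie_v g]$ for every vector field $v^a$. Because the Lie derivative along $v$ is precisely the infinitesimal generator of the pullback by the flow of $v$, and because $F$ is a natural (diffeomorphism-covariant) algebraic construction — it is built only from $g$, $\bA$, $\bB$ by tensor products, contractions, and division by the invariant $|\bB|^2$ — the identity $\bX = F(g,\bA,\bB)$ is preserved under the flow. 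Differentiating this flow-invariance at time zero gives the chain rule
\[
	\Lie_v \bX = D_g F \cdot \Lie_v g + D_{\bA} F \cdot \Lie_v \bA + D_{\bB} F \cdot \Lie_v \bB,
\]
where $D_g F, D_{\bA} F, D_{\bB} F$ are the partial (Fréchet) derivatives of $F$, evaluated at the background data and applied to the respective variations. Substituting the Stewart-Walker identities for $\Lie_v \bA$ and $\Lie_v \bB$, and writing $\dot{\bX}[h] := D_g F \cdot h + D_{\bA} F \cdot \dot{\bA}[h] + D_{\bB} F \cdot \dot{\bB}[h]$, yields a differential operator in $h$ alone with $\Lie_v \bX = \dot{\bX}[\Lie_v g]$, as required. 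The operator $\dot{\bX}$ so constructed is manifestly the first variation of $F(g,\bA,\bB)$ under $g \mapsto g + h$, i.e. $\bX[g+h] = \bX[g] + \dot{\bX}[h] + O(h^2)$ on the locus where $|\bB|^2 \neq 0$.

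I expect the main subtlety to be purely the bookkeeping of \emph{why} $F$ is flow-covariant, rather than any hard analysis: one must be careful that the contraction used to solve for $\bX$ is itself built covariantly from the metric (so that it transforms correctly under diffeomorphisms), and that the denominator $|\bB|^2$ is nonvanishing on the relevant neighborhood so that $F$ and its derivatives are well-defined there. A secondary point worth stating explicitly is that, although $F$ is defined on the open set where $\bB \neq 0$ (a strictly larger condition than the original proportionality $\bA = \bX \otimes \bB$), the value of $\bX$ it returns agrees with the given $\bX$ exactly on the subclass of metrics for which the proportionality holds, which is all we need; the extension of $F$ off that subclass is irrelevant because we only ever evaluate $\dot{\bX}[\Lie_v g]$ for $g$ in the subclass, and the flow of a Killing-type perturbation keeps us there. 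No genuine obstacle arises because every operation involved — solving a tensor equation by contraction, dividing by a nonzero invariant, differentiating a smooth algebraic map — commutes with pullback by diffeomorphisms, which is the entire content of the Stewart-Walker mechanism.
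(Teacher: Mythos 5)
Your construction of the explicit solution operator $F$ fails at its very first step, and it fails precisely in the situation this lemma is designed for. To isolate $\bX$ you contract $\bA = \bX \otimes \bB$ with the metric dual of $\bB$ and divide by $|\bB|^2 = \bB_{\cdots}\bB^{\cdots}$, asserting this is nonzero. But the hypothesis is only $\bB \ne 0$ as a tensor, and in Lorentzian signature that does not imply $|\bB|^2 \ne 0$. Worse, in the paper's application the tensors playing the role of $\bB$ are null: vacuum pp-waves are VSI spacetimes, so all scalar polynomial curvature invariants vanish, and indeed $\bT = \beta\,\ell_a\ell_b\ell_c\ell_d$ and $\bD \propto \ell_a\ell_c$ with $\ell^a\ell_a = 0$, so every full contraction of $\bB$ with itself (or with its complex conjugate, or with any metric dual of itself) vanishes identically. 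Hence your $F$ divides by zero in every case where the lemma is actually invoked, and no covariant normalization by a scalar can rescue it, since all such scalars vanish for these geometries. Your subsequent chain-rule/flow-covariance argument would be unobjectionable if such an $F$ existed; the gap is entirely in the claim that $\bX$ admits an explicit rational expression in $g$, $\bA$, $\bB$ under the sole assumption $\bB\ne 0$.

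The paper's proof avoids any metric contraction. It encodes the proportionality as the purely algebraic condition $\bA \wedge \bB = 0$, with $\wedge$ the antisymmetrized product \eqref{def-wedge-AB}; the pointwise linear-algebra fact that $C \wedge \bB = 0$ together with $\bB \ne 0$ forces $C = Y \otimes \bB$ needs only that \emph{some component} of $\bB$ is nonzero, not that any scalar built from $\bB$ is. Linearizing $\bA\wedge\bB=0$ and using the background relation gives $\bigl(\dot{\bA} - \bX\otimes\dot{\bB}\bigr)\wedge\bB = 0$, so $\dot{\bX}[h]$ is defined only implicitly by $\dot{\bA}[h] - \bX\otimes\dot{\bB}[h] = \dot{\bX}[h]\otimes\bB$; then the ordinary Stewart--Walker lemma for $\bA$ and $\bB$, combined with $\Lie_v(\bA - \bX\otimes\bB) = 0$ and $\bB\ne 0$, yields $\Lie_v\bX = \dot{\bX}[\Lie_v g]$. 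If you wanted to salvage your route, you would have to replace division by $|\bB|^2$ with pointwise division by a nonvanishing component of $\bB$, at which point you have essentially reconstructed the paper's wedge argument and lost the explicitness your approach was meant to provide.
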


\begin{proof}
Let us denote the linearization by a dot, as in $\bA_{g+h} =
\bA_g + \dot{\bA}_g[h] + O(h^2)$. In general,
$\bA$, $\bB$ and $\bX$ may all have different
numbers of indices (including no indices), say $\bA_{c_1\cdots
c_k : a_1\cdots a_j}$ and $\bB_{b_1\cdots b_j}$, with
$\bX_{c_1\cdots c_k}$, and where $:$ is used only to visually
separate groups of indices. Let us define the operation
\begin{equation}\label{def-wedge-AB}
	(\bA \wedge \bB)_{c_1\cdots c_k : a_1\cdots a_j : b_1\cdots b_j}
		= \bA_{c_1\cdots c_k : a_1\cdots a_j} \bB_{b_1\cdots b_j}
		- \bA_{c_1\cdots c_k : b_1\cdots b_j} \bB_{a_1\cdots a_j} .
\end{equation}
By elementary linear algebra, the relation $\bA = \bX
\otimes \bB$ between $\bA$ and $\bB$ holds iff
$\bA \wedge \bB$ = 0. It is also convenient to note the
antisymmetry identity $(\bX\otimes \bB) \wedge \mathbf{C}
= -(\bX\otimes \mathbf{C}) \wedge \bB$. Linearizing the
identity $\bA \wedge \bB = 0$ about the chosen background
metric $g_{ab}$, we find (dropping all $g$ subscripts)
\begin{equation}
	\dot{\bA} \wedge \bB + \bA \wedge \dot{\bB}
	= 0 .
\end{equation}
Using antisymmetry and bilinearity of $\wedge$, as well as the background
relation $\bA = \bX \otimes \bB$, we arrive at
\begin{equation}
	(\dot{\bA} - \bX \otimes \dot{\bB}) \wedge \bB = 0 ,
\end{equation}
which implies that there must exist an operator $\dot{\bX}[h]$
such that $\dot{\bA} - \bX \otimes \dot{\bB} =
\dot{\bX} \otimes \bB$. The notation $\dot{\bX}$ is
just formal, since $\bX$ itself is not defined for arbitrary
background metrics.

Next, we fix a vector field $v^a$ and recall the Stewart-Walker lemma,
$\Lie_v\bA = \dot{\bA}[\Lie_v g]$ and $\Lie_v\bB =
\dot{\bB}[\Lie_v g]$. Taking the Lie derivative of the relation
between $\bA$ and $\bB$ obtains the relation
\begin{multline}
	0 = \Lie_v (\bA - \bX \otimes \bB)
	\\
	= (\dot{\bA}[\Lie_v g] - \bX \otimes \dot{\bB}[\Lie_v g])
		- (\Lie_v \bX) \otimes \bB
	= (\dot{\bX}[\Lie_v g] - \Lie_v \bX) \otimes \bB ,
\end{multline}
which implies that $\Lie_v \bX = \dot{\bX}[\Lie_v g]$,
since $\bB \ne 0$.
\end{proof}

When $\bX[g]$ is defined implicitly as in Lemma~\ref{lem:st-ext}, we
will nevertheless refer to the corresponding $\dot{\bX}_g[h]$ as the
\emph{linearization} of $\bX[g]$ about the metric $g_{ab}$. Thus, since
the connection to linear gauge invariants through linearization is
important, for the purposes of this work a characterization of a spacetime geometry will still be considered to be in the spirit of
IDEAL characterization if it is IDEAL {\em except} for explicitly using
an implicitly defined tensor like $\bX[g]$. 

In light of the extended Stewart-Walker lemma, we will refer to a tensors defined implicitly from two or more IDEAL tensors as

\begin{defn}[conditional invariants]
Given two tensors ${\bf A}$ and ${\bf B}$ which are covariantly constructed from the metric, if there is a tensor of lower rank, ${\bf X}$, such that ${\bf A} = {\bf X} \otimes {\bf B}$, then ${\bf X}$ is a \emph{conditional invariant (tensor)}. If the rank is zero, then $X$ is a \emph{conditional invariant (scalar)}. 
\end{defn}

\noindent Conditional invariants, tensors ${\bf X}$ or scalars $X$, will be determined from tensors constructed from the curvature tensor and its covariant derivatives. 
%However, such a tensor or scalar is not a curvature tensor or curvature invariant in the sense of the Cartan-Karlhede algorithm~\cite[\S9.2]{kramer} or in the construction of scalar polynomial curvature invariants. 

\section{Vacuum pp-waves} \label{sec:pp-wave}

Plane-fronted gravitational waves with parallel rays,
in short \emph{pp-waves}, can be defined as follows.
\begin{defn} \label{def:pp-wave}
A {\em pp-wave} is a spacetime that admits a covariantly
constant null vector field, $\ell_a$:
\begin{equation}
	\nabla_a \ell_b = 0, \quad \ell^a \ell_a = 0 .
\end{equation}
\end{defn}
These spacetimes made their first appearance in the work of
Brinkmann~\cite{brinkmann-pp}. Subsequently, they have been widely
studied~\cite[\S 24.5]{kramer}. For the rest of this section, we
restrict to $4$ spacetime dimensions. In $4$~dimensions, an Einstein
vacuum pp-wave has the line element~\cite[\S 24.5]{kramer}
\begin{equation} \label{eq:pp-wave-ds2}
	\d{s}^2 = 2\d\zeta\d\bar{\zeta} - 2\d{u}\d{v} - 2H \d{u}^2 ,
	\quad
	H \equiv H(u,\zeta,\bar{\zeta}) = f(\zeta,u) + \bar{f}(\bar{\zeta},u) ,
\end{equation}
where $u,v$ are two real coordinates and $\zeta = x + i y$ is a
convenient complex coordinate, while $H(u,\zeta,\bar{\zeta})$ is real
harmonic in $\zeta$ and jointly smooth, justifying its presentation in
terms of the complex holomorphic in $\zeta$ and jointly smooth
$f(\zeta,u)$, $f_{,\bar{\zeta}} = 0$. The $1$-forms
\begin{equation}
	\bell = \d{u} , \quad
	\bn = \d{v} + H \d{u} , \quad
	\bem = \d\zeta , \quad
	\bar{\bem} = \d\bar{\zeta}
\end{equation}
constitute a complex null coframe dual to the compex null frame of
vector fields

\begin{equation}
\ell = -\del_v, \quad
n = - \del_u + H \del_v, \quad
m = \del_{\bar{\zeta}}, \quad
\bar{m} = \del_{\zeta}, 
\end{equation}

\noindent normalized as $\ell^a n_a = n^a \ell_a = -1$, $m^a \bar{m}_a = \bar{m}^a
m_a = 1$, and all other contractions vanishing.

After fixing the form~\eqref{eq:pp-wave-ds2} of the metric, the residual
gauge freedom (coming from coordinate transformations or
underdeterminedness of the parameters) consists of the continuous
transformations
\begin{subequations} \label{eq:pp-wave-residual-gauge}
\begin{gather}
%\begin{gathered}
	\notag
	\zeta' = e^{i\beta}(\zeta + h(u)) , \quad
	u' = (u+u_0)/a , \quad
	v' = a\, [v + \del_u (h(u) \bar{\zeta} + \bar{h}(u) \zeta) + g(u)] , \\
	f'(\zeta',u') = a^2 \, \left[f(\zeta,u) - \del_u^2\bar{h}(u) \zeta
		+ \tfrac{1}{2}\left|\del_u \bar{h}(u)\right|^2 - \tfrac{1}{2}\del_u
		g(u) + ic(u)\right] ,
	\label{eq:pp-wave-ds2-redef}
%\end{gathered}
\end{gather}
where $\beta$, $a$, $u_0$ are real constants, $c(u)$ and $g(u)$ are
smooth real functions, and $h(u)$ is a smooth complex function, and the
discrete transformation
\begin{equation} \label{eq:pp-wave-ds2-redef-discrete}
	\zeta' = \bar{\zeta} , \quad
	v' = v , \quad
	u' = u , \quad
	f' = \bar{f} ,
\end{equation}
\end{subequations}
where $\bar{f}(\zeta,u) = \overline{f(\bar{\zeta},u)}$. The free choice
of $h(u)$ or $c(u)$ and $g(u)$, respectively, allows us to adjust
both $\del_u^2\bar{h}(u)$ and $\frac{1}{2} \left|\del_u
\bar{h}(u)\right|^2 - \frac{1}{2} \del_u g(u) + ic(u)$ to be
arbitrary smooth complex functions of $u$.

The above line element~\eqref{eq:pp-wave-ds2} is consistent with
$\ell^a$ being covariantly constant, hence in particular a Killing
vector. Thus, the spacetime admits at least a $G_1$ group of isometries
\cite{kramer}, which is the maximal isometry group for generic $f$.

\subsection{Curvature invariants} \label{sec:curv-inv}

Denote by $R_{ab}$ the Ricci tensor, by $C_{abcd}$ the Weyl tensor, and
by $\eta_{abcd}$ the Levi-Civita $4$-form. In addition, let us define
the \emph{complex self-dual Weyl tensor}
\begin{equation} \label{eq:sd-weyl}
	\bC^\dagger = \bC - i {}^*\bC , \quad \text{where} \quad
	{}^* C_{abcd} = \frac{1}{2} \eta_{ab ef} C^{ef}{}_{cd} ,
\end{equation}
and the \emph{Bel-Robinson tensor} \cite{Bel58,Bel62}
\begin{equation}
	T_{abcd}
	= C_{aecf} C_{b}{}^{e}{}_{d}{}^{f}
		+ {^*}C_{aecf} {^*}C_{b}{}^{e}{}_{d}{}^{f}
	= C^\dagger_{aecf} \bar{C}^\dagger{}_{b}{}^{e}{}_{d}{}^{f} ,
\end{equation}
which is fully symmetric $T_{abcd} = T_{(abcd)}$. The spacetime is
called \emph{complex recurrent} when the complex self-dual Weyl
tensor satisfies
\begin{equation}
	\nabla \bC^\dagger = \bK \otimes \bC^\dagger
\end{equation}
for some (unique) $1$-form $\bK$. A null vector field $\ell^a$ is itself
called \emph{recurrent} when
\begin{equation} \label{def:l-recur}
	\nabla \bell = \bp \otimes \bell
	\quad \text{or equivalently} \quad
	\ell_{[a} \ell_{b];c} = 0 ,
\end{equation}
for some (unique) $1$-form $\bp$.

The complex self-dual Weyl tensor of vacuum pp-waves takes the form
%\mnotex{Does $(\bell \wedge \bem)_{ab} = \ell_{[a} m_{b]}$? Comparison
%with Maple suggests so.}
\begin{equation} \label{eq:bC-expr}
	\bC^\dagger
		= 8 f_{,\zeta\zeta} (\d{u}\wedge\d\zeta) \otimes (\d{u}\wedge\d\zeta)
		= 8 f_{,\zeta\zeta} (\bell\wedge\bem) \otimes (\bell\wedge\bem) .
\end{equation}
The special case $f_{,\zeta\zeta} = 0$ corresponds to Minkowski space. We exclude this case from further consideration, meaning that we assume from
now on assume that $f$ is not a linear polynomial in $\zeta$. Hence $f_{,\zeta\zeta}\neq 0$ almost everywhere, and henceforth we implicitly consider the spacetime points where this inequality holds. At those points the Weyl tensor is of Petrov type~\textbf{N}, $\bell$ spanning the unique quadruple \emph{principal null direction (PND)} \cite[$\S$ 4.3]{kramer}.
The Bel-Robinson tensor takes the form~\cite{bonilla1997}
\begin{equation} \label{eq:T-decomp-null}
	T_{abcd} = \beta \ell_a \ell_b \ell_c \ell_d , \quad
	\beta = 4 \left|f_{,\zeta\zeta}\right|^2 .
\end{equation}
%The special case $f_{,\zeta\zeta} = 0$ corresponds to Minkowski space.
%We exclude this case from further consideration, meaning that we assume from now on assume that $f$ is not a linear polynomial in $\zeta$. 
Direct
calculation shows that $\bC^\dagger$ is complex recurrent, with
recurrence $1$-form
\begin{equation} \label{eq:bK-expr}
	\bK = \d(\ln f_{,\zeta\zeta})
	= \frac{(f_{,\zeta\zeta u} \bell + f_{,\zeta\zeta\zeta} \bem)}{f_{,\zeta\zeta}} ,
\end{equation}
which clearly satisfies $K^a \ell_a = 0$. For reference, it is useful to
compute
\begin{multline}
	\nabla {\bK}
		= \left((\ln f_{,\zeta\zeta})_{,uu}
				- \bar{f}_{,\bar{\zeta}} (\ln f_{,\zeta\zeta})_{,\zeta}\right)
					\bell \otimes \bell \\
			+ (\ln f_{,\zeta\zeta})_{,\zeta u} (\bell \otimes \bem + \bem \otimes \bell)
			+ (\ln f_{,\zeta\zeta})_{,\zeta\zeta} \bem \otimes \bem ,
\end{multline}
\begin{multline}
	\nabla\nabla {\bK}
		= \left((\ln f_{,\zeta\zeta})_{,uuu}
				- 3\bar{f}_{,\bar{\zeta}} (\ln f_{,\zeta\zeta})_{,\zeta u}\right) \bell \bell \bell
			+ (\ln f_{,\zeta\zeta})_{,\zeta uu} (\bem \bell \bell
					+ \bell \bell \bem + \bell \bem \bell)
	\\
			+ (\ln f_{,\zeta\zeta})_{,\zeta\zeta u} (\bem \bell \bem
					+ \bem \bem \bell + \bell \bem \bem)
			+ (\ln f_{,\zeta\zeta\zeta})_{,\zeta\zeta} \bem \bem \bem 
	\\
			- \bar{f}_{,\bar{\zeta}} (\ln f_{,\zeta\zeta})_{,\zeta\zeta}
				(\bem\bell\bell + \bell\bell\bem + \bell\bem\bell)
			- (\ln f_{,\zeta\zeta})_{,\zeta}
				(\bar{f}_{,\bar{\zeta} u} \bell\bell\bell
					+ \bar{f}_{,\bar{\zeta}\bar{\zeta}} \overline{\bem}\bell\bell)
\end{multline}

The vector $\mathbf{K}$ is a conditional invariant tensor and is the main geometric quantity that we will use
to build further invariants that will be used in later sections. We introduce these derived quantities here.  To do this, it will be necessary at times to contract the vector $\bK$ with various tensors and this will be denoted as
\begin{equation}
	\bK \cdot {\bf A} = A_{a_1...a_k} K^{a_k} \label{eq:dotproduct} .
\end{equation}
In general the conditional invariants will be
rational functions, so they will be well-defined only when their
denominators are non-vanishing.

The simplest invariant is just
\begin{equation} \label{eq:K2-def}
	|\bK|^2 := \overline{\bK} \cdot {\bK}
		= \left|(\ln f_{,\zeta\zeta})_{,\zeta}\right|^2 .
\end{equation}
From \eqref{eq:T-decomp-null}, \eqref{eq:bK-expr} and \eqref{def-wedge-AB} we have that
\begin{equation}\label{K-cond-G5G6}
|\bK|^2=0 \quad \Leftrightarrow \quad 
f_{,\zeta\zeta\zeta}=0
\quad \Leftrightarrow\quad \bT \wedge \bK = 0, %T_{abc[d}K_{e]}=0
\end{equation}
which implies $\bT \wedge (\nabla \bK)=0$.
In this case we define the following conditional invariants:
\begin{align}
\label{eq:Ia2-def}
	\left(\Ia^{(2)}\right)^4 \bT &:= 16 \bK^{\otimes 4} , &
	\Ia^{(2)} &= \frac{2^{\frac{1}{2}}}{|f_{,\zeta\zeta}|^{\frac{1}{2}}}
		(\ln f_{,\zeta\zeta})_{,u} , \\
\label{eq:Ib2-def}
	\left(\Ib^{(2)}\right)^2 \bT &:= 16 (\nabla\bK)^{\otimes 2} , &
	\Ib^{(2)} &= \frac{2}{|f_{,\zeta\zeta}|}
		(\ln f_{,\zeta\zeta})_{,uu} .
\end{align}
The meaning of the superscripts will become clear later. Strictly
speaking, these formulas only define the invariant scalars $\Ia^{(2)}$
and $\Ib^{(2)}$ as raised to the powers $\left(\Ia^{(2)}\right)^4$ and
$\left(\Ib^{(2)}\right)^2$. We must choose a branch of the square and
quartic roots to get rid of those powers. In principle, a branch choice
could be made separately at each point of $M$. However, once a branch
choice is made at a point $x\in M$, requiring continuity is sufficient
to ensures a unique branch choice on a sufficiently small contractible
neighborhood of $x$, provided of course that the invariant does not
vanish at $x$ and on this neighborhood. For later convenience in the
proof of Theorem~\ref{thm:ideal-pp-wave-isom}, we choose the square root
branch at $x$ so that $\Im \Ib^{(2)} \ge 0$, with a cut on the positive
real axis and $\Ib^{(2)} < 0$ on the cut (namely $(+1)^{1/2} = -1$); we
choose the quartic root branch so that $\Im \Ia^{(2)} \ge 0$ and $\Re
\Ia^{(2)} \le 0$, with a cut on the positive real axis and $\Ia^{(2)} <
0$ on the cut (namely $(+1)^{1/4} = -1$).

The next definitions for conditional invariant tensors and scalars will require $|\bK| \ne 0$:
\begin{align}
\label{eq:D-def}
	D_{ac}
	&:= \frac{\overline{K}^b \overline{K}^d C^\dagger_{abcd}}
			{|\bK|^4}
	= 2 \frac{f_{,\zeta\zeta}^3}{f_{,\zeta\zeta\zeta}^2} \ell_a \ell_c ,
	\\
\label{eq:J-def}
	J \bT &:= |\bK|^4 (\mathbf{D})^{\otimes 2} , \quad
	J = \frac{f_{,\zeta\zeta}^3}{\bar{f}_{,\bar{\zeta}\bar{\zeta}}^3}
		\frac{\bar{f}_{,\bar{\zeta}\bar{\zeta}\bar{\zeta}}^2}{f_{,\zeta\zeta\zeta}^2} ,
	\\
\label{eq:Ic-def}
	\Ic &:= \frac{\overline{\mathbf{K}}\cdot (\overline{\bf K}\cdot \nabla {\bf K})}{|\bK|^4}
	= -\left(\frac{1}{(\ln f_{,\zeta\zeta})_{,\zeta}}\right)_{,\zeta} ,
	\\
\label{eq:fL-def}
	\fL &:= \frac{\overline{\bK}\cdot \nabla\bK}{|\bK|^2} - \Ic \bK
		= \left(\frac{(\ln f_{,\zeta\zeta})_{,u}}{(\ln f_{,\zeta\zeta})_{,\zeta}}\right)_{,\zeta} \bell ,
	\\
\label{eq:Ia0-def}
	\left(\Ia^{(0)}\right)^4 \bT &:= 4 |\bK|^4 \fL^{\otimes 4} , \quad
		\Ia^{(0)} = \left|\frac{f_{,\zeta\zeta\zeta}^2}{f_{,\zeta\zeta}^3}\right|^{\frac{1}{4}}
				\left(\frac{(\ln f_{,\zeta\zeta})_{,u}}{(\ln f_{,\zeta\zeta})_{,\zeta}}\right)_{,\zeta} ,
	\\
\label{eq:Ib0-def}
	\left(\Ib^{(0)}\right)^2 \bT &:= |\bK|^4 \left(\nabla\fL
		- \frac{\overline{\bK}\cdot\nabla\fL}{|\bK|^2} \bK\right)^{\otimes 2} ,
	\notag \\
		\Ib^{(0)} &= \frac{1}{2} \left|\frac{f_{,\zeta\zeta\zeta}^2}{f_{,\zeta\zeta}^3}\right|^{\frac{1}{2}} \left[
				\left(\frac{(\ln f_{,\zeta\zeta})_{,u}}{(\ln f_{,\zeta\zeta})_{,\zeta}}\right)_{,\zeta u}
				- \left(\frac{(\ln f_{,\zeta\zeta})_{,u}}{(\ln f_{,\zeta\zeta})_{,\zeta}}\right)
				\left(\frac{(\ln f_{,\zeta\zeta})_{,u}}{(\ln f_{,\zeta\zeta})_{,\zeta}}\right)_{,\zeta\zeta}
			\right] ,
	\\
\label{eq:N-def}
	\overline{\bN}
		&= \nabla \frac{\overline{\bK}\cdot\nabla \bK}
				{|\bK|^2}
			-\left(\frac{\overline{\bK}\cdot\nabla \bK}
				{|\bK|^2}\right)^{\otimes 2}
			+\frac{|\bK|^2}{2} \overline{\bD} .
\end{align}
The branch choices for $\Ia^{(0)}$ and $\Ib^{(0)}$ are the same as for
$\Ia^{(2)}$ in~\eqref{eq:Ia2-def} and $\Ib^{(2)}$ in~\eqref{eq:Ib2-def}.
When $\bN \wedge \bD = 0$, we can also define the scalar
\begin{equation} \label{eq:scalN-def}
	N \bD := \bN .
\end{equation}
We will need the coordinate expression for $N$ only in some special pp-wave subcases. We note that $\fL$ is proportional to the covariantly constant null vector $\bell$ and so $\fL$ will be a recurrent null vector field. 

The next batch of definitions will require $|\bK|\ne 0$ and also $k(\Ic
- \tfrac{1}{2}) - \Ic \ne 0$ for some fixed constant $k\in \mathbb{C}$.
\begin{align}
\label{eq:Lk-def}
	\bL^{(k)} &:= \frac{\fL}{k(\Ic - \tfrac{1}{2}) - \Ic}
		= -2\Xi^{(k)} \bell ,
	\\
\label{eq:Xik-def}
	\Xi^{(k)} &:= \frac{1}{2} \frac{\left(\frac{(\ln f_{,\zeta\zeta})_{,u}}{(\ln f_{,\zeta\zeta})_{,\zeta}}\right)_{,\zeta}}
			{k\left[\left(\frac{1}{(\ln f_{,\zeta\zeta})_{,\zeta}}\right)_{,\zeta} + \frac{1}{2}\right]
				- \left(\frac{1}{(\ln f_{,\zeta\zeta})_{,\zeta}}\right)_{,\zeta}} ,
	\\
\label{eq:Idk-def}
	\Id^{(k)} \bD &:= -\frac{1}{2}\left(\bL^{(k)}\right)^{\otimes 2} , \quad
		\Id^{(k)} = -\left(\Xi^{(k)}\right)^2
			\frac{f_{,\zeta\zeta\zeta}^2}{f_{,\zeta\zeta}^3} ,
	\\
\label{eq:Iek-def}
	-\frac{1}{2}\overline{\Ie^{(k)}} |\bK|^2 \overline{\bD}
		&:= \nabla\bK - (\bK\fL + \fL\bK) - \Ic\bK\bK
	\notag\\ &\quad {}
			- (k-1) \left[k(\Ic - 1) - (\Ic+\tfrac{1}{2})\right] \bL^{(k)}\bL^{(k)} ,
	\\
\label{eq:Mk-def}
	\overline{\bM^{(k)}} &:= \left(\nabla - \frac{\bK}{|\bK|^2} \overline{\bK}\cdot\nabla
			- \frac{\overline{\bK}}{|\bK|^2} \bK\cdot\nabla\right)
				(\overline{\Ie^{(k)}} |\bK|^2 \overline{\bD})
		\notag \\ &\quad {}
			- \bL^{(k)} \left[(k-1)\Ic + 1\right] \overline{\Ie^{(k)}} |\bK|^2 \overline{\bD}
		\notag \\ &\quad {}
			- \frac{(k+1)}{2} \bL^{(k)} \left[-2|\bK|^2 \overline{\bD}
					+ k(k-1) \bL^{(k)}\bL^{(k)}\right] .
\end{align}
For any pp-wave, the right hand sides of the last three formulas will be
proportional to a product of $\bell$'s. All of the above formulas are
well-defined when $k=0$. On the other hand, most of them diverge or go
to zero as $k\to\oo$. It will be useful for us to define certain
rescaled limits of these formulas when $\Ic\neq \frac12$:
\begin{align}
\label{eq:Loo-def}
	\bL^{(\oo)} &:= \lim_{k\to\oo} k\bL^{(k)}
		= \frac{\fL}{\Ic - \frac{1}{2}} = -2\Xi^{(\oo)} \bell ,
	\\
\label{eq:Xioo-def}
	\Xi^{(\oo)} &:= \lim_{k\to\oo} k\Xi^{(k)}
		= \frac{1}{2} \frac{\left(\frac{(\ln f_{,\zeta\zeta})_{,u}}{(\ln f_{,\zeta\zeta})_{,\zeta}}\right)_{,\zeta}}
			{\left(\frac{1}{(\ln f_{,\zeta\zeta})_{,\zeta}}\right)_{,\zeta} + \frac{1}{2}} ,
	\\
\label{eq:Idoo-def}
	\Id^{(\oo)} \bD &:= -\frac12\left(\bL^{(\oo)}\right)^{\otimes 2} , \quad
		\Id^{(\oo)} = \lim_{k\to\oo} k^2 \Id^{(k)}
			= -\left(\Xi^{(\oo)}\right)^2 \frac{f_{,\zeta\zeta\zeta}^2}{f_{,\zeta\zeta}^3} ,
	\\
\label{eq:Ieoo-def}
	-\frac{1}{2}\overline{\Ie^{(\oo)}} |\bK|^2 \overline{\bD}
		&:= \lim_{k\to\oo} -\frac{1}{2} \overline{\Ie^{(k)}} |\bK|^2 \overline{\bD}
	\notag\\
		&= \nabla\bK - (\bK\fL + \fL\bK) - \Ic\bK\bK
			- (\Ic - 1) \bL^{(\oo)}\bL^{(\oo)} ,
	\\
\label{eq:Moo-def}
	\overline{\bM^{(\oo)}}
		&:= \lim_{\Re k=0, k\to\oo}\overline{\bM^{(k)}}
		\notag \\
		&:= \left(\nabla - \frac{\bK}{|\bK|^2} \overline{\bK}\cdot\nabla
			- \frac{\overline{\bK}}{|\bK|^2} \bK\cdot\nabla\right)
				(\overline{\Ie^{(\oo)}} |\bK|^2 \overline{\bD})
		\notag \\ &\quad {}
			- \Ic \bL^{(\oo)} \overline{\Ie^{(\oo)}} |\bK|^2 \overline{\bD}
			- \frac{1}{2} \bL^{(\oo)} \left[-2|\bK|^2 \overline{\bD}
					+ \bL^{(\oo)}\bL^{(\oo)}\right] .
\end{align}
In a similar way to $\bL$, the definitions of $\bL^{(k)}$ and $\bL^{(\infty)}$ are proportional to the covariantly constant null vector field $\bell$ and so they are both recurrent null vector fields, when they are defined. 
\subsection{Specialization to subclasses}

The exhaustive classification of the pp-wave classes listed in
Proposition~\ref{prop:pp-wave-classes} was first carried out in the classic work~\cite[\S 2--5.6]{EhlersKundt}, where Ehlers \& Kundt gave
a complete classification of vacuum pp-waves by their Lie algebras of
\emph{infinitesimal local isometries}%
	\footnote{At each point $x\in M$, the \emph{infinitesimal local
	isometries} of $(M,g)$ are the Killing vector fields that can be
	defined on at least some neighborhood of $x$. They obviously have some
	common domain of definition on which they have the structure of a Lie
	algebra, via the Lie bracket of vector fields}. %
For each class, they gave a representative
functional form of $f(\zeta,u)$ depending on constant or functional
parameters, which we summarize in the following
\begin{prop}[pp-wave classes] \label{prop:pp-wave-classes}
Let $(M,g)$ be a non-flat 4-dimensional vacuum pp-wave
(Definition~\ref{def:pp-wave}), where $M$ is connected. Then, assuming the Lie algebra of infinitesimal local isometries does not change at any point of $M$, the line element $g$ can be locally put into the
form~\eqref{eq:pp-wave-ds2}, with the function $f(\zeta,u)$ belonging to
one of the following canonical forms
\begin{equation} \label{eq:f-cases}
	f(\zeta,u) = \begin{cases}
		4\alpha u^{2i\kappa-2} \zeta^2 & G_{6a} \\
		e^{2i\lambda u} \zeta^2 & G_{6b} \\
		A(u) \zeta^2 & G_{5} 
			%\quad \text{(when not in $G_{6a}$ or $G_{6b}$)}
			\\
		4\alpha e^{i\gamma} u^{-2} \ln\zeta & G_{3a} \\
		e^{i\gamma} \ln\zeta & G_{3b} \\
		e^{2\lambda\zeta} & G_{3c} \\
		e^{i\gamma} \zeta^{2i\kappa} & G_{3d} \\
		u^{-2} \ff(\zeta u^{i\kappa}) & G_{2a}
			%\quad \text{(if not in $G_{6a}$)}
			\\
		\ff(\zeta e^{i\lambda u}) & G_{2b}
			%\quad \text{(if not in $G_{6b}$, $G_{3b}$, $G_{3c}$, or $G_{3d}$)}
			\\
		A(u) \ln\zeta & G_{2c}
			%\quad \text{(if not in $G_{3a}$ or $G_{3b}$)}
			\\
		f(\zeta,u) & G_{1}
			%\quad \text{(if not in previous cases)}
	\end{cases} ,
\end{equation}
where $\alpha$, $\gamma$, $\lambda$ and $\kappa$ are real constants, $A(u)$ is an
arbitrary smooth complex function and $f(\zeta)$ is an arbitrary complex
holomorphic function. Moreover, these constant and functional parameters
are the same in each adapted chart on $M$, up to form-preserving
transformations allowed by~\eqref{eq:pp-wave-ds2-redef}.
\end{prop}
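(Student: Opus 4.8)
The plan is to reduce the assertion to the purely local classification of the metric function $f(\zeta,u)$ modulo the residual gauge group~\eqref{eq:pp-wave-residual-gauge}, and then to upgrade the local normal forms to the global statement using the connectedness of $M$ together with the hypothesis that the algebra of infinitesimal local isometries is the same at every point. First I would recall the standard reduction (as in~\cite[\S 24.5]{kramer}): every non-flat $4$-dimensional vacuum pp-wave is, on a neighbourhood of a generic point, of the form~\eqref{eq:pp-wave-ds2}, and two such adapted charts covering overlapping regions are related by an element of the group~\eqref{eq:pp-wave-residual-gauge}. This turns the problem into classifying the holomorphic function $f(\zeta,u)$ (with $f_{,\zeta\zeta}\ne 0$) up to that group action.

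Next I would write out the Killing equations $\Lie_\xi g = 0$ for a general vector field $\xi$. Since $\ell_a = \nabla_a u$ is covariantly constant and $\ell^a$ spans the unique quadruple PND of the type~\textbf{N} Weyl tensor, any Killing vector must send $\ell^a$ to a constant multiple of itself; this confines $\xi$ to a restricted ansatz whose free data are a handful of real constants (encoding a boost, a spatial rotation, translations in $u$ and in $v$) together with a complex function $b(u)$, with $\partial_v$ always admissible. The single remaining Killing equation then reduces to a linear first-order PDE for the curvature scalar $f_{,\zeta\zeta}$, the transport vector field being built from the ansatz constants and $b(u)$; for a given $f$ its solution space has dimension $1$, $2$, $3$, $5$ or $6$, the values $4$ and those larger than $6$ being impossible away from the flat case. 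In each case the existence of the required Killing vectors is equivalent to a functional/differential equation satisfied by $f$, and integrating these equations and then using~\eqref{eq:pp-wave-residual-gauge} to normalize the surviving constants produces exactly the list~\eqref{eq:f-cases}. This is the Ehlers--Kundt classification~\cite[\S 2--5.6]{EhlersKundt}, whose derivation I would follow, taking care that the subcases $a$--$d$ (distinguished, e.g., by whether an exponent is real, purely imaginary or properly complex, or by which functional parameter survives) are exhaustive and pairwise inequivalent under the full gauge group, and in particular that the dimension gap (no $G_4$) really holds.

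For the ``moreover'' clause I would argue as follows. By hypothesis the abstract Lie algebra $\mathfrak{g}$ of infinitesimal local isometries, together with its linear isotropy action, is fixed throughout $M$; in each row of~\eqref{eq:f-cases} this datum determines, and is determined by, the constant parameters $\alpha,\gamma,\lambda,\kappa$ and, in the functional rows, the gauge-equivalence class of $A(u)$ or of $\ff(\cdot)$. Hence, covering $M$ by adapted charts and noting that on each nonempty overlap the two presentations differ by an element of~\eqref{eq:pp-wave-residual-gauge} --- which, being an isometry of the one fixed metric, cannot alter these invariantly attached parameters --- a single consistent choice of parameters propagates over all of $M$ by connectedness. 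This yields the stated form-preserving compatibility of the parameters across charts.

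The step I expect to be the main obstacle is the second one: organizing the Killing-equation analysis so that the case split is demonstrably complete --- establishing the absence of a $G_4$, correctly handling the degenerate limits relating neighbouring subcases (for instance $\kappa\to 0$ in $G_{6a}$, or the specializations of $A(u)$ that raise the symmetry dimension), and certifying that the eleven entries of~\eqref{eq:f-cases} are mutually inequivalent and jointly exhaustive under~\eqref{eq:pp-wave-residual-gauge}. By comparison the global patching argument is soft, once one observes that every parameter in~\eqref{eq:f-cases} is an invariant of the pair (metric, local isometry algebra).
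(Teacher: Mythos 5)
The paper gives no proof of Proposition~\ref{prop:pp-wave-classes} at all: it is imported as the classical Ehlers--Kundt classification \cite[\S 2--5.6]{EhlersKundt}, amended by the phase factor $e^{i\gamma}$ in the $G_{3a}$ and $G_{3b}$ rows that was pointed out in Lemmas 8.9.1 and 8.9.4 of \cite{mcnutt-thesis}. Your plan --- reduce to the Brinkmann form \eqref{eq:pp-wave-ds2} with residual gauge group \eqref{eq:pp-wave-residual-gauge}, constrain Killing vectors by the covariantly constant null direction (equivalently the unique quadruple PND), reduce the Killing equations to a linear transport equation on $f_{,\zeta\zeta}$, and integrate case by case --- is precisely the derivation underlying the cited result, so in spirit you take the same route the paper relies on, except that you propose to carry it out rather than cite it; your soft patching argument for the ``moreover'' clause (constancy of the local isometry algebra plus connectedness, with chart overlaps related by \eqref{eq:pp-wave-residual-gauge}) is unobjectionable.

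The weakness is that the decisive step is only announced, not performed: you yourself identify the case analysis (completeness of the split, absence of a $G_4$, pairwise inequivalence and exhaustiveness of the eleven rows) as the main obstacle, and the proposal contains no argument for it beyond ``follow Ehlers--Kundt''. That deferral is not harmless here, because following \cite{EhlersKundt} verbatim does \emph{not} yield the statement as written: the original analysis misses the constant phase in the logarithmic cases, whereas in \eqref{eq:f-cases} the factor $e^{i\gamma}$ in $G_{3a}$ and $G_{3b}$ is genuinely irremovable --- under \eqref{eq:pp-wave-residual-gauge} the coefficient of $\ln\zeta$ transforms only by $A(u)\mapsto a^2 A(a(u+u_0))$ (the $\beta$-rotation shifts $\ln\zeta$ by an additive constant that is absorbed into the $\zeta$-independent gauge terms), so only $\gamma\mapsto-\gamma$ via $\zeta\mapsto\bar\zeta$ is available, exactly as recorded in the paper's appendix \eqref{eq:A-residual1} and in the later isometry classes $G_{3a^\circ;\alpha,\gamma}$, $G_{3b^\circ;\gamma}$. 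So if you execute your plan you must redo the normalization stage with this in mind (as corrected in \cite{mcnutt-thesis}); otherwise the case analysis terminates at a strictly smaller, incorrect list of canonical forms.
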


We note that in the $G_{3a}$ and $G_{3b}$ cases, the functional form differs by a phase factor which was not found in the original analysis by Ehler's and Kundt but was pointed out in Lemmas 8.9.1 and 8.9.4 of \cite{mcnutt-thesis}. Our results will cover the $G_{\text{2--6}}$ subclasses, excluding the
generic component of the $G_1$ class. We will refer to these geometries
as
\begin{defn}[highly-symmetric pp-waves] \label{def:high-sym}
Any pp-wave geometry covered by the classification of
Proposition~\ref{prop:pp-wave-classes} that has at least two independent
Killing vectors at each point, or equivalently belonging to one of the
$G_{\text{2--6}}$ subclasses, will be called \emph{highly symmetric}.
\end{defn}

\noindent Furthermore, we will impose an additional condition on the geometries studied here, in order to restrict the form of the functional form on the entire manifold. 

\begin{defn}[regular pp-wave] \label{def:regular}
A vacuum pp-wave geometry
$(M,g)$ is
\emph{regular} at $x\in M$ when the Lie algebra of infinitesimal local
isometries at each point of an open neighborhood of $x$ is isomorphic to
the one at $x$, and it satisfies the following conditions, broken
down by the symmetry classes listed in~\eqref{eq:f-cases}. Inequalities
are interpreted as holding at $x$ and equalities on at least some open
neighborhood of $x$ (note that $\wedge$ and $\vee$ respectively stand
for logical \emph{and} and \emph{or}):
\begin{center}
\begin{tabular}{rl}
	$G_{5}$: & $(B=\ln A)$ \\
		& $A \ne 0 \wedge \dot{B} \ne 0
		\wedge
		\left(\ddot{B} \ne 0 \vee \ddot{B} = 0\right)$ \\
		& ${}\wedge
		\biggl(\del_u(\Re\dot{B}e^{-\frac{1}{2}\Re B}) \ne 0
			\vee \Bigl[\del_u(\Re\dot{B}e^{-\frac{1}{2}\Re B}) = 0$ \\
		& $\qquad{}\wedge
			\left(\del_u(\Im\dot{B}e^{-\frac{1}{2}\Re B}) \ne 0
				\vee \del_u(\Im\dot{B}e^{-\frac{1}{2}\Re B}) = 0\right)\Bigr]\biggr)$ \\
	$G_{2a}$: & $\ff''(z) \ne 0 \wedge
		\left(\del_z\left( \frac{\ff'''(z)^2}{\ff''(z)^3} \right) = 0 \vee
			\del_z\left( \frac{\ff'''(z)^2}{\ff''(z)^3} \right) \ne 0\right)$
	\\
	$G_{2b}$: & $\ff''(z) \ne 0$ \\
		& ${}\wedge \biggl(
			\left[\lambda\ne 0 \wedge
			\Bigl(\del_z\left(\frac{\ff'''(z)^2}{\ff''(z)^3}\right) \ne 0
				\vee \del_z\left(\frac{\ff'''(z)^2}{\ff''(z)^3}\right) = 0\Bigr)\right]$
			\\
		& $\qquad{}\vee
			\left[\lambda = 0 \wedge
			\Bigl(\del_z\left(\frac{\ff'''(z)\ff'(z)}{\ff''(z)^2}\right) \ne 0
				\vee \del_z\left(\frac{\ff'''(z)\ff'(z)}{\ff''(z)^2}\right) = 0\Bigr)\right]
			\biggr)$
	\\
	$G_{2c}$: & $(B=\ln A)$ \\
		& $A \ne 0 \wedge \dot{B} \ne 0
		\wedge
		\left(\ddot{B} \ne 0 \vee \ddot{B} = 0\right)$ \\
		& ${}\wedge
		\biggl(\del_u(\Re\dot{B}e^{-\frac{1}{2}\Re B}) \ne 0
			\vee \Bigl[\del_u(\Re\dot{B}e^{-\frac{1}{2}\Re B}) = 0$ \\
		& $\qquad{}\wedge
			\left(\Im\dot{B} \ne 0 \vee \Im\dot{B} = 0\right)\Bigr]\biggr)$
\end{tabular}
\end{center}
If not explicitly listed above, the corresponding class is not
restricted. A \emph{regular} vacuum pp-wave geometry $(M,g)$ is one that
is highly-symmetric (Definition~\ref{def:high-sym}) and \emph{regular}
at each point $x\in M$.
\end{defn}

The logic behind the inequalities in Definition~\ref{def:regular} is as
follows. Some of them prevent the point $x\in M$ from falling onto a
branch point of the implicitly defined invariants $\Ia^{(0,2)}$ and
$\Ib^{(0,2)}$, while others simplify the use of the implicit function
theorem to find a relation between certain invariants in the proof of
Theorem~\ref{thm:pp-wave-isom-classes}.

In addition to the invariant quantities listed above, we will require some specialized invariants that appear only appear in particular subclasses. In the following we will consider the $G_5$ subclass along with the $G_{2a}$, $G_{2b}$ and $G_{2c}$ subclasses and their respective generalizations arising from including a $\zeta$-linear term and a $\zeta$-independent term to the original function. 

In the $G_5$ subclass, $f(\zeta,u) = A(u) \zeta^2$, let $A(u) =
e^{B(u)}$. Then we evaluate
\begin{equation} \label{eq:Iab2-eval}
	(\Ia^{(2)})^4 = \dot{B}^4 e^{-2\Re B} , \quad
	(\Ib^{(2)})^2 = \ddot{B}^2 e^{-2\Re B} .
\end{equation}
Similarly, in the $G_{2c}$ subclass, $f(\zeta,u) = A(u) \ln\zeta$, let
$A(u) = e^{B(u)}$. Then we evaluate
\begin{equation} \label{eq:Iab0-eval}
	(\Ia^{(0)})^4 = \dot{B}^4 e^{-2\Re B} , \quad
	(\Ib^{(0)})^2 = \ddot{B}^2 e^{-2\Re B} , \quad
	\frac{(\Ia^{(0)})^2}{\Id^{(0)}} = e^{i\Im B} .
\end{equation}
In the slightly more general class $f(\zeta,u) = A(u) \ln\zeta +
A_1(u) \zeta + A_0(u)$, we evaluate
\begin{equation} \label{eq:N-eval-G2c}
	|\bK|^2 = \frac{4}{|\zeta|^2} , \quad
	\bD = -\frac{A}{2} \bell\bell, \quad
	\bN = \frac{A_1}{\bar{\zeta}} \bell\bell, \quad \text{and} \quad
	N = -\frac{2}{\bar{\zeta}} \frac{A_1}{A} .
\end{equation}

In the $G_{2a}$ class, $f(\zeta,u) = u^{-2} \ff(\zeta u^{i\kappa})$.
Letting $z = \zeta u^{i\kappa}$, we then evaluate
\begin{equation} \label{eq:Ide-eval-G2a}
	|\bK|^2 = \left|\frac{\ff'''(z)}{\ff''(z)}\right|^2 , \quad
	\Id^{(i\kappa)} = -\frac{\ff'''(z)^2}{\ff''(z)^3} , \quad
	\Ie^{(i\kappa)} = (\ff'(z) + i\kappa(i\kappa-1) \bar{z})
		\frac{\ff'''(z)}{\ff''(z)^2} .
\end{equation}

We also evaluate the following invariant quantities in the generalization of $G_{2a}$, $f(\zeta,u) = u^{-2} \ff(\zeta
u^{i\kappa}) + A_1(u) \zeta + A_0(u)$:
\begin{equation} \label{eq:M-eval-G2a}
	\bL^{(i\kappa)} = -\frac{2}{u} \bell , \quad
	\bM^{(i\kappa)} = 2 \frac{\overline{\ff'''(z)}}{\overline{\ff''(z)}}
		u^{-i\kappa-1} (u\dot{A}_1 - (i\kappa-2) A_1) \bell\bell\bell .
\end{equation}

In the $G_{2b}$ class, $f(\zeta,u) = \ff(\zeta e^{i\lambda u})$.
Letting $z = \zeta e^{i\lambda u}$, we then have the following coordinate expressions:
\begin{equation} \label{eq:Ide-eval-G2b}
	|\bK|^2 = \left|\frac{\ff'''(z)}{\ff''(z)}\right|^2 , \quad
	\Id^{(\oo)} = \lambda^2 \frac{\ff'''(z)^2}{\ff''(z)^3} , \quad
	\Ie^{(\oo)} = (\ff'(z) - \lambda^2 \bar{z})
		\frac{\ff'''(z)}{\ff''(z)^2} .
\end{equation}
By considering the slight generalization, $f(\zeta,u) = \ff(\zeta
e^{i\lambda u}) + A_1(u) \zeta + A_0(u)$, we find the following
\begin{equation} \label{eq:M-eval-G2b}
	\bL^{(\oo)} = -2i\lambda \bell , \quad
	\bM^{(\oo)} = 2 \frac{\overline{\ff'''(z)}}{\overline{\ff''(z)}}
		e^{-i\lambda u} (\dot{A}_1 - i\lambda A_1) \bell\bell\bell .
\end{equation}
It will be helpful to consider a more specific case of this new function, $f(\zeta,u) = \frac{4\alpha}{u^2}
e^{i\gamma} \ln \zeta + A_1(u)\zeta + A_0(u)$ in order to compute a particular conditional invariant
\begin{equation} \label{eq:Id-eval-G3a0}
	\frac{1}{\Id^{(k)}} = \alpha e^{i\gamma} .
\end{equation}
We note that the dependence on $k$ disappears from this expression. 

In addition to the above quantities for their respective subclasses or generalizations, it will be useful to evaluate the $J$ invariant for several
subclasses. Namely, for $f(\zeta,u) = e^{i\gamma} \ln \zeta + A_1(u)\zeta +
A_0(u)$ or $f(\zeta,u) = \frac{4\alpha}{u^2} e^{i\gamma} \ln \zeta +
A_1(u)\zeta + A_0(u)$,
\begin{equation} \label{eq:J-eval-G3ab}
	J = e^{2i\gamma} ,
\end{equation}
for $f(\zeta,u) = e^{i\gamma} \zeta^{2i\kappa}$,
\begin{equation} \label{eq:J-eval-G3d}
	|\bK| = \left|\frac{2(i\kappa-1)}{\zeta}\right| , \quad
	J = \frac{(2i\kappa-1) (i\kappa+1)^2}{(2i\kappa+1) (i\kappa-1)^2}
		|\zeta|^{4i\kappa} e^{2i\gamma} ,
\end{equation}
for $f(\zeta,u) = \ff(\zeta)$,
\begin{equation} \label{eq:J-eval-G2b}
	J = \frac{\ff''(\zeta)^3}{\ff'''(\zeta)^2}
		\frac{\overline{\ff'''(\zeta)^2}}{\overline{\ff''(\zeta)^3}} .
\end{equation}

\subsection{Notation summary} \label{sec:notation}

For convenience, we summarize the invariants that we have introduced
above by their differential order and the non-vanishing conditions
needed for their definition
\begin{center}
\begin{tabular}{r|c|c|c|c}
	& $\del^2 f(\zeta,u)$ & $\del^3 f(\zeta,u)$ & $\del^4 f(\zeta,u)$ & $\del^5 f(\zeta,u)$
		\\
	\hline\hline
		& $\bC^\dagger$, $\bT$
		& $\bK$, $|\bK|$
		& $\Ia^{(2)}$, $\Ib^{(2)}$
		&
		\\
	$|\bK|\ne 0$ & 
		& $\mathbf{D}$, $J$
		& $\Ic$, $\fL$, $\Ia^{(0)}$, $\Ib^{(0)}$
		& $N$, $\bN$
		\\
	$\textstyle \Ic-\frac{1}{2} \ne 0$, $|\bK|\ne 0$ &
		&
		& $\bL^{(\oo)}$, $\Id^{(\oo)}$, $\Ie^{(\oo)}$
		& $\bM^{(\oo)}$
		\\
	$\textstyle k(\Ic-\frac{1}{2})\ne \Ic$, $|\bK|\ne 0$ &
		&
		& $\bL^{(k)}$, $\Id^{(k)}$, $\Ie^{(k)}$
		& $\bM^{(k)}$
\end{tabular}
\end{center}

For the $G_5$ and $G_{2c}$ subclasses:
\begin{center}
\begin{tabular}{r|c|c|c}
	& $B(u)$ & $\dot{B}(u)$ & $\ddot{B}(u)$
		\\
	\hline\hline
		&
		& $\Ia^{(2)}$
		& $\Ib^{(2)}$
		\\
	$|\bK|\ne 0$
		& $J$
		& $\Ia^{(0)}$
		& $\Ib^{(0)}$
\end{tabular}
\end{center}

For the $G_{2a}$ and $G_{2b}$ subclasses:
\begin{center}
\begin{tabular}{r|c|c}
	& $\ff''(z)$ & $\ff'''(z)$
		\\
	\hline\hline
		& & $|\bK|$
		\\
	$\textstyle \Ic-\frac{1}{2} \ne 0$, $|\bK|\ne 0$
		& $|\bK|^2/|\Id^{(\oo)}|$, $(\Ie^{(\oo)})^2/\Id^{(\oo)}$
		& $\Id^{(\oo)}$, $\Ie^{(\oo)}$
		\\
	$\textstyle k(\Ic-\frac{1}{2})\ne \Ic$, $|\bK|\ne 0$
		& $|\bK|^2/|\Id^{(k)}|$, $(\Ie^{(k)})^2/\Id^{(k)}$
		& $\Id^{(k)}$, $\Ie^{(k)}$
\end{tabular}
\end{center}

\section{Main results} \label{sec:main}

In 4 dimensions, vacuum pp-waves are traditionally classified by the
dimension and isomorphism type of the Lie algebra of infinitesimal local
isometries, ranging through $G_1$, $G_2$, $G_3$, $G_5$ and $G_6$, with
various subclasses~\cite[Tbl.24.2]{kramer}. The next increase in the
dimension of the symmetry algebra already gives $G_{10}$, corresponding
to the maximally symmetric locally flat case, which we exclude from
consideration. %The $G_5$ and $G_6$ classes are locally homogeneous.

The various symmetry classes can be further refined into subclasses by a careful analysis of associated differential equations with their formfactors. The result of this is a complete list of formfactors which are uniquely characterized by invariant constant parameters and signature functions, whenever an arbitrary function is involved in the formfactor. This analysis is summarized in Figure~\ref{fig:pp-wave-classes} and all of the possible formfactors are listed in Table~\ref{tab:pp-wave-isom-classes}.

To connect the various formfactors of the pp-waves to an IDEAL classification, we begin with the most general case and determine a chain of conditions on tensor quantities. The binary decision of whether the condition is satisfied or not satisfied will either lead to a new tensor condition or an end node for which the sequence of conditions leading to the node fully characterize a particular subclass of the highly-symmetric pp-waves. This is summarized in Figure~\ref{fig:pp-wave-flowchart}.

\subsection{Four dimensional vacuum pp-waves} \label{sec:4d-vac-pp}

The vacuum pp-waves in four dimensions (4D) are characterized by the following
\begin{prop} \label{prop:ideal-pp-wave}
 A 4-dimensional spacetime is a vacuum pp-wave
 (Definition~\ref{def:pp-wave}) if and only if the following conditions
 hold at every point: 
\begin{enumerate}
	\item[(i)] $R_{ab}=0$,
	\item[(ii)] $C^\dagger_{ab}{}^{cd}C^\dagger_{cdef}=0$,
	\item[(iii)] $T_{abc[d}T_{e]fgh;i}=0$. 
\end{enumerate}	
\end{prop}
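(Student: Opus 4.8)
The plan is to prove each implication by combining the algebraic structure of Petrov type \textbf{N} vacuum metrics with the recurrence properties encoded in the Bel-Robinson tensor. For the ``only if'' direction, I would start from Definition~\ref{def:pp-wave} and the known fact that a vacuum pp-wave is Ricci-flat (giving (i)), of Petrov type \textbf{N} with $\bell$ the quadruple PND, so that $\bC^\dagger$ has the form~\eqref{eq:bC-expr}. Condition (ii) then follows from a direct computation: since $C^\dagger_{abcd} = 8 f_{,\zeta\zeta}\,(\bell\wedge\bem)_{ab}(\bell\wedge\bem)_{cd}$ and $(\bell\wedge\bem)^{cd}(\bell\wedge\bem)_{cd} = 0$ (because $\ell^a\ell_a = \ell^a m_a = m^a m_a = 0$), the contraction $C^\dagger_{ab}{}^{cd}C^\dagger_{cdef}$ vanishes identically. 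For (iii), from~\eqref{eq:T-decomp-null} we have $T_{abcd} = \beta\,\ell_a\ell_b\ell_c\ell_d$, and since $\ell_a$ is covariantly constant, $T_{abcd;i} = (\del_i\beta)\,\ell_a\ell_b\ell_c\ell_d$; hence $T_{abc[d}T_{e]fgh;i}$ is proportional to $\ell_{[d}\ell_{e]}(\cdots) = 0$, so (iii) holds. (Note that condition (ii) here should be read together with $R_{ab}=0$ so that $\bC^\dagger$ is the self-dual \emph{Weyl} tensor.)

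The substantive content is the ``if'' direction. Assume (i)--(iii) at every point. From (i), the spacetime is Ricci-flat, so $\bC^\dagger$ is the self-dual Weyl tensor and $T_{abcd}$ is the usual Bel-Robinson tensor, which for a vacuum spacetime is totally symmetric, trace-free, and satisfies the well-known positivity/algebraic properties. The first key step is to extract the Petrov type from (ii): the condition $C^\dagger_{ab}{}^{cd}C^\dagger_{cdef}=0$ says that $\bC^\dagger$, viewed as a complex-linear operator on self-dual bivectors (a $3\times 3$ symmetric complex matrix $Q$), squares to zero, $Q^2 = 0$. A nilpotent symmetric $3\times 3$ complex matrix with $Q^2=0$ must have rank $\le 1$; combined with $Q\ne 0$ (we have excluded the flat case by working at points where the Weyl tensor is nonzero — here one must be slightly careful and argue that if $\bC^\dagger \equiv 0$ on an open set the metric is flat, hence trivially a pp-wave, and otherwise restrict to the open dense set where $\bC^\dagger\ne 0$) this forces Petrov type \textbf{N}. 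So there is a null direction $\bell$ with $\bC^\dagger = \Psi_4\,(\bell\wedge\bem)\otimes(\bell\wedge\bem)$ for some null coframe and scalar $\Psi_4\ne 0$, and correspondingly $T_{abcd} = \beta\,\ell_a\ell_b\ell_c\ell_d$ with $\beta = |\Psi_4|^2 > 0$ (up to normalization). This fixes $\ell_a$ uniquely up to rescaling as the unique PND.

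The second key step uses (iii) to upgrade $\ell_a$ from merely ``the PND'' to a \emph{recurrent} and then \emph{covariantly constant} null vector. Writing $T_{abcd} = \beta\,\ell_a\ell_b\ell_c\ell_d$ and differentiating, $T_{abcd;i} = \beta_{;i}\,\ell_a\ell_b\ell_c\ell_d + \beta(\ell_{a;i}\ell_b\ell_c\ell_d + \text{3 more})$. Plugging into $T_{abc[d}T_{e]fgh;i}=0$ and contracting strategically with $\ell$'s (and with vectors transverse to $\ell$) should force $\ell_{[d}\ell_{e];i} \propto \ell_{[d}\ell_{e]}(\cdots)$, i.e. the recurrence condition~\eqref{def:l-recur}, $\ell_{[a}\ell_{b];c}=0$. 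Thus $\nabla\bell = \bp\otimes\bell$ for some $1$-form $\bp$. To eliminate $\bp$: contract $\nabla_a\ell_b = p_a\ell_b$ with $\ell^a$ to learn $p_a\ell^a\cdot\ell_b$ equals $\ell^a\nabla_a\ell_b$, and use that for a vacuum Petrov-\textbf{N} spacetime $\ell$ is necessarily geodesic, shear-free, and (by the vacuum Bianchi/Sachs equations for type \textbf{N}, or equivalently the Goldberg--Sachs theorem) in fact the PND of a vacuum type \textbf{N} metric generates a \emph{Kundt} congruence, i.e. it is geodesic, shear-free, twist-free, and expansion-free, so that $\nabla_a\ell_b$ is of the form $\ell_a(\cdots)$; combining with $\nabla_a\ell_b = p_a\ell_b$ one gets $\nabla_a\ell_b = \ell_a\ell_b(\cdots)$, and this residual term can be gauged or differentiated away (or killed using the integrability condition $\nabla_{[c}\nabla_{a]}\ell_b = \tfrac12 R_{cab}{}^e\ell_e$, whose right side has a definite type-\textbf{N} form) to conclude $\nabla_a\ell_b = 0$. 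Then $\ell^a\ell_a$ is constant along $M$, and since $T_{abcd}$ forces $\ell$ to be null at the type-\textbf{N} points, $\ell^a\ell_a = 0$ there, hence everywhere by continuity, so $(M,g)$ is a pp-wave by Definition~\ref{def:pp-wave}.

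The main obstacle I anticipate is precisely this second step: turning the single tensor equation (iii) into the clean statement $\nabla\bell = 0$. The derivative of $\beta$ contaminates $T_{abcd;i}$, and one must choose the contractions in $T_{abc[d}T_{e]fgh;i}=0$ cleverly so that the $\beta_{;i}$ terms drop out and only the $\ell_{;i}$ part survives — this is where the antisymmetrization over $[d\,e]$ and the null/transverse frame decomposition do the real work. A secondary subtlety is the bookkeeping around the locus $\bC^\dagger = 0$: one needs that conditions (i)--(iii) holding \emph{everywhere}, together with the conclusion holding on the dense open set where $\bC^\dagger\ne 0$, propagate to give a genuine global covariantly constant null field (or at least that the metric is a pp-wave in a neighborhood of every point), which should follow from analyticity of Ricci-flat metrics or a continuity/limiting argument on $\ell_a$ defined by the (then degenerate) Bel-Robinson tensor. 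I would handle this last point by remarking that where $T_{abcd}\equiv 0$ on an open set the Weyl tensor vanishes there too, so the metric is flat on that set and trivially admits a covariantly constant null vector, while on the closure of $\{\bC^\dagger\ne 0\}$ the covariantly-constant $\ell_a$ extends by continuity.
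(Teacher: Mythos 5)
Your ``only if'' direction and the first half of the ``if'' direction match the paper: (i) plus (ii) give a vacuum spacetime of Petrov type \textbf{N} (so $T_{abcd}=\beta\,\ell_a\ell_b\ell_c\ell_d$ with $\ell$ the repeated PND), and (iii) then reduces to the recurrence condition $\ell_{[d}\ell_{e];i}=0$. Your worry about the $\beta_{;i}$ contamination is unfounded: every term of $T_{efgh;i}$ except the one where the derivative hits $\ell_e$ still carries a factor $\ell_e$, so it dies against $\ell_{[d}\ell_{e]}=0$, and one is left exactly with $\beta^2\ell_a\ell_b\ell_c\ell_f\ell_g\ell_h\,\ell_{[d}\ell_{e];i}$ --- this is precisely the paper's computation.

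The genuine gap is in your final step, from recurrence to ``pp-wave''. Two of the claims you lean on are false. First, the repeated PND of a vacuum type \textbf{N} metric is \emph{not} automatically Kundt: Goldberg--Sachs gives geodesic and shear-free only, and expanding type \textbf{N} vacua (Robinson--Trautman) are counterexamples to twist- and expansion-freeness; those properties are exactly what condition (iii) supplies, so this appeal is at best circular. Second, and more seriously, the conclusion $\nabla_a\ell_b=0$ is generally false for the vector you have in hand. Recurrence only fixes the null \emph{direction}; the representative extracted from the Bel--Robinson tensor (say by normalizing $\beta$) is $\propto |f_{,\zeta\zeta}|^{1/2}\,\bell$ in the pp-wave metric itself, which is recurrent but not parallel unless $|f_{,\zeta\zeta}|$ is constant. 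What is true is that \emph{some rescaling} $e^{-\phi}\ell$ is covariantly constant, and producing $\phi$ requires showing the recurrence $1$-form $p_a$ is closed. That is the missing argument: from $\nabla_a\ell_b=p_a\ell_b$ the Ricci identity gives $(\nabla_{[a}p_{b]})\ell_c=\tfrac12 R_{abc}{}^{d}\ell_d=\tfrac12 C_{abc}{}^{d}\ell_d=0$, and the last contraction vanishes because in vacuum type \textbf{N} the quadruple PND satisfies $C_{abcd}\ell^d=0$; hence $\d p=0$, locally $p_a=\nabla_a\phi$, and $e^{-\phi}\ell_a$ is the desired covariantly constant null vector. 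You gesture at this integrability condition parenthetically but never carry it out, and without it (or the route the paper takes --- recurrence of the PND is equivalent to complex recurrence of $\bC^\dagger$, and complex recurrent type \textbf{N} vacua are precisely the vacuum pp-waves, citing Kramer \S35.2 and Ehlers--Kundt) the proof does not close. The bookkeeping at points where $\bC^\dagger=0$ is a genuine but secondary issue, handled in the paper by simply excluding the flat case; your treatment of it is acceptable.
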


\begin{proof}
Condition (i) is the vacuum condition, while condition (ii) expresses that the
%endomorphism $X_{ab}\mapsto C_{abcd}X^{cd}$ of self-dual bivectors is nilpotent of index 2; within the Petrov classification in terms of this endomorphism this is precisely saying that the 
Weyl tensor is of Petrov type~N (see,
e.g.,~\cite{ferrando2001covariant,wylleman2021poynting} or \cite[\S 4.1]{kramer}). Let $\ell^a$ span the
unique PND. Then the Bel-Robinson tensor is of the
form~\cite{Bel58,Bel62} 
\begin{equation}\label{Bel-Rob-typeN}
	T_{abcd}=\beta \ell_a\ell_b\ell_c\ell_d, \quad \beta> 0.
\end{equation}

Given this form we easily obtain
\[
	T_{abc[d}T_{e]fgh;i}
	= \beta \ell_a\ell_b\ell_c\ell_f\ell_g\ell_h (\ell_{[d}\ell_{e];i}) .
\]
Thus condition (iii) is equivalent to $\ell_{[d}\ell_{e];i} = 0$, which
by Definition~\eqref{def:l-recur} means that $\ell$ is recurrent. But for
4-dimensional type~N spacetimes, the recurrence of the Weyl PND $\ell$
is equivalent to the complex recurrence of the self-dual Weyl tensor
$\bC^\dagger$ itself~\cite[\S 35.2]{kramer}. In turn, the complex
recurrent type~N vacua are precisely the vacuum
pp-waves~\cite{EhlersKundt}.
\end{proof}

\subsection{Classification and characterization} \label{sec:class-char}

We have now defined sufficiently many invariants to construct an IDEAL
characterization of the individual isometry classes of highly symmetric
pp-wave geometries. Our main results split into two parts. First, in
Theorem~\ref{thm:pp-wave-isom-classes}, we refine the the
classification of highly symmetric vacuum pp-wave geometries from
Proposition~\ref{prop:pp-wave-classes} down to individual isometry
classes. Each isometry class is labelled by a number of invariant
parameters and possibly a \emph{signature function} $F$. The signature
function appears as part of a functional relation between some scalar
invariants, which ultimately translate to an ODE on the function $A(u)$
or $\ff(z)$. These ODEs must have the property that the integration constants
in a general solution must exhaust the residual gauge freedom
(cf.~\eqref{eq:A-residual1}, \eqref{eq:A-residual2}, \eqref{eq:A-residual3}
and~\eqref{eq:A-residual4}) in the corresponding isometry class. The
differential order and size of the signature ODE will vary according to
the number of integration constants it needs to produce. The isometry
classes are conveniently listed in Table~\ref{tab:pp-wave-isom-classes}
and the logic of the proof follows the corresponding flowchart in
Figure~\ref{fig:pp-wave-classes}, which illustrates how different
classes reduce to each other by specializing parameters.

Second, in Theorem~\ref{thm:ideal-pp-wave-isom}, for each isometry
class, we give the IDEAL equations characterizing it. The
characterizations are conveniently listed in
Table~\ref{tab:pp-wave-ideal} and the logic of the proof is reflected in
the classification flowchart in Figure~\ref{fig:pp-wave-flowchart}.

\setlength{\LTcapwidth}{\textwidth}%
\begin{longtable}{lcc}
\caption{Isometry classes of highly symmetric vacuum pp-wave geometries.
	Any new notation is introduced in the proof of
	Theorem~\ref{thm:pp-wave-isom-classes} and can looked up by following
	the corresponding node number in Figure~\ref{fig:pp-wave-classes}.}
\label{tab:pp-wave-isom-classes}\\
	class & $f(\zeta,u)$ & invariant parameters \\
	\hline\hline
	\endfirsthead
	\caption[]{(continued)}\\
	class & $f(\zeta,u)$ & invariant parameters \\
	\hline\hline
	\endhead
	\noalign{\smallskip}
	$G_{6a;\alpha,\kappa}$:
		& $\frac{4\alpha}{u^2} u^{2i\kappa} \zeta^2$
		& $\alpha>0, \quad \kappa\ge 0$
	\\ \noalign{\smallskip}\hline\noalign{\smallskip}
	$G_{6b;\lambda}$:
		& $e^{2i\lambda u} \zeta^2$
		& $\lambda\ge 0$
	\\ \noalign{\smallskip}\hline\noalign{\smallskip}
	$G_{5';\alpha,F}$:
		& $\frac{4\alpha}{u^2} e^{i\Im B(u)} \zeta^2$
		& $\begin{gathered}
			\del_u(\Im\dot{B} e^{-\frac{\Re B}{2}}) \ne 0, %\\
			%{\color{gray} \Re\dot{B} e^{-\frac{\Re B}{2}} = \pm\alpha^{-1/2} ,}
			~~ \alpha > 0 , \\
			\Im \ddot{B} e^{-\Re B}
				= F(\Im \dot{B} e^{-\frac{\Re B}{2}}) ,
				\\
			F(-y) = -F(y) , ~~
			F(y) \ne -\frac{y}{2\sqrt{\alpha}} ,
				\\
			F\colon U \subset \mathbb{R} \to \mathbb{R}
			\end{gathered}$
	\\ \noalign{\smallskip}\hline\noalign{\smallskip}
	$G_{5';\oo,F}$:
		& $e^{i\Im B(u)} \zeta^2$
		& $\begin{gathered}
			\del_u(\Im\dot{B} e^{-\frac{\Re B}{2}}) \ne 0, \\
			%{\color{gray} \Re\dot{B} e^{-\frac{\Re B}{2}} = 0 ,} \\
			\Im \ddot{B} e^{-\Re B}
				= F(\Im \dot{B} e^{-\frac{\Re B}{2}}) ,
				\\
			F(-y) = -F(y) , ~~
			F(y) \ne 0 ,
				\\
			F\colon U \subset \mathbb{R} \to \mathbb{R}
			\end{gathered}$
	\\ \noalign{\smallskip}\hline\noalign{\smallskip}
	$G_{5^\circ;F}$:
		& $e^{B(u)} \zeta^2$
		& \footnotesize$\begin{gathered}
			\del_u(\Re\dot{B} e^{-\frac{\Re B}{2}}) \ne 0 , \\
			\begin{pmatrix}
				\Re \ddot{B} e^{-\Re B} \\
				(\Im \dot{B})^2 e^{-\Re B}
			\end{pmatrix}
			= F(\Re \dot{B} e^{-\frac{\Re B}{2}}),
				\\
			F_{\Im}(y) \ge 0 ,
			~~ F_{\Re}(y) \ne \frac{1}{2} y^2 ,
				\\
			F = \begin{pmatrix} F_{\Re} \\ F_{\Im} \end{pmatrix}\colon U
				\subset \mathbb{R} \to \mathbb{R}^2
			\end{gathered}$
	\\ \noalign{\smallskip}\hline\noalign{\smallskip}
	$G_{3a';\pm\alpha}$:
		& $\pm\frac{4\alpha}{u^2} \ln \zeta$
		& $\{\pm\}, \quad \alpha > 0$
	\\ \noalign{\smallskip}\hline\noalign{\smallskip}
	$G_{3a^\circ;\alpha,\gamma}$:
		& $\frac{4\alpha}{u^2} e^{i\gamma} \ln \zeta$
		& $\gamma\in[0,\pi], \quad \alpha > 0$
	\\ \noalign{\smallskip}\hline\noalign{\smallskip}
	$G_{3b';\pm}$:
		& $\pm\ln \zeta$
		& $\{\pm\}$
	\\ \noalign{\smallskip}\hline\noalign{\smallskip}
	$G_{3b^\circ;\gamma}$:
		& $e^{i\gamma}\ln \zeta$
		& $\gamma\in[0,\pi]$
	\\ \noalign{\smallskip}\hline\noalign{\smallskip}
	$G_{3c;\lambda}$:
		& $e^{2\lambda \zeta}$
		& $\lambda > 0$
	\\ \noalign{\smallskip}\hline\noalign{\smallskip}
	$G_{3d;\gamma,\kappa}$:
		& $e^{i\gamma} \zeta^{2i\kappa}$
		& $\gamma \in [0,\pi] , \quad
			\kappa\in \mathbb{R} \setminus \{ 0 \}$
	\\ \noalign{\smallskip}\hline\noalign{\smallskip}
	$G_{2a^\circ;\kappa,c,\Re F}$:
		& $\frac{1}{u^2} \ff(\zeta u^{i\kappa})$
		& $\begin{gathered}
				\ff'''(z)^2/\ff''(z)^3 \ne \text{const} , \\
				\frac{\ff'''(z)}{\ff''(z)} = \exp F(\tfrac{\ff'''(z)^2}{\ff''(z)^3}) , \\
				\mathcal{F}''^{-1}(\ff''(z)) = Z_a, \\
				|\ff'(z) - \cF'(Z_a)| = c \ge 0 , \\
				F(y) \ne \frac{1}{2}\ln y + \text{const} , \\
				F\colon U \subset \mathbb{C} \to \mathbb{C}
			\end{gathered}$
	\\ \noalign{\smallskip}\hline\noalign{\smallskip}
	$G_{2a';\alpha,\gamma,\kappa,c}$:
		& \footnotesize$\frac{4\alpha}{u^2} e^{i\gamma} \left(\ln\zeta + c \zeta u^{i\kappa}\right)$
		& $\begin{gathered}
			\gamma \in [0,\pi], \quad \kappa \in \mathbb{R} , \\
			\alpha > 0 , \quad c > 0
			\end{gathered}$
	\\ \noalign{\smallskip}\hline\noalign{\smallskip}
	$G_{2b^\circ;c,\Re F}$:
		& $\ff(\zeta e^{iu})$
		& $\begin{gathered}
				\ff'''(z)^2/\ff''(z)^3 \ne \text{const} , \\
				\frac{\ff'''(z)}{\ff''(z)} = \exp F(\tfrac{\ff'''(z)^2}{\ff''(z)^3}) , \\
				\mathcal{F}''^{-1}(\ff''(z)) = Z_b, \\
				|\ff'(z) - \mathcal{F}'(Z_b)| = c \ge 0 , \\
				F(y) \ne \frac{1}{2}\ln y + \text{const} , \\
				F\colon U \subset \mathbb{C} \to \mathbb{C}
			\end{gathered}$
	\\ \noalign{\smallskip}\hline\noalign{\smallskip}
	$G_{2b';\gamma,\Re F}$:
		& $\ff(\zeta)$
		& \footnotesize$\begin{gathered}
				\ff'''(z) \ff'(z)/\ff''(z)^2 \ne \text{const} , \\
				\ff'''(z)/\ff''(z) = \exp F(\ff'''(z) \ff'(z)/\ff''(z)^2) , \\
				[\mathcal{F}'(-)/\mathcal{F}''(-)]^{-1}(\ff'(z)/\ff''(z)) = Z_{b'} , \\
				\Re\left[\frac{\ff''(z)}{\overline{\ff''(z)}}
					\frac{\overline{\cF''(Z_{b'})}}{\cF''(Z_{b'})}\right]^{1/2}
					= \cos\gamma \in [-1,1] , \\
				F(y) \ne \ln(\lambda y) , \ln\left(\frac{4c}{2-y}\right) , \\
				F\colon U \subset \mathbb{C} \to \mathbb{C}
			\end{gathered}$
	\\ \noalign{\smallskip}\hline\noalign{\smallskip}
	$G_{2b'';\gamma,\lambda,c}$:
		& \footnotesize$e^{i\gamma} \left(\ln\zeta + c \zeta e^{i\lambda u}\right)$
		& $\gamma \in [0,\pi], \quad \lambda \in \mathbb{R} , \quad c > 0$
	\\ \noalign{\smallskip}\hline\noalign{\smallskip}
	$G_{2b''';k}$:
		& $\zeta^{k}$
		& $k \in \mathbb{C} \setminus \left(i\mathbb{R} \cup \{1,2\}\right)$
	\\ \noalign{\smallskip}\hline\noalign{\smallskip}
	$G_{2c';\alpha,F}$:
		& $\frac{4\alpha}{u^2} e^{i \Im B(u)} \ln\zeta$
		& $\begin{gathered}
			\del_u(\Im\dot{B}) \ne 0 , %\\
			%{\color{gray} \Re \dot{B} e^{-\frac{\Re B}{2}} = \pm\alpha^{1/2} ,}
			~~ \alpha > 0 , \\
			\Im \dot{B} e^{-\frac{\Re B}{2}}
				= F\left(\sin(\Im B)\right) ,
				\\
			F(-y) = -F(y) , ~~
			F(y) \ne 0 ,
				\\
			F\colon U \subset U(1) \to \mathbb{R}
			\end{gathered}$
	\\ \noalign{\smallskip}\hline\noalign{\smallskip}
	$G_{2c';\oo,F}$:
		& $e^{i \Im B(u)} \ln\zeta$
		& $\begin{gathered}
			\del_u(\Im\dot{B}) \ne 0 , \\
			%{\color{gray} \Re \dot{B} e^{-\frac{\Re B}{2}} = 0 ,} \\
			\Im \dot{B} e^{-\frac{\Re B}{2}}
				= F\left(\sin (\Im B)\right) ,
				\\
			F(-y) = -F(y) , ~~
			F(y) \ne 0 ,
				\\
			F\colon U \subset U(1) \to \mathbb{R}
			\end{gathered}$
	\\ \noalign{\smallskip}\hline\noalign{\smallskip}
	$G_{2c^\circ;F}$:
		& $e^{B(u)} \ln\zeta$
		& $\begin{gathered}
			\del_u(\Re\dot{B} e^{-\frac{\Re B}{2}}) \ne 0 , \\
			\begin{pmatrix}
				\Re \ddot{B} e^{-\Re B} \\
				\cos(\Im B)
			\end{pmatrix}
				= F(\Re \dot{B} e^{-\frac{\Re B}{2}}) ,
				\\
			F_{\Re}(y) \ne \frac{1}{2} y^2 ,
				\\
			F=\begin{pmatrix} F_{\Re} \\ F_{\Im} \end{pmatrix}\colon U
				\subset \mathbb{R} \to \mathbb{R}\times [-1,1]
			\end{gathered}$
	\smallskip\\ \hline\hline
\end{longtable}

\begin{figure}
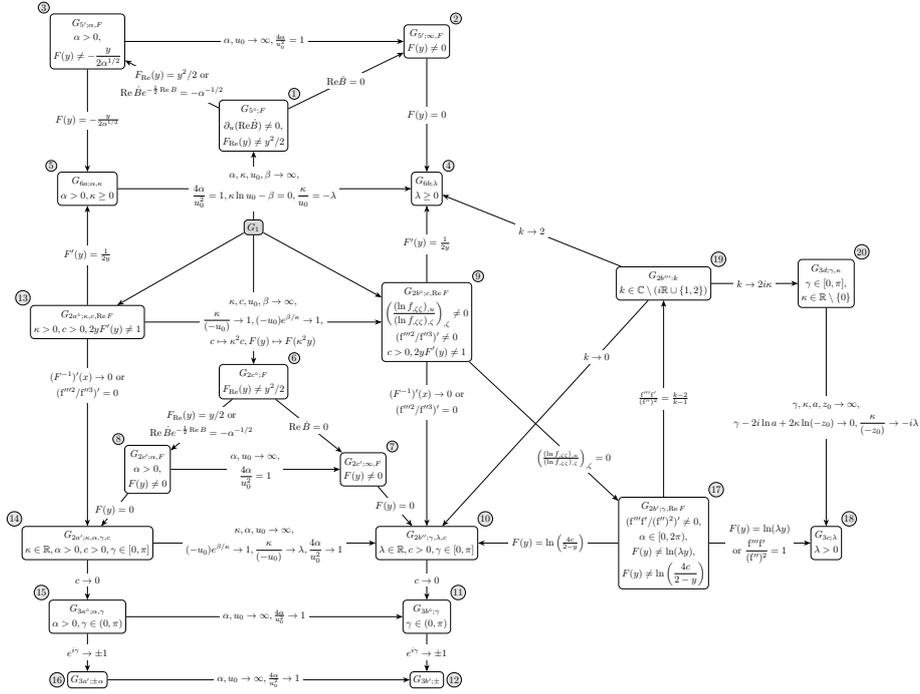

% TiKZ style file generated by TikZiT. You may edit this file manually,
% but some things (e.g. comments) may be overwritten. To be readable in
% TikZiT, the only non-comment lines must be of the form:
% \tikzstyle{NAME}=[PROPERTY LIST]

% Node styles
\tikzstyle{start}=[fill=white, draw=black, shape=trapezium, trapezium left angle=75, trapezium right angle=-75, very thick]
\tikzstyle{decide}=[fill=white, draw=black, shape=chamfered rectangle]
\tikzstyle{leaf}=[fill=white, draw=black, shape=rectangle, rounded corners, thick]
\tikzstyle{info}=[fill=white, draw=black, shape=rectangle]
\tikzstyle{yn}=[fill=white, shape=circle, inner sep=2pt]
\tikzstyle{num}=[fill={{gray!20}}, draw=black, shape=circle, inner sep=1pt, outer sep=3pt]

% Edge styles
\tikzstyle{flow}=[->, >={Stealth[length=1.5ex]}, thick]
\tikzstyle{impl}=[->, >=stealth, double]
\tikzstyle{dd}=[-, dashed]
\tikzstyle{line}=[-, thick]
%\hspace{-3.5cm}
\resizebox{\textwidth}{!}{\tikzfig{pp-classes}}
\caption{Flowchart for relations between pp-wave classes. Arrows
indicate transitions from generic to more special classes by taking
limits of invariant or gauge parameters. Numbers accompany the proof of
Theorem~\ref{thm:pp-wave-isom-classes}.}
\label{fig:pp-wave-classes}
\end{figure}

\begin{thm}[pp-wave isometry classes] \label{thm:pp-wave-isom-classes}
Consider a point $x\in M$, where $(M,g)$ belongs to one of the highly
symmetric regular (Definition~\ref{def:regular}) vacuum pp-wave
geometries from Proposition~\ref{prop:pp-wave-classes}. Then there exists a neighborhood of $x$ such
that this neighborhood is isometric to exactly one of the classes listed
in Table~\ref{tab:pp-wave-isom-classes} and
Figure~\ref{fig:pp-wave-classes}. These classes are identified by
invariant parameters and/or a signature function $F$ defined on a domain
$U$. The domain $U$ needs to be chosen to correspond to the needed
neighborhood of $x$. Limiting values of the invariant parameters allow
for transitions between different classes according to the flowchart in
Figure~\ref{fig:pp-wave-classes}.
\end{thm}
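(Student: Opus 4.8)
The plan is to proceed case by case through the list in Proposition~\ref{prop:pp-wave-classes}, and for each symmetry class show that, after exhausting the residual gauge freedom~\eqref{eq:pp-wave-residual-gauge}, the remaining data reduce either to a finite set of invariant constants (for the rigid classes $G_6$, $G_{3}$, and the most special $G_2$ cases) or to a single signature function $F$ plus finitely many constants (for the $G_5$, $G_{2a}$, $G_{2b}$, $G_{2c}$ classes that retain one free function). For each such class I would (i) write down the canonical form of $f$ from~\eqref{eq:f-cases}; (ii) compute the action of~\eqref{eq:pp-wave-ds2-redef} on the free parameters, thereby identifying the residual gauge group; (iii) evaluate the conditional invariants from Section~\ref{sec:curv-inv} specialized to that class (this is exactly the content of~\eqref{eq:Iab2-eval}--\eqref{eq:J-eval-G2b}); and (iv) show that the functional relations among these invariants --- e.g.\ the relation $\Im\ddot{B}e^{-\Re B}=F(\Im\dot{B}e^{-\frac{\Re B}{2}})$ in class $G_{5';\alpha,F}$, or $\ff'''/\ff''=\exp F(\ff'''^2/\ff''^3)$ in $G_{2a^\circ}$ --- together with the normalization conditions on $F$ pin down the metric up to isometry, the residual integration constants in the corresponding ODE matching precisely the residual gauge freedom listed in~\eqref{eq:A-residual1}--\eqref{eq:A-residual4}.

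The logical backbone is the flowchart of Figure~\ref{fig:pp-wave-classes}: starting from the most generic highly symmetric class and successively imposing the vanishing of invariants ($|\bK|=0$, $\Ic-\tfrac12=0$, constancy of $\ff'''^2/\ff''^3$, etc.) one descends to the more special classes, and I would organize the proof so that each node of the flowchart corresponds to one sub-step. The key technical device for the function classes is the implicit function theorem: given the invariant $\Ia^{(2)}$ (resp.\ $\Id^{(\oo)}$, $\Id^{(k)}$) as a function of a coordinate along an orbit, the inequalities built into Definition~\ref{def:regular} (such as $\del_u(\Re\dot{B}e^{-\frac12\Re B})\ne 0$) guarantee local invertibility, so that a second invariant ($\Ib^{(2)}$, or $\Ie$, etc.) can be re-expressed as a well-defined function $F$ of the first; the symmetry properties of $F$ ($F(-y)=-F(y)$) and the excluded values ($F(y)\ne -\tfrac{y}{2\sqrt\alpha}$, $F(y)\ne\tfrac12\ln y+\text{const}$) are exactly the conditions that prevent degeneration to a lower-dimensional symmetry class and that remove residual discrete ambiguities. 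Uniqueness --- that each neighborhood lands in \emph{exactly one} class --- then follows because the invariant parameters and the germ of $F$ are, by construction, gauge-invariant, so two neighborhoods with different invariant data cannot be isometric.

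The main obstacle I expect is the bookkeeping of residual gauge freedom in the function classes: one must verify, for each signature-function class, that the general solution of the ODE encoding the relation between invariants has exactly as many integration constants (and of exactly the right type --- real/complex, constant/functional) as the residual gauge group acting on $A(u)$ or $\ff(z)$, so that distinct solutions of the ODE with the \emph{same} $F$ are genuinely gauge-equivalent while distinct $F$'s are not. This is delicate in the $G_{2b'}$ case, where the relation involves the combination $\Re[\ff''/\overline{\ff''}\cdot\overline{\cF''}/\cF'']^{1/2}=\cos\gamma$ and the auxiliary function $\mathcal F$ defined by inverting $\cF'/\cF''$, and one must simultaneously exclude two families of degenerate $F$'s. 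A secondary difficulty is the consistent choice of branches for the roots appearing in $\Ia^{(0,2)}$, $\Ib^{(0,2)}$ and in the $G_{2b'}$ square root; here I would invoke the branch conventions fixed after~\eqref{eq:Ib2-def} and the regularity hypothesis to ensure the choices propagate continuously over the required neighborhood of $x$. Everything else --- the explicit evaluations of invariants, and the transitions between classes under limits of parameters --- is routine once the canonical forms and the gauge action are in hand.
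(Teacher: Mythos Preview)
Your proposal is correct and follows essentially the same approach as the paper: a node-by-node traversal of the flowchart in Figure~\ref{fig:pp-wave-classes}, using the regularity inequalities of Definition~\ref{def:regular} to invoke the implicit function theorem and produce signature functions, then matching the integration constants of the resulting ODEs against the residual gauge freedom~\eqref{eq:pp-wave-residual-gauge}. The paper's proof simply spells out the twenty nodes explicitly, including the limiting transitions between classes, but the strategy and the identification of the delicate points (the $G_{2b'}$ case and the branch conventions for $\Ia^{(0,2)}$, $\Ib^{(0,2)}$) are exactly as you outline.
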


\begin{proof}
We will proceed by going through the flowchart in
Figure~\ref{fig:pp-wave-classes} step by step, in the order of the
numbers indicated next to each node. In the corresponding numbered part
of the proof, we give some invariant conditions that select a subclass
of formfactor functions $f(\zeta,u)$. These subclasses are then broken
up into individual isometry classes under some non-degeneracy
conditions. The isometry classes and their invariant parameter
corresponding to that node in Figure~\ref{fig:pp-wave-classes} are
precisely recorded by representative formfactors in
Table~\ref{tab:pp-wave-isom-classes}. Then, to be exhaustive, the ways
in which the formfactors $f(\zeta,u)$ fail to satisfy the non-degeneracy
conditons, if any, are broken up into cases, which correspond to the
arrows leaving the given node in Figure~\ref{fig:pp-wave-classes} and
are discussed separately. This way, we show that all of the highly
symmetric formfactor families listed in~\eqref{eq:f-cases} are exhausted
by our analysis.

The functional forms of $f(u,\zeta)$ listed in~\eqref{eq:f-cases} are
not all mutually exclusive and each one may contain multiple
individual isometry classes within it. Besides $G_1$, the most generic subclasses
are $G_{2a}$, $G_{2b}$, $G_{2c}$ and $G_5$, each containing functional
degrees of freedom. The remaining subclasses will be specializations
thereof. Invariant parameters that label an isometry class will appear
in relations among differential invariants (\emph{classifying} or
\emph{signature} relations). We know that we have isolated an individual
isometry class when the integration parameters of the corresponding set
of classifying relations can be completely absorbed by residual gauge
freedom. A combination of $f(\zeta,u)$ and its derivatives can be
checked to be invariant under residual gauge freedom by direct
inspection or, as a shortcut, by realizing it as a component of one of
the tensor or conditional curvature invariants that we have defined in
Section~\ref{sec:pp-wave}.

To start, note that $f_{,\zeta\zeta} = 0$ is equivalent to $\bC^\dagger
= 0$~\eqref{eq:bC-expr}, which together with the vacuum condition
implies that the spacetime is locally Minkowski. Hence we can presume
below that $f_{,\zeta\zeta}\ne 0$.

\begin{enumerate}

\item \label{itm:K2=0} % 1
The invariant relation
\begin{equation} \label{eq:K2=0}
	(\ln f_{,\zeta\zeta})_{,\zeta}
	= \frac{f_{,\zeta\zeta\zeta}}{f_{,\zeta\zeta}} = 0
\end{equation}
integrates to $f(\zeta,u) = A(u) \zeta^2 + A_1(u) \zeta + A_0(u)$, where
we can set $A_1(u) = A_0(u)$ by gauge freedom, which lands us in the
$G_5$ subclass. Rewriting $A(u) = e^{B(u)}$, the combinations $\dot{B}
e^{-\frac{1}{2}\Re B}$ and $\ddot{B} e^{-\Re B}$ are invariant under the
continuous gauge symmetries. The discrete gauge symmetry $\zeta \mapsto
\bar{\zeta}$ changes the signs of $\Im \dot{B} e^{-\frac{1}{2}\Re B}$
and $\Im\ddot{B} e^{-\Re B}$, while $u\mapsto -u$ changes the sign of
$\dot{B} e^{-\frac{1}{2}\Re B}$. Generically, $\del_u(\Re \dot{B}
e^{-\frac{1}{2}\Re B}) \ne 0$, so there exists a signature function
$F=(F_{\Re},F_{\Im})\colon U\subset \mathbb{R} \to \mathbb{R}^2$, with
$F_{\Im}(y) \ge 0$, such that
\begin{equation} \label{eq:G5-signature}
	\begin{pmatrix}
		\Re \ddot{B} e^{-\Re B} \\ (\Im \dot{B})^2 e^{-\Re B}
	\end{pmatrix}
	= F\left(\Re \dot{B} e^{-\frac{1}{2}\Re B}\right) .
\end{equation}
Given any particular solution $B_F(u)$, symmetries dictate that the
general 3-parameter solution is
\begin{equation}
	B(u) = B_F(a (u-u_0)) +  2\ln a + 2i\beta
		\quad \text{or} \quad
	\overline{B_F(a (u-u_0))} +  2\ln a - 2i\beta ,
\end{equation}
where the integration constants $a, \beta, u_0$ and the complex
conjugaton can be absorbed by corresponding gauge transformations. Hence
the signature function $F$ completely describes the corresponding
isometry class, which we label by $G_{5^\circ;F}$. The choice $F_{\Re}(y)
= y^2/2$ implies $\Re B = -2\ln|u-u_0| + \ln 4\alpha$, which implies
constant $\Re\dot{B} e^{-\frac{1}{2} \Re B} = \pm\alpha^{-1/2}$, which
must be excluded to be consistent with our genericity assumption.

The only possibilities left to consider are those with constant
$\Re\dot{B} e^{-\frac{1}{2} \Re B} \ne 0$ and $\Re\dot{B}
e^{-\frac{1}{2} \Re B} = 0$.

\item \label{itm:G5-ReIa=0} % 2
The condition
\begin{equation} \label{eq:G5-ReIa=0}
	\Re\dot{B} e^{-\frac{1}{2} \Re B} = 0
\end{equation}
integrates to a constant $\Re B(u) = 2\ln a$. Generically then,
$\del_u(\Im \dot{B} e^{-\frac{1}{2}\Re B}) \ne 0$, so there exists a
signature function $F\colon U\subset \mathbb{R} \to \mathbb{R}$, with
$F(-y) = -F(y)$ because both $\Im B$ and $-\Im B$ must be solutions, such
that
\begin{equation} \label{eq:G5-ReIa=0-signature}
	\Im \ddot{B} e^{-\Re B}
	= F\left(\Im \dot{B} e^{-\frac{1}{2}\Re B}\right) .
\end{equation}
Given any particular solution $\Im B_F(u)$, symmetries dictate that the
general 2-parameter solution is
\begin{equation}
	B(u) = i\Im B_F(a(u-u_0)) + 2\ln a + 2i\beta
		\quad \text{or} \quad
	-i\Im B_F(a(u-u_0)) + 2\ln a - 2i\beta .
\end{equation}
All the integration constants $a, \beta, u_0$ and the complex
conjugation can be absorbed by corresponding gauge transformations.
Hence, the signature function $F$ completely describes the corresponding
isometry class, which we label by $G_{5';\oo,F}$. The choice $F(y) = 0$
implies $\Im B = 2a\lambda (u-u_0)$, which implies constant
$\Im\dot{B}e^{-\frac{1}{2}\Re B} = 2\lambda$, which must be excluded to
be consistent with our genericity assumption.

The only possibility left to consider is that with constant $\Im\dot{B}
e^{-\frac{1}{2} \Re B}$.

\item \label{itm:G5-ReIa=const} % 3
Let us parametrize
\begin{equation} \label{eq:G5-ReIa=const}
	\Re\dot{B} e^{-\frac{1}{2} \Re B} = \pm\alpha^{-1/2}
\end{equation}
with constant $\alpha > 0$, which integrates to $\Re B(u) = -2\ln|u-u_0|
+ \ln 4\alpha$ with integration constant $u_0$, with the $+$ and $-$
signs corresponding to $u<u_0$ and $u>u_0$ respectively. Generically then,
$\del_u(\Im \dot{B} e^{-\frac{1}{2}\Re B}) \ne 0$, so there exists a
signature function $F\colon U\subset \mathbb{R} \to \mathbb{R}$, with
$F(-y) = -F(y)$ because both $\Im B$ and $-\Im B$ must be solutions, such
that
\begin{equation}
	\Im \ddot{B} e^{-\Re B}
	= F\left(\Im \dot{B} e^{-\frac{1}{2}\Re B}\right) .
\end{equation}
Given any particular solution $\Im B_F(u-u_0)$, symmetries dictate that the
general 2-parameter solution is
\begin{multline}
	B(u) = i\Im B_F(a(u-u_0)) - 2\ln|u-u_0| + \ln4\alpha + 2i\beta
		\\ \text{or} \quad
	-i\Im B_F(a(u-u_0)) - 2\ln|u-u_0| + \ln4\alpha - 2i\beta .
\end{multline}
All the integration constants $a, \beta, u_0$ and the complex
conjugation can be absorbed by corresponding gauge transformations.
Hence, the signature function $F$ completely describes the corresponding
isometry class, which we label by $G_{5';\alpha,F}$. The choice $F(y) =
-y/(2\alpha^{1/2})$ implies $\Im B = 2\kappa\ln|u-u_0|$, which implies
constant $\Im\dot{B}e^{-\frac{1}{2}\Re B} = \kappa \alpha^{-1/2}$, which
must be excluded to be consistent with our genericity assumption.

The only possibility left to consider is that with constant $\Im\dot{B}
e^{-\frac{1}{2} \Re B}$.

Taking the invariant parameter limit $\alpha \to \oo$, accompanied by
the gauge parameter limit $u_0\to \oo$, such that $ 4\alpha/u_0^2 \to
1$, we recover the $G_{5';\oo,F}$ subclass.

\item \label{itm:G6b} % 4
The condition
\begin{equation}
	\dot{B} e^{-\frac{1}{2} \Re B} = 2i\lambda
	\iff
	\Re\dot{B} e^{-\frac{1}{2} \Re B} = 0, \quad
	\Im\dot{B} e^{-\frac{1}{2} \Re B} = 2\lambda
\end{equation}
integrates to $B(u) = 2\ln a + 2ia\lambda |u-u_0|$ with integration
constants $a$, $u_0$, which can be absorbed by the corresponding
continuous gauge freedom. Under $\zeta \mapsto \bar{\zeta}$, $\lambda
\mapsto -\lambda$, so we can choose to represent the isometry class
with $\lambda \ge 0$. Hence, the above condition already completely
describes the corresponding isometry class, which we label by
$G_{6b;\lambda}$.

\item \label{itm:G6a} % 5
The condition
\begin{multline}
	\dot{B} e^{-\frac{1}{2} \Re B} = (i\kappa \pm 1) \alpha^{-1/2}
	\\ \iff
	\Re\dot{B} e^{-\frac{1}{2} \Re B} = \pm\alpha^{-1/2}, \quad
	\Im\dot{B} e^{-\frac{1}{2} \Re B} = \kappa \alpha^{-1/2} ,
\end{multline}
with $\alpha > 0$, integrates to
\begin{equation}
	B(u) = i\beta + \ln4\alpha + 2(i\kappa \sgn(u-u_0) - 1)\ln|u-u_0|
\end{equation}
with integration constants $\beta$, $u_0$, which can be absorbed by the
corresponding continuous gauge freedom. The $+$ and $-$ signs correspond
to $u<u_0$ and $u>u_0$ respectively, with the differrence between the
two cases absorbed by the $u\mapsto -u$ residual gauge transformation,
after fixing $u_0=0$. Without loss of generality, we can then assume
that $u>0$. Under $\zeta \mapsto \bar{\zeta}$, $\kappa \mapsto -\kappa$,
so we can choose to represent the isometry class with $\kappa \ge 0$.
Hence, the above condition already completely describes the
corresponding isometry class, which we label by $G_{6a;\alpha,\kappa}$.

Taking the invariant parameter limit $\alpha,\kappa \to \oo$,
accompanied by the gauge parameter limit $\beta, u_0\to \oo$, such that
$\frac{\kappa}{2\alpha^{1/2}} = \lambda$, $4\alpha/u_0^2 \to 1$, and
$\kappa\ln u_0 - \beta = 0$, we recover the $G_{6b;\lambda}$ subclass.

\item \label{itm:G2c} % 6
The invariant relation
\begin{equation} \label{eq:log-condition}
	-\left(\frac{1}{(\ln f_{,\zeta\zeta})_{,\zeta}}\right)_{,\zeta}
	= \frac{1}{2} ,
\end{equation}
where the invariance of the left-hand side is checked by noting that it
is equal to the $\Ic$ conditional curvature invariant in~\eqref{eq:Ic-def},
integrates to $f(\zeta,u) = A(u) \ln(\zeta - h(u)) + A_1(u) \zeta +
A_0(u)$, where we can set $h(u) = A_0(u) = 0$ by gauge freedom. For
this form of $f(\zeta,u)$, using the invariant $N$~\eqref{eq:N-def}
and its evaluation in~\eqref{eq:N-eval-G2c}, the invariant condition
$N=0$ is equivalent to
\begin{equation} \label{eq:G2c-A1=0}
	A_1(u) = 0 ,
\end{equation}
which lands us in the $G_{2c}$ subclass.

Once again, rewriting $A(u) = e^{B(u)}$, the combinations $\Re \dot{B}
e^{-\frac{1}{2}\Re B}$, $\Re\ddot{B} e^{-\Re B}$ and $\Im B$ are
invariant under the continuous gauge symmetries. The discrete gauge
symmetry $u \mapsto -u$ changes the signs of $\Re \dot{B}
e^{-\frac{1}{2}\Re B}$ and $\Im\ddot{B} e^{-\Re B}$, while $\zeta
\mapsto \bar{\zeta}$ changes the sign of $\Im B$. Generically,
$\del_u(\Re \dot{B} e^{-\frac{1}{2}\Re B}) \ne 0$, so there exists a
signature function $F=(F_{\Re},F_{\Im})\colon U\subset \mathbb{R} \to
\mathbb{R}\times [-1,1]$, such that
\begin{equation} \label{eq:G2c-signature}
	\begin{pmatrix}
		\Re \ddot{B} e^{-\Re B} \\ \cos(\Im B)
	\end{pmatrix}
	= F\left(\Re \dot{B} e^{-\frac{1}{2}\Re B}\right) .
\end{equation}
Given any particular solution $B_F(u)$, symmetries dictate that the
general 2-parameter solution is
\begin{equation}
	B(u) = B_F(a (u-u_0)) +  2\ln a
		\quad \text{or} \quad
	\overline{B_F(a (u-u_0))} +  2\ln a ,
\end{equation}
where the integration constants $a, u_0$ and the complex conjugaton can
be absorbed by corresponding gauge transformations. Hence the signature
function $F$ completely describes the corresponding isometry class,
which we label by $G_{2^\circ;F}$. The choice $F_{\Re}(y) = y^2/2$
implies $\Re B = -2\ln|u-u_0| + \ln 4\alpha$, which implies constant
$\Re\dot{B} e^{-\frac{1}{2} \Re B} = \pm\alpha^{-1/2}$, which must be
excluded to be consistent with our genericity assumption.

The only possibilities left to consider are those with constant
$\Re\dot{B} e^{-\frac{1}{2} \Re B} \ne 0$ and $\Re\dot{B} = 0$.

\item \label{itm:G2c0-ReIa=0} % 7
The condition
\begin{equation}
	\Re \dot{B} = 0
\end{equation}
integrates to a constant $\Re B(u) = 2\ln a$. Generically then,
$\del_u(\Im B) \ne 0$, so there exists a signature function $F\colon
U\subset [-1,1] \to \mathbb{R}$, because both $\Im B$ and $-\Im B$ must
be solutions, with $F(-y) = -F(y)$ such that
\begin{equation} \label{eq:G2c0-ReIa=0-signature}
	\Im \dot{B} e^{-\frac{1}{2}\Re B}
	= F\left(\sin(\Im B)\right) .
\end{equation}
Given any particular solution $\Im B_F(u)$, symmetries dictate that the
general 2-parameter solution is
\begin{equation}
	B(u) = i\Im B_F(a(u-u_0)) + 2\ln a
		\quad \text{or} \quad
	-i\Im B_F(a(u-u_0)) + 2\ln a ,
\end{equation}
where the integration constants $a, u_0$ and the complex conjugation can
be absorbed by corresponding gauge transformations. Hence, the signature
function $F$ completely describes the corresponding isometry class,
which we label by $G_{2c';\oo,F}$. The choice $F(y) = 0$ implies
constant $\Im B(u) = \gamma$, which must be excluded to be consistent
with our genericity assumption.

The only possibility left to consider is constant $\Im B$.

\item \label{itm:G2c0-ReIa=const} % 8
Let us parametrize
\begin{equation}
	\Re\dot{B} e^{-\frac{1}{2} \Re B} = \pm\alpha^{-1/2}
\end{equation}
with constant $\alpha > 0$, which integrates to $\Re B(u) = -2\ln|u-u_0|
+ \ln 4\alpha$ with integration constant $u_0$, with the $+$ and $-$
signs corresponding to $u<u_0$ and $u>u_0$ respectively. Generically
then, $\del_u(\Im B) \ne 0$, so there exists a signature function
$F\colon U\subset \mathbb{R} \to \mathbb{R}$, with $F(-y) = -F(y)$
because both $\Im B$ and $-\Im B$ must be solutions, such that
\begin{equation}
	\Im \dot{B} e^{-\Re B}
	= F\left(\sin(\Im B)\right) .
\end{equation}
Given any particular solution $\Im B_F(u-u_0)$, symmetries dictate that the
general 2-parameter solution is
\begin{multline}
	B(u) = i\Im B_F(a(u-u_0)) - 2\ln|u-u_0| + \ln4\alpha
		\\ \text{or} \quad
	-i\Im B_F(a(u-u_0)) - 2\ln|u-u_0| + \ln4\alpha ,
\end{multline}
where the integration constants $a, u_0$ and the complex conjugation can
be absorbed by corresponding gauge transformations. The difference
between $u<u_0$ and $u>u_0$ can be absorbed by the $u\mapsto -u$
residual gauge transformation after fixing $u_0=0$. Hence, the signature
function $F$ completely describes the corresponding isometry class,
which we label by $G_{2c';\alpha,F}$. The choice $F(y) = 0$ implies
constant $\Im B = \gamma$,  which must be excluded to be consistent with
our genericity assumption.

The only possibility left to consider is that with constant $\Im B$.

Taking the invariant parameter limit $\alpha \to \oo$, accompanied by
the gauge parameter limit $u_0\to \oo$, such that $ 4\alpha/u_0^2 \to
1$, we recover the $G_{2c';\oo,F}$ subclass.

\item \label{itm:G2b} % 9
Consider the equation
\begin{equation} \label{eq:G2b-entry}
	\left(\frac{(\ln f_{,\zeta\zeta})_{,u}}{(\ln f_{,\zeta\zeta})_{,\zeta}}\right)_{\zeta}
	= i\lambda \left[\left(\frac{2}{(\ln f_{,\zeta\zeta})_{,\zeta}}\right)_{\zeta}
		+ 1\right] .
\end{equation}
The bracketed expression on the right-hand side is an invariant, while
the left-hand side is invariant up to changing by a factor of $a$ under
$u\mapsto u/a$. So to make the equation invariant, the parameter
$\lambda$ must also transform as $\lambda \mapsto a\lambda$. Thus,
$\lambda \in \mathbb{R}$ is not an invariant parameter, but the property
$\lambda = 0$ or $\lambda \ne 0$ is invariant. The first integral of the
above equation is
\begin{multline}
	\frac{(\ln f_{,\zeta\zeta})_{,u}}{(\ln f_{,\zeta\zeta})_{,\zeta}}
	- \frac{2i\lambda}{(\ln f_{,\zeta\zeta})_{,\zeta}}
	= i\lambda\zeta - e^{-i\lambda u} h'(u)
	\\ \iff
	e^{i\lambda u} (\ln e^{-2i\lambda u} f_{,\zeta\zeta})_{,u}
	= (\zeta i\lambda e^{i\lambda u} - h'(u))
	(\ln e^{-2i\lambda u} f_{,\zeta\zeta})_{,\zeta}
	\\ \iff
	\left[(\zeta e^{i\lambda u} - h(u))_{,\zeta} \del_u
	- (\zeta e^{i\lambda u} - h(u))_{,u} \del_\zeta\right]
	\ln e^{-2i\lambda u} f_{,\zeta\zeta} = 0 ,
\end{multline}
with implicitly defined integration parameter $h(u)$. The general
solution to the last equation is clearly
\begin{multline}
	\ln e^{-2i\lambda u} f_{,\zeta\zeta} = \ln \ff''(\zeta e^{i\lambda u} - h(u))
	\\ \iff
	f(\zeta,u) = \ff(\zeta e^{i\lambda u} - h(u)) + A_1(u) \zeta + A_0(u) ,
\end{multline}
for some holomorphic function $\ff(z)$ and integration parameters
$A_0(u)$, $A_1(u)$. We can set $h(u) = A_0(u) = 0$ by gauge freedom. For
this form of $f(\zeta,u)$, the value of the conditional curvature invariant
$\bM^{(\oo)}$ is given in~\eqref{eq:M-eval-G2b}.
%%% in~\eqref{eq:Moo-def} is
%%% \begin{equation}
%%% 	\bM^{(\oo)}
%%% 	= 2 \frac{\overline{\ff'''(\zeta e^{i\lambda u})}}{\overline{\ff''(\zeta e^{i\lambda u})}}
%%% 		e^{-i\lambda u} (\dot{A}_1(u) - i\lambda A_1(u)) \bell \bell \bell .
%%% \end{equation}
So setting $\bM^{(\oo)} = 0$ forces $\ff'''(z) = 0$ or $A_1(u) \zeta =
c \zeta e^{i\lambda u}$. The former possibility results in a
formfactor corresponding to the $G_5$ subclass, while in the latter
possibility the term $A_1(u) \zeta$ can be absorbed into the definition
of $\ff(z)$, which lands us in the $G_{2b}$ subclass.

In this branch of the classification, we assume now that $\ff'''(z) \ne
0$, and the above conditions hold with $\lambda \in \mathbb{R}$, so that
we are in the $G_{2b}$ subclass.

Suppose that
\begin{equation} \label{eq:G2b-generic-ineq}
	\left(\frac{(\ln f_{,\zeta\zeta})_{,u}}
		{(\ln f_{,\zeta\zeta})_{,\zeta}}\right)_{\zeta} \ne 0 .
\end{equation}
We consider separately the subfamily where instead the above holds as an
equality, which corresponds to either $\lambda = 0$ or $-\left(1/(\ln
f_{,\zeta\zeta})_{,\zeta}\right) = \frac{1}{2}$.

Comparing~\eqref{eq:G2b-entry} to the definition of the invariant
$\Xi^{(\oo)}$ in~\eqref{eq:Xioo-def}, we see that it implies that the
invariant $\Xi^{(\oo)} = i\lambda$, while the
inequality~\eqref{eq:G2b-generic-ineq} implies $\lambda \ne 0$, hence we
can set $\lambda = 1$ by setting the gauge parameter $a=1/\lambda$,
which we assume from now on. The residual gauge transformations that
preserve the formfactor $f(\zeta,u) = \ff(\zeta e^{iu})$ are $\zeta
\mapsto e^{i\beta}\zeta$, which induces $\ff(z) \mapsto \ff(e^{-i\beta}
z)$, $\zeta \mapsto \zeta + z_0 e^{-iu}$, which induces $\ff(z) \mapsto
\ff(z-z_0) - \overline{z_0} z + \frac{1}{2} |z_0|^2$, $v\mapsto v-2u\Re
w_0$ with $c(u)=\Im w_0$, which induces $\ff(z) \mapsto \ff(z) + w_0$,
and $\zeta \mapsto \bar{\zeta}$, which induces $\ff(z) \mapsto
\bar{\ff}(z)$.

Generically, $\del_z (\ff'''(z)^2/\ff''(z)^3) \ne 0$, so then there
exists a holomorphic function $F\colon U \subset \mathbb{C} \to
\mathbb{C}$ such that
\begin{equation} \label{eq:G2b-signature}
	\ln\frac{\ff'''(z)}{\ff''(z)}
	= F\left(\frac{\ff'''(z)^2}{\ff''(z)^3}\right) .
\end{equation}
The argument of $F(-)$ is an invariant because it coincides with the
value of the invariant $\Id^{(\oo)}$~\eqref{eq:Idoo-def}, which was
evaluated in~\eqref{eq:Ide-eval-G2b}. Neither the left-hand side
in~\eqref{eq:G2b-signature} nor $F(y)$ itself are entirely invariant, as
$\ff(z) \mapsto \ff(e^{-i\beta} z)$ induces $F(y) \mapsto F(y) - i\beta$
and $\ff(z) \mapsto \bar{\ff}(z)$ induces $F(y) \mapsto \bar{F}(y)$.
However, $\Re F(y)$ is invariant under these transformations and also
uniquely fixes $F(y)$ up to these ambiguities. Thus, we consider $\Re
F\colon U \subset \mathbb{C} \to \mathbb{R}$ as our signature function.

If $F'(y) = 1/(2y)$ or equivalently $F(y) = \frac{1}{2}\ln (2A y)$, then
the ODE~\eqref{eq:G2b-signature} no longer depends on $\ff'''(z)$ and
reduces to $\ff''(z) = 2A$, and hence $\ff'''(z) = 0$, which contradicts
our earlier hypothesis. Otherwise, the solution $\ff(z) = A z^2 + A_1 z
+ A_0$ recovers the $G_{6b}$ subclass.

Provided $F'(y) \ne 1/(2y)$, the ODE~\eqref{eq:G2b-signature} can be
solved for $\ff'''(z)$, so the usual existence and uniqueness theory
holds for it. If $\ff(z) = \cF(z)$ is a particular solution, then
symmetries dictate that the general 3-parameter solution is
\begin{equation}
	\ff(z) = \cF(z-z_0) + A_{1,0} (z-z_0) + A_{0,0} .
\end{equation}
While the $z_0$ and $A_{0,0}$ integration constants can be absorbed by
the residual gauge freedom, the parameter $A_{1,0}$ cannot be, and it
does contain some invariant information.
To be concrete, we uniquely fix%
	\footnote{There may be other ways to uniquely fixing the solution
	$\cF(z)$, for instance by different initial conditions at $z=0$ or by
	asyptotic conditions at $z=\oo$.} %
$\cF(z)$ from $F(y)$ by requiring that $\cF(z) = \frac{1}{2} z^2 +
O(z^3)$. The gauge transformation $F(y) \mapsto F(y) - i\beta$ induces
$\cF(z) \mapsto \cF(e^{-i\beta} z)$, $A_{1,0} \mapsto A_{1,0}
e^{-i\beta}$, and $F(y) \mapsto \bar{F}(y)$ induces $\cF(z) \mapsto
\overline{\cF}(z)$, $A_{1,0} \mapsto \overline{A_{1,0}}$. So only
$|A_{1,0}|$ is invariant in $A_{1,0}$. Using the
ODE~\eqref{eq:G2b-signature}, and computing the invariants $\Id^{(\oo)}$
and $\Ie^{(\oo)}$, from~\eqref{eq:Idoo-def} and~\eqref{eq:Ieoo-def}, as
well as the evaluation in~\eqref{eq:Ide-eval-G2b}, we find
\begin{align}
	e^{2F(\Id^{(\oo)})}/\Id^{(\oo)}  &= \ff''(z) = \cF''(z-z_0) , \\
	\frac{\Ie^{(\oo)}}{\Id^{(\oo)}} e^{F(\Id^{(\oo)})}
		&= -\overline{(z-z_0)} + \ff'(z)
	\notag \\
		&= -\overline{(z-z_0)} + A_{1,0} + \cF'(z-z_0) ,
\end{align}
where $z = \zeta e^{i\lambda u}$. Combining the two formulas, we can
extract the constant
\begin{align} \label{eq:AooF-def}
	c = |A_{1,0}|
	&= \left|\frac{\Ie^{(\oo)}}{\Id^{(\oo)}} e^{F(\Id^{(\oo)})}
		+ \overline{Z_b} - \cF'(Z_b)\right|
	=: \cA_{\oo,\Re F}(\Id^{(\oo)}, \Ie^{(\oo)}) , \\
	\text{where} \quad
	Z_b
	&= (\cF'')^{-1}\left(e^{2F(\Id^{(\oo)})}/\Id^{(\oo)}\right) .
\end{align}
While the individual terms entering formula $\cA_{\oo,\Re F}$ that extracts
$c = |A_{1,0}|$ are not invariant, since they depend on the choice of $F(y)$
for given $\Re F(y)$, it is straightforward to check that the whole
formula is independent of those choices and so defines an invariant.
Still, for given $\Re F(y)$, to make the formula really explicit, we
need to pick a compatible $F(y)$ and determine the unique solution
$\cF(z)$ from it.

The only possibilities left to consider are those with, separately,
$\del_z (\ff'''(z)^2/\ff''(z)^3) = 0$ and $\lambda = 0$.

\item \label{itm:G2b00} % 10
Within the $f(\zeta,u) = \ff(\zeta e^{i\lambda u})$ formfactor family
under the condition $\del_z(\ff'''(z)^2/(\ff''(z)^3) = 0$, let us
parametrize
\begin{equation} \label{eq:G2b00-entry}
	\frac{\ff'''(z)^2}{\ff''(z)^3} = - 4a^{-2} e^{-i\gamma} ,
\end{equation}
with constant $\alpha > 0$ and $\gamma \in [-\pi,\pi]$, which integrates
to
\begin{equation}
	\ff(z) = a^2 e^{i\gamma} \ln z + A_{1,0} z + A_{0,0} .
\end{equation}
We can set $A_{0,0} = 0$ and $a=1$ by residual gauge freedom, which then
makes the resulting value of $\lambda$ invariant. Further, we can
eliminate the $\ff(z) \mapsto \bar{\ff}(z)$ gauge freedom by choosing
$\gamma \in [0,\pi]$, and also eliminate the $\zeta \mapsto e^{i\beta}
\zeta$ gauge freedom by choosing $A_{1,0} = c \ge 0$. The resulting
formfactor takes the form
\begin{equation}
	f(\zeta,u) = e^{i\gamma} \ln \zeta e^{i\lambda u} + c \zeta e^{i\lambda u} ,
\end{equation}
or after applying another remaining gauge transformation
\begin{equation}
	f(\zeta,u) = e^{i\gamma} \ln \zeta + c \zeta e^{i\lambda} ,
\end{equation}
with $\gamma \in [0,\pi]$, $\lambda \in \mathbb{R}$ and $c \ge 0$
labelling the corresponding isometry class. When $c \ne 0$, we
label this isometry class as $G_{2b'';\gamma,\lambda, c}$.

The excluded value $c = |A_{1,0}| = 0$ lands us in the $G_{3b}$ subclass.

\item \label{itm:G3b} % 11
Within the formfactor subfamily
\begin{equation}
	f(\zeta,u) = e^{i\gamma} \ln \zeta + c \zeta e^{i\lambda} ,
\end{equation}
where $c = 0$, the $\lambda$ parameter
drops out and the formfactor reduces to
\begin{equation}
	f(\zeta,u) = e^{i\gamma} \ln \zeta .
\end{equation}
We can use the $\zeta \mapsto \bar{\zeta}$ transformation to restrict
$\gamma \in [0,\pi]$, which then becomes invariant. When $e^{i\gamma}
\ne \pm1$, we label this isometry class as $G_{3b^\circ;\gamma}$.

The excluded value of $e^{i\gamma} = \pm1$ lands us in the $G_{3b';\pm}$
subclass.

\item \label{itm:G3b0} % 12
In the formfactor subfamily
\begin{equation}
	f(\zeta,u) = \pm \ln \zeta ,
\end{equation}
the only invariant is the sign $\pm$. We label this isometry class as
$G_{3b';\pm}$.

\item \label{itm:G2a} % 13
Consider the invariant equation
\begin{equation} \label{eq:G2a-entry}
	\left((u-u_0) \frac{(\ln f_{,\zeta\zeta})_{,u}}{(\ln f_{,\zeta\zeta})_{,\zeta}}\right)_{,\zeta}
	= \left(\frac{2i\kappa-2}{(\ln f_{,\zeta\zeta})_{,\zeta}}\right)_{,\zeta}
		+ i\kappa ,
\end{equation}
where $\kappa\in \mathbb{R}$ is an invariant parameter. Its first
integral is
\begin{multline}
	(u-u_0) \frac{(\ln f_{,\zeta\zeta})_{,u}}{(\ln f_{,\zeta\zeta})_{,\zeta}}
		- \frac{2i\kappa-2}{(\ln f_{,\zeta\zeta})_{,\zeta}}
	= i\kappa \zeta - u^{-i\kappa + 1} h'(u)
	\\ \iff
	\frac{(u-u_0)^{i\kappa+1} (\ln (u-u_0)^{-2i\kappa + 2} f_{,\zeta\zeta})_{,u}}
		{(\ln (u-u_0)^{-2i\kappa + 2} f_{,\zeta\zeta})_{,\zeta}}
	= (\zeta i\kappa (u-u_0)^{i\kappa} - (u-u_0) h'(u))
	\\  \hspace{-14em} \iff
	\left[(\zeta (u-u_0)^{i\kappa} - h(u))_{,\zeta} (u-u_0)\del_u
	\right. \\ \left. {}
		- (u-u_0) (\zeta (u-u_0)^{i\kappa} - h(u))_{,u} \del_\zeta\right]
		\ln (u-u_0)^{-2i\kappa+2} f_{,\zeta\zeta} = 0 ,
\end{multline}
with implicitly defined integration parameter $h(u)$. The general
solution to the last equation is clearly
\begin{multline}
	\ln (u-u_0)^{-2i\kappa+2} f_{,\zeta\zeta}
		= \ln \ff''\left(\zeta (u-u_0)^{i\kappa} - h(u)\right)
	\\ \iff
	f(\zeta,u) = \frac{1}{(u-u_0)^2}
		\ff\left(\zeta (u-u_0)^{i\kappa} - h(u)\right)
		+ A_1(u) \zeta + A_0(u) ,
\end{multline}
for some holomorphic function $\ff(z)$ and integration parameters
$A_0(u)$, $A_1(u)$. We can set $h(u) = A_0(u) = u_0 = 0$ by gauge
freedom. For this form of $f(\zeta,u)$, the value of the curvature
invariant $\bM^{(i\kappa)}$ in~\eqref{eq:Mk-def} is given
in~\eqref{eq:M-eval-G2a}. So setting $\bM^{(i\kappa)} = 0$ forces
$\ff'''(z) = 0$ or $A_1(u) \zeta = A_{1,0} \zeta u^{i\kappa-2}$. The
former possibility results in a formfactor corresponding to $G_5$
subclass, while in the latter possibility the term $A_1(u)\zeta$ can be
absorbed into the definition of $\ff(z)$, which lands us in the $G_{2a}$
subclass.

In this branch of the classification, we assume now that $\ff'''(z) \ne
0$, and the above conditions hold with $\kappa \in \mathbb{R}$, so that
we are in the $G_{2a}$ subclass.

Comparing~\eqref{eq:G2a-entry} to the definition of the invariant
$\Xi^{(i\kappa)}$ in~\eqref{eq:Xik-def}, we see that it implies that the
invariant $\Xi^{(i\kappa)} = 1/u$. The residual gauge transformations
that preserve the formfactor $f(\zeta,u) = u^{-2} \ff(\zeta
u^{i\kappa})$ are $\zeta \mapsto e^{i\beta}\zeta$, which induces $\ff(z)
\mapsto \ff(e^{-i\beta} z)$, $\zeta \mapsto \zeta + z_0 u^{-i\kappa}$,
which induces $\ff(z) \mapsto \ff(z-z_0) + i\kappa(i\kappa-1)
\overline{z_0} z + \frac{1}{2} |z_0|^2$,
$v \mapsto v + 2u^{-1}\Re w_0$ with $c(u) = u^{-2}\Im w_0$, which
induces $\ff(z) \mapsto \ff(z) + w_0$, and $\zeta \mapsto \bar{\zeta}$,
which induces $\ff(z) \mapsto \bar{\ff}(z)$.

Generically, $\del_z(\ff'''(z)^2/\ff''(z)^3) \ne 0$, so then there
exists a holomorphic function $F\colon U \subset \mathbb{C} \to
\mathbb{C}$ such that
\begin{equation} \label{eq:G2a-signature}
	\ln \frac{\ff'''(z)}{\ff''(z)}
	= F\left(\frac{\ff'''(z)^2}{\ff''(z)^3}\right) .
\end{equation}
The argument of $F(-)$ is an invariant because it coincides with the
value of the invariant $\Id^{(i\kappa)}$ in~\eqref{eq:Idk-def}, which
was evaluated in~\eqref{eq:Ide-eval-G2a}. Neither the left-hand side
in~\eqref{eq:G2a-signature} nor $F(y)$ itself are entirely invariant, as
$\ff(z) \mapsto \ff(e^{-i\beta} z)$ induces $F(y) \mapsto F(y) - i\beta$
and $\ff(z) \mapsto \bar{\ff}(z)$ induces $F(y) \mapsto \bar{F}(y)$.
However, $\Re F(y)$ is invariant under these transformations and also
uniquely fixes $F(y)$ up to these ambiguities. Thus, we consider $\Re
F\colon U \subset \mathbb{C} \to \mathbb{R}$ as our signature function.

If $F'(y) = 1/(2y)$ or equivalently $F(y) = \frac{1}{2}\ln (2A y)$, then
the ODE~\eqref{eq:G2a-signature} no longer depends on $\ff'''(z)$ and
reduces to $\ff''(z) = 2A$, and hence $\ff'''(z) = 0$, which contradicts
our earlier hypothesis. Otherwise, the solution $\ff(z) = A z^2 + A_1 z
+ A_0$ recovers the $G_{6a}$ subclass.

Provided $F'(y) \ne 1/(2y)$, the ODE~\eqref{eq:G2a-signature} can be
solved for $\ff'''(z)$, so the usual existence and uniqueness theory
holds for it. If $\ff(z) = \cF(z)$ is a particular solution, then
symmetries dictate that the general 3-parameter solution is
\begin{equation}
	\ff(z) = \cF(z-z_0) + A_{1,0} (z-z_0) + A_{0,0} .
\end{equation}
While the $z_0$ and $A_{0,0}$ integration constants can be absorbed by
the residual gauge freedom, the parameter $A_{1,0}$ cannot be and does
contain some invariant information.
To be concrete, we uniquely fix%
	\footnote{There may be other ways to uniquely fixing the solution
	$\cF(z)$, for instance by different initial conditions at $z=0$ or by
	asyptotic conditions at $z=\oo$.} %
$\cF(z)$ from $F(y)$ by requiring that $\cF(z) = \frac{1}{2} z^2 +
O(z^3)$. The gauge transformation $F(y) \mapsto F(y) - i\beta$ induces
$\cF(z) \mapsto \cF(e^{-i\beta} z)$, $A_{1,0} \mapsto A_{1,0}
e^{-i\beta}$, and $F(y) \mapsto \bar{F}(y)$ induces $\cF(z) \mapsto
\overline{\cF}(z)$, $A_{1,0} \mapsto \overline{A_{1,0}}$. So only
$c = |A_{1,0}|$ is invariant contained in $A_{1,0}$. Using the
ODE~\eqref{eq:G2a-signature}, and computing the invariants $\Id^{(\oo)}$
and $\Ie^{(\oo)}$, from~\eqref{eq:Idk-def} and~\eqref{eq:Iek-def}, as
well as the evaluation in~\eqref{eq:Ide-eval-G2a}, we find
\begin{align}
	-e^{2F(\Id^{(i\kappa)})}/\Id^{(i\kappa)}  &= \ff''(z) = \cF''(z-z_0) , \\
	-\frac{\Ie^{(i\kappa)}}{\Id^{(i\kappa)}} e^{F(\Id^{(\oo)})}
		&= i\kappa (i\kappa-1)\overline{(z-z_0)} + \ff'(z)
	\notag \\
		&= i\kappa (i\kappa-1)\overline{(z-z_0)} + A_{1,0} + \cF'(z-z_0) ,
\end{align}
where $z = \zeta u^{i\kappa}$. Combining the two formulas, we can
extract the constant
\begin{align} \label{eq:AkF-def}
	c = |A_{1,0}|
	&= \left|\frac{\Ie^{(i\kappa)}}{\Id^{(i\kappa)}} e^{F(\Id^{(i\kappa)})}
		- i\kappa(i\kappa-1) \overline{Z_a} - \cF'(Z_a)\right|
	\notag \\
	&\quad =: \cA_{\kappa,\Re F}(\Id^{(i\kappa)}, \Ie^{(i\kappa)}) , \\
	\text{where} \quad
	Z_a
	&= (\cF'')^{-1}\left(-e^{2F(\Id^{(i\kappa)})}/\Id^{(i\kappa)}\right) .
\end{align}
While the individual terms entering formula $\cA_{\kappa,\Re F}$
that extracts $c = |A_{1,0}|$ are not invariant, since they depend on the
choice of $F(y)$ for given $\Re F(y)$, it is straightforward to check
that the whole formula is independent of those choices and so defines an
invariant. Still, for given $\Re F(y)$, to make the formula really
explicit, we need to pick a compatible $F(y)$ and determine the unique
solution $\cF(z)$ from it.

The only possibility left to consider is that with $\del_z
(\ff'''(z)^2/\ff''(z)^3) = 0$.

Suppose we reparametrize $\Re F(y) = \Re \tilde{F}(\kappa^2 y)$. Then
the solutions of the signature ODE~\eqref{eq:G2a-signature} scale as
$\ff(z) = \kappa^2 \tilde{\ff}(z)$, where $\tilde{\ff}(z)$ solves the
signature ODE for $\Re\tilde{F}(y)$. Taking the invariant parameter
limit $\kappa, c\to \oo$, accompanied by the gauge parameter
limits $u_0,\beta \to -\oo$, such that $(-u_0) e^{\beta/\kappa} \to 1$,
$\kappa/(-u_0) \to 1$ and $c/\kappa^2 \to \tilde{c}$, we
obtain
\begin{equation}
	\zeta e^{i\beta} (u-u_0)^{i\kappa}
		\to \zeta e^{iu}
	\quad \text{and} \quad
	\frac{1}{(u-u_0)^2} \ff(z)
		\to \tilde{\ff}(z) ,
\end{equation}
and hence we recover the $G_{2b^\circ;\tilde{c},\Re \tilde{F}}$
subclass.

\item \label{itm:G2a0} % 14
Within the $f(\zeta,u) = u^{-2} \ff(\zeta u^{i\kappa})$ formfactor
family under the condition $\del_z(\ff'''(z)^2/\ff''(z)^3) = 0$, let us
parametrize
\begin{equation} \label{itm:G2a0-entry}
	\frac{\ff'''(z)^2}{\ff''(z)^3} = -\alpha e^{-i\gamma} ,
\end{equation}
with constant $\alpha > 0$ and $\gamma \in [-\pi,\pi]$, which integrates
to
\begin{equation}
	\ff(z) = 4\alpha e^{i\gamma} \ln z + A_{1,0} z + A_{0,0} .
\end{equation}
We can set $A_{0,0} = 0$ by residual gauge freedom. Further, we can
eliminate the $\ff(z) \mapsto \bar{\ff}(z)$ gauge freedom by choosing
$\gamma \in [0,\pi]$, and also eliminate the $\zeta \mapsto e^{i\beta}
\zeta$ gauge freedom by choosing $A_{1,0} = c \ge 0$. The resulting
formfactor takes the form
\begin{equation}
	f(\zeta,u) = \frac{4\alpha}{u^2} e^{i\gamma} \ln \zeta e^{i\lambda u}
		+ c \zeta u^{i\kappa} ,
\end{equation}
or after applying another remaining gauge transformation
\begin{equation}
	f(\zeta,u) = \frac{4\alpha}{u^2} \left(e^{i\gamma} \ln \zeta
		+ c \zeta u^{i\kappa} \right) ,
\end{equation}
with $\alpha > 0$, $\gamma \in [0,\pi]$, $\kappa \in \mathbb{R}$ and
$c \ge 0$ labelling the corresponding isometry class. When
$c \ne 0$, we label this isometry class as
$G_{2a';\alpha,\gamma,\kappa, c}$.

The excluded value of $c = |A_{1,0}| = 0$ lands us in the $G_{3a}$ subclass.

Taking the invariant parameter limit $\kappa, \alpha \to \oo$,
accompanied by the gauge parameter limits $u_0,\beta \to -\oo$, such
that $4\alpha/u_0^2 \to 1$, $(-u_0) e^{\beta/\kappa} \to 1$ and
$\kappa/(-u_0) \to \lambda$, we recover the
$G_{2b'';\gamma,\lambda,c}$ subclass.

\item \label{itm:G3a} % 15
Within the formfactor subfamily
\begin{equation}
	f(\zeta,u) = \frac{4\alpha}{u^2} \left(e^{i\gamma} \ln \zeta + c \zeta u^{i\kappa}\right),
\end{equation}
where $c = 0$, the $\kappa$ parameter drops out and the formfactor
reduces to
\begin{equation}
	f(\zeta,u) = \frac{4\alpha}{u^2} e^{i\gamma} \ln \zeta .
\end{equation}
We can use the $\zeta \mapsto \bar{\zeta}$ transformation to restrict
$\gamma \in [0,\pi]$, which then becomes invariant, along with $\alpha$.
When $e^{i\gamma} \ne \pm1$, we label this isometry class as
$G_{3a^\circ;\gamma}$.

The excluded value of $e^{i\gamma} = \pm1$ lands us in the
$G_{3a';\pm\alpha}$ subclass.

Taking the invariant parameter limit $\alpha\to \oo$, accompanied by the
gauge parameter limit $u_0 \to \oo$, such that $4\alpha/u_0^2 \to 1$, we
recover the $G_{3b^\circ;\gamma}$ subclass.

\item \label{itm:G3a0} % 16
In the formfactor subfamily
\begin{equation}
	f(\zeta,u) = \pm \frac{4\alpha}{u^2} \ln \zeta ,
\end{equation}
the only invariants are the sign $\pm$ and $\alpha>0$. We label this
isometry class as $G_{3a';\pm\alpha}$.

Taking the invariant parameter limit $\alpha\to \oo$, accompanied by the
gauge parameter limit $u_0 \to \oo$, such that $4\alpha/u_0^2 \to 1$, we
recover the $G_{3b';\pm}$ subclass.

\item \label{itm:G2b0} % 17
Suppose that within the $G_{2b}$ subclass, the following invariant
condition holds
\begin{equation} \label{eq:G2b0-entry}
	\left(\frac{(\ln f_{,\zeta\zeta})_{,u}}{(\ln f_{,\zeta\zeta})_{,\zeta}}\right)_{\zeta}
	= 0 .
\end{equation}
As discussed earlier, it corresponds to either $-(1/(\ln
f_{,\zeta\zeta})_{,\zeta}) = \frac{1}{2}$, which lands us in the
$G_{2b''}$ subclass, or $\lambda=0$. In this branch of the
classification we consider the latter option, which corresponds to the
formfactor family
\begin{equation}
	f(\zeta,u) = \ff(\zeta) .
\end{equation}
The evaluation~\eqref{eq:Ide-eval-G2b} of the invariant~$\Id^{(\oo)}$
with $\lambda=0$ simplifies to
\begin{equation}
	\Ie^{(\oo)} = \frac{\ff'''(z) \ff'(z)}{\ff''(z)^2} ,
\end{equation}
where $z = \zeta$. Generically $\del_z (\ff'''(z) \ff'(z)/\ff''(z)^2) =
0$, so then there exists a holomorphic function $F\colon U\subset
\mathbb{C} \to \mathbb{C}$ such that
\begin{equation} \label{eq:G2b0-signature}
	\ln \frac{\ff'''(z)}{\ff''(z)}
		= F\left(\frac{\ff'''(z) \ff'(z)}{\ff''(z)^2}\right) .
\end{equation}
Neither the left-hand side above nor $F(y)$ are entirely invariant, as
$\ff(z) \mapsto \ff(e^{-i\beta} z)$ induces $F(y) \mapsto F(y) - i\beta$
and $\ff(z) \mapsto \bar{\ff}(z)$ induces $F(y) \mapsto \bar{F}(y)$.
However, $\Re F(y)$ is invariant under these transformations and also
uniquely fixes $F(y)$ up to these ambiguities. Thus, we consider $\Re
F\colon U \subset \mathbb{C} \to \mathbb{R}$ as our signature function.

If $F'(y) = 1/y$ or equivalently $F(y) = \ln(\lambda y)$, then the
ODE~\eqref{eq:G2b0-signature} no longer depends on $\ff'''(z)$ and
reduces to $\ff''(z) = \lambda \ff'(z)$, and hence $\ff'''(z)
\ff'(z)/\ff''(z)^2 = 1$, which contradicts our genericity hypothesis. If
$F'(y) = 1/(2-y)$, or equivalently $F(y) = \ln\frac{4c}{2-y}$,
then the ODE~\eqref{eq:G2b0-signature} has the general 3-parameter
solution
\begin{equation}
	\ff(z) = A(\ln(z-z_0) + c z + A_{0,0}) ,
\end{equation}
which lands us in the $G_{2b''}$ subclass, as well as a singular
2-parameter solution
\begin{equation}
	\ff(z) = e^{4c(z-z_0)} + A_{0,0} ,
\end{equation}
which also implies $\ff'''(z) \ff'(z)/\ff''(z)^2 = 1$ and hence
contradicts the genericity hypothesis.

Provided $F'(y) \ne 1/y$, the ODE~\eqref{eq:G2b0-signature} can be solved for $\ff''(z)$, so
the usual existence and uniqueness theory holds for it. If $\ff(z) =
\cF(z)$ is a particular solution, then symmetries dictate that the
general 3-parameter solution is
\begin{equation}
	\ff(z) = A \cF(z-z_0) + A_{0,0} .
\end{equation}
Writing $A = a^2 e^{i\gamma}$ with $a>0$ and $\gamma\in[-\pi,\pi]$, the
$a$ and $A_{0,0}$ integration constants can be absorbed by the residual
gauge freedom. To be concrete, we uniquely fix%
	\footnote{There may be other ways to uniquely fixing the solution
	$\cF(z)$, for instance by different initial conditions at $z=0$ or by
	asyptotic conditions at $z=\oo$.} %
$\cF(z)$ from $F(y)$ by requiring that $\cF(z) = \frac{1}{2} z^2 +
O(z^3)$. The gauge transformation $F(y) \mapsto F(y) - i\beta$ induces
$\cF(z) \mapsto \cF(e^{-i\beta} z)$, and $F(y) \mapsto \bar{F}(y)$
induces $\cF(z) \mapsto \overline{\cF}(z)$, $\gamma \mapsto -\gamma$. So
only $|\gamma| \in [0,\pi]$ is invariant among the integration
constants. Using the signature ODE~\eqref{eq:G2b0-signature}, and
computing the invariants $J$ and $\Ie^{(\oo)}$, from~\eqref{eq:J-def}
and~\eqref{eq:Ieoo-def}, we find
\begin{align}
	\Ie^{(\oo)} e^{-F(\Ie^{(\oo)})}
		&= \frac{\ff'(z)}{\ff''(z)}
		= \frac{\cF'(z-z_0)}{\cF''(z-z_0)} ,
	\\
	J e^{4\Im F(\Ie^{(\oo)})}
		&= \frac{\ff''(z)}{\overline{\ff''(z)}}
		= e^{2i\gamma} \frac{\cF''(z-z_0)}{\overline{\cF''(z-z_0)}} .
\end{align}
Combining the two formulas, we can extract the constant
\begin{align} \label{eq:GF-def}
	\cos\gamma
	&= \Re \left[J e^{4\Im F(\Ie^{(\oo)})}
		\frac{\overline{\cF''(Z_{b'})}}{\cF''(Z_{b'})}\right]^{1/2}
	=: \cG_{\Re F}(\Ie^{(\oo)}) ,
	\\ \text{where} \quad
	Z_{b'} &= \left[\frac{\cF'(-)}{\cF''(-)}\right]^{-1}
			\left(\Ie^{(\oo)} e^{-F(\Ie^{(\oo)})}\right) .
\end{align}
Note that the value of $\cos\gamma$ fixes a unique $\gamma \in [0,\pi]$,
which we combine with the square root prescription $[e^{2i\gamma}]^{1/2}
= e^{i\gamma}$ with $\gamma \in [0,\pi]$. While the individual terms
entering formula $\cG_{\Re F}$ that extracts $\cos\gamma$ are
not invariant, since they depend on the choice of $F(y)$ for given $\Re
F(y)$, it is straightforward to check that the whole formula is
independent of those choices and so defines an invariant. Still, for
given $\Re F(y)$, to make the formula really explicit, we need to pick a
compatible $F(y)$ and determine the unique solution $\cF(z)$ from it.

The only possibilities left to consider are those with constant
$\ff'''(z) \ff'(z)/\ff''(z)^2 = \frac{k-2}{k-1}$, separately for $k=\oo$
and $k\ne \oo$.

\item \label{itm:G3c} % 18
Within the $f(\zeta,u) = \ff(\zeta)$ formfactor family, consider the
invariant condition
\begin{equation}
	\Ie^{(\oo)} = \frac{\ff'''(z) \ff'(z)}{\ff''(z)^2} = 1 ,
\end{equation}
where we have replaced $z = \zeta$, which integrates to the general
$3$-parameter solution
\begin{equation}
	\ff(z) = e^{2\lambda(z-z_0)} + A_{0,0}.
\end{equation}
We can set the integration constants $z_0 = A_{0,0} = 0$ by residual
gauge freedom, as well as use it to select $\lambda > 0$, whose value
then becomes the only invariant of this isometry class, which we label
as $G_{3c;\lambda}$.

\item \label{itm:G2b-zk} % 19
Within the $f(\zeta,u) = \ff(\zeta)$ formfactor family, consider the
invariant condition (with $k\ne \oo$)
\begin{equation}
	\Ie^{(\oo)} = \frac{\ff'''(z) \ff'(z)}{\ff''(z)^2} = \frac{k-2}{k-1} ,
\end{equation}
where we have replaced $z = \zeta$, which integrates to the general
$3$-parameter solution
\begin{equation}
	\ff(z) = \begin{cases}
		a^2 e^{i\gamma} (z-z_0)^{k} + A_{0,0} & (k\ne 0,1) \\
		a^2 e^{i\gamma} \ln(z-z_0) + A_{0,0} & (k=0)
	\end{cases} .
\end{equation}
When $k=0$, we recover the $G_{2b'''}$ subclass,
while the value $k = 1$ is not allowed by the previous formula. When
$k=2$ we recover the $G_{6b}$ subclass. When $k\not\in i\mathbb{R} \cup
\{1,2\}$, we can set $a^2 e^{i\gamma} = 1$ and $z_0 = A_{0,0} = 0$ by
gauge freedom. We label the resulting isometry class by $G_{2b''';k}$.

The only possibility left to consider is that with $k \in i\mathbb{R}$.

\item \label{itm:G3d} % 20
Within the formfactor family
\begin{equation}
	f(\zeta,u) = a^2 e^{i\gamma} \zeta^{k} ,
\end{equation}
with $k=2i\kappa$, $\kappa \in \mathbb{R} \setminus \{0\}$, we can set
$a^2=1$ by residual gauge freedom, while by exploiting the $\zeta
\mapsto \bar{\zeta}$ gauge transformation, we can restrict $\gamma \in
[0,\pi]$, which then becomes an invariant. We label the corresponding
isometry class as $G_{3d;\gamma,\kappa}$.

Taking the invariant parameter limit $\gamma,\kappa \to \oo$,
accompanied by the gauge parameter limit $a, z_0 \to \oo$, such that
$\gamma - 2i\ln a + 2\kappa\ln (-z_0) \to 0$ and $\kappa/(-z_0) =
-i\lambda$, we obtain
\begin{equation}
	e^{i\gamma} \zeta^{2i\kappa} \to e^{2\lambda\zeta}
\end{equation}
and hence we recover the $G_{3c;\lambda}$ subclass.

\end{enumerate}
Together with the data in Figure~\ref{fig:pp-wave-classes} and
Table~\ref{tab:pp-wave-isom-classes} this completes the classification
of local isometry classes of regular highly symmetric pp-wave
geometries.
\end{proof}

\setlength{\LTcapwidth}{\textwidth}%
\begin{longtable}{lcc}
\caption{IDEAL characterization of highly symmetric
	pp-wave isometry classes to accompany Figure~\ref{fig:pp-wave-flowchart} and
	Theorem~\ref{thm:ideal-pp-wave-isom}. The details of the notation and
	the decision logic are contained in the statement and proof of the
	theorem.}
\label{tab:pp-wave-ideal}\\
	class & parameters & inequalities/equations \\
	\hline\hline
	\endfirsthead
	\caption[]{(continued)}\\
	class & parameters & inequalities/equations \\
	\hline\hline
	\endhead
	\noalign{\smallskip}
	$G_{6a;\alpha,\kappa}$: &
		$\alpha>0$, $\kappa\ge 0$ &
		$|\bK|^2 = 0$,
		$\Ia^{(2)} = (i\kappa - 1) \alpha^{-1/2}$
	\\ \noalign{\smallskip}\hline\noalign{\smallskip}
	$G_{6b;\lambda}$: &
		$\lambda\ge 0$ &
		$\begin{gathered}
			|\bK|^2 = 0 , \quad
			\Ia^{(2)} = 2i\lambda
		\end{gathered}$
	\\ \noalign{\smallskip}\hline\noalign{\smallskip}
	$G_{5';\alpha,F}$: &
		$\oo\ge\alpha>0$, $F(y)$ &
		$\nabla\Im \Ia^{(2)} \ne 0$
			\\ \noalign{\smallskip}\cline{3-3}\noalign{\smallskip}
		& &
		$\begin{gathered}
			|\bK|^2 = 0 , \quad
			\Re \Ia^{(2)} = -\alpha^{-1/2} , \\
			\Im \Ib^{(2)} = F(\Im \Ia^{(2)})
		\end{gathered}$
	\\ \noalign{\smallskip}\hline\noalign{\smallskip}
	$G_{5^\circ;F}$: &
		$F(y)$ &
		$\nabla\Re \Ia^{(2)} \ne 0$
			\\ \noalign{\smallskip}\cline{3-3}\noalign{\smallskip}
		& &
		$\begin{gathered}
			|\bK|^2 = 0, \\
			\begin{pmatrix}
				\Re \Ib^{(2)} \\ (\Im \Ia^{(2)})^2
			\end{pmatrix} = F(\Re \Ia^{(2)})
		\end{gathered}$
	\\ \noalign{\smallskip}\hline\noalign{\smallskip}
	$G_{3a';\pm\alpha}$: &
		$\alpha>0$, $\{\pm\}$ &
		$|\bK|^2 \ne 0$
			\\ \noalign{\smallskip}\cline{3-3}\noalign{\smallskip}
		& &
		$\begin{gathered}
			\Ic = 1/2 , \quad
			\bN  = 0 , \\
			\Re\Ia^{(0)} = -\alpha^{-1/2} , \\
			J = 1 , \quad
			\Id^{(0)} = \pm\alpha^{-1/2}
		\end{gathered}$
	\\ \noalign{\smallskip}\hline\noalign{\smallskip}
	$G_{3a^\circ;\alpha,\gamma}$: &
		$\alpha>0$, $\gamma \in (0,\pi)$ &
		$|\bK|^2 \ne 0$
			\\ \noalign{\smallskip}\cline{3-3}\noalign{\smallskip}
		& &
		$\begin{gathered}
			\Ic = 1/2 , \quad
			\bN  = 0 , \\
			\Re\Ia^{(0)} = -\alpha^{-1/2} , \quad
			J = e^{2i\gamma}
		\end{gathered}$
	\\ \noalign{\smallskip}\hline\noalign{\smallskip}
	$G_{3b';\pm}$: &
		$\{\pm\}$ &
		$|\bK|^2 \ne 0$
			\\ \noalign{\smallskip}\cline{3-3}\noalign{\smallskip}
		& &
		$\begin{gathered}
			\Ic = 1/2 , \quad
			\bN  = 0 , \\
			\Re\Ia^{(0)} = 0 , \quad
			J = 1 , \\
			\exists \bell: -2\bD = \pm \bell \bell
		\end{gathered}$
	\\ \noalign{\smallskip}\hline\noalign{\smallskip}
	$G_{3b^\circ;\gamma}$: &
		$\gamma\in (0,\pi)$ &
		$|\bK|^2 \ne 0$
			\\ \noalign{\smallskip}\cline{3-3}\noalign{\smallskip}
		& &
		$\begin{gathered}
			\Ic = 1/2 , \quad
			\bN  = 0 , \\
			\Re\Ia^{(0)} = 0 , \quad
			J = e^{2i\gamma}
		\end{gathered}$
	\\ \noalign{\smallskip}\hline\noalign{\smallskip}
	$G_{3c;\lambda}$: &
		$\lambda>0$ &
		$|\bK|^2 \ne 0$, $\Ic \ne 1/2$
			\\ \noalign{\smallskip}\cline{3-3}\noalign{\smallskip}
		& &
		$\begin{gathered}
			\fL = 0 , \quad
			\bM^{(\oo)} = 0 , \\
			\Ie^{(\oo)} = 1 , \quad
			|\bK|^2 = 4\lambda^2
		\end{gathered}$
	\\ \noalign{\smallskip}\hline\noalign{\smallskip}
	$G_{3d;\gamma;\kappa}$: &
		$\gamma \in [0,\pi]$, $\kappa\ge 0$ &
		$|\bK|^2 \ne 0$, $\Ic \ne 1/2$
			\\ \noalign{\smallskip}\cline{3-3}\noalign{\smallskip}
		& &
		\footnotesize$\begin{gathered}
			\fL = 0 , \quad
			\bM^{(\oo)} = 0 , \quad
			\Ie^{(\oo)} = \frac{2(i\kappa - 1)}{2i\kappa-1} , \\
			J = e^{2i\gamma} \frac{(2i\kappa-1) (i\kappa+1)^2}{(2i\kappa+1) (i\kappa-1)^2}
				\left|\frac{\bK}{2(i\kappa-1)}\right|^{-4i\kappa}
		\end{gathered}$
	\\ \noalign{\smallskip}\hline\noalign{\smallskip}
	$G_{2b''';k}$: &
		\multirow[t]{2}{*}{\parbox[t]{5em}{\centering
			$k \in \mathbb{C}$,\\
			$\Re k \ne 0$, $k\ne 1,2$}}
		&
		$|\bK|^2 \ne 0$, $\Ic \ne 1/2$
			\\ \noalign{\smallskip}\cline{3-3}\noalign{\smallskip}
		& &
		$\begin{gathered}
			\fL = 0 , \quad
			\bM^{(\oo)} = 0 , \quad
			\Ie^{(\oo)} = \frac{k-2}{k-1}
		\end{gathered}$
	\\ \noalign{\smallskip}\hline\noalign{\smallskip}
	$G_{2b'';\gamma,\lambda,c}$: &
		\multirow[t]{2}{*}{\parbox[t]{7.3em}{\centering
			$\lambda\in \mathbb{R}$, $\gamma\in [0,\pi]$,\\[2pt]
			$c > 0$}}
		&
		$|\bK|^2 \ne 0$
			\\ \noalign{\smallskip}\cline{3-3}\noalign{\smallskip}
		& &
		\small$\begin{gathered}
			\Id = 1/2 , \quad
			\fL = 0, \quad
			|N| = c |\bK| \\
			\left(\nabla N - \frac{N}{2} \overline{\bK}\right)^{\otimes 2}
				= 2\lambda e^{-i\gamma \sgn\lambda} N^2 \bD^{\dagger} , \\
			\begin{cases}
				\bD = 2e^{i\gamma} \overline{\bD}
					& \text{if $e^{i\gamma} \ne \pm 1$} \\
				-2\bD = \pm\bell\bell
					& \text{if $e^{i\gamma} = \pm 1$}
			\end{cases}
		\end{gathered}$
	\\ \noalign{\smallskip}\hline\noalign{\smallskip}
	$G_{2a';\alpha,\gamma,\kappa,c}$: &
		\multirow[t]{2}{*}{\parbox[t]{7.3em}{\centering
			$\alpha>0$, $\gamma\in [0,\pi]$,\\[2pt]
			$c > 0$}}
		&
		$|\bK|^2 \ne 0$, $\fL\ne 0$
			\\ \noalign{\smallskip}\cline{3-3}\noalign{\smallskip}
		& &
		$\begin{gathered}
			\Id = 1/2 , \quad
			\bM^{(i\kappa)} = 0 , \\
			|N| = c |\bK|
		\end{gathered}$
	\\ \noalign{\smallskip}\hline\noalign{\smallskip}
	$G_{2a^\circ;\kappa,c,\Re F}$: &
		\multirow[t]{2}{*}{\parbox[t]{7.3em}{\centering
			$\kappa>0$, $c>0$,\\[2pt]
			$\Re F(z)$}}
		&
		$|\bK|^2 \ne 0$,
		$\Ic\ne 1/2$,
		$\fL \ne 0$,
		\\ &&
		$\nabla\frac{\Ic}{(\Ic-\frac{1}{2})} \ne 0$,
		$\nabla \Id^{(i\kappa)} \ne 0$
			\\ \noalign{\smallskip}\cline{3-3}\noalign{\smallskip}
		& &
		$\begin{gathered}
			2\nabla \bL^{(i\kappa)} = \bL^{(i\kappa)} \bL^{(i\kappa)} , \quad
			\bM^{(i\kappa)} = 0 , \\
			\ln(|\bK|^2) = 2\Re F(-\Id^{(i\kappa)}) , \\
			\cA_{\kappa,\Re F}(\Id^{(i\kappa)}, \Ie^{(i\kappa)}) = c
		\end{gathered}$
	\\ \noalign{\smallskip}\hline\noalign{\smallskip}
	$G_{2b^\circ;c,\Re F}$: &
		$c>0$, $\Re F(z)$ &
		$|\bK|^2 \ne 0$,
		$\Ic\ne 1/2$,
		$\fL \ne 0$,
		\\ &&
		$\nabla\frac{\Ic}{(\Ic-\frac{1}{2})} \ne 0$,
		$\nabla \Id^{(\oo)} \ne 0$
			\\ \noalign{\smallskip}\cline{3-3}\noalign{\smallskip}
		& &
		$\begin{gathered}
			\Re \bL^{(\oo)} = 0 , \quad
			\nabla \bL^{(\oo)} = 0 , \\
			\bM^{(\oo)} = 0 , \\
			\ln(|\bK|^2) = 2\Re F(\Id^{(\oo)}) , \\
			\cA_{\oo,\Re F}(\Id^{(\oo)}, \Ie^{(\oo)}) = c
		\end{gathered}$
	\\ \noalign{\smallskip}\hline\noalign{\smallskip}
	$G_{2b';\gamma,\Re F}$: &
		$\alpha\in [0,\pi]$, $\Re F(z)$ &
		$|\bK|^2 \ne 0$,
		$\Ic\ne 1/2$,
		$\nabla \Ie^{(\oo)} \ne 0$
			\\ \noalign{\smallskip}\cline{3-3}\noalign{\smallskip}
		& &
		$\begin{gathered}
			\fL = 0 , \quad
			\bM^{(\oo)} = 0 , \\
			\ln(|\bK|^2) = 2\Re F(\Ie^{(\oo)}) , \\
			\cG_{\Re F}(\Ie^{(\oo)}) = \cos\gamma
		\end{gathered}$
	\\ \noalign{\smallskip}\hline\noalign{\smallskip}
	$G_{2c';\alpha,F}$: &
		$\oo\ge\alpha>0$, $F(y)$ &
		$|\bK|^2 \ne 0$,
		$\nabla \Im [(\Ia^{(0)})^2/\Id^{(0)}] \ne 0$
			\\ \noalign{\smallskip}\cline{3-3}\noalign{\smallskip}
		& &
		$\begin{gathered}
			\Re \Ia^{(0)} = -\alpha^{-1/2} , \\
			\Im \Ia^{(0)} = F\left(\Im [(\Ia^{(0)})^2/\Id^{(0)}]\right)
		\end{gathered}$
	\\ \noalign{\smallskip}\hline\noalign{\smallskip}
	$G_{2c^\circ;F}$: &
		$F(y)$ &
		$|\bK|^2 \ne 0$,
		$\nabla \Re \Ia^{(0)} \ne 0$
			\\ \noalign{\smallskip}\cline{3-3}\noalign{\smallskip}
		& &
		$\begin{gathered}
			\begin{pmatrix}
				\Re \Ib^{(0)} \\
				\Re [(\Ia^{(0)})^2/\Id^{(0)}]
			\end{pmatrix} = F(\Re\Ia^{(0)})
		\end{gathered}$
	\\ \noalign{\smallskip}\hline\hline
\end{longtable}

\begin{figure}
\centering
\rotatebox[origin=c]{90}{%
\begin{minipage}{.99\textheight} \centering
	% TiKZ style file generated by TikZiT. You may edit this file manually,
% but some things (e.g. comments) may be overwritten. To be readable in
% TikZiT, the only non-comment lines must be of the form:
% \tikzstyle{NAME}=[PROPERTY LIST]

% Node styles
\tikzstyle{start}=[fill=white, draw=black, shape=trapezium, trapezium left angle=75, trapezium right angle=-75, very thick]
\tikzstyle{decide}=[fill=white, draw=black, shape=chamfered rectangle]
\tikzstyle{leaf}=[fill=white, draw=black, shape=rectangle, rounded corners, thick]
\tikzstyle{info}=[fill=white, draw=black, shape=rectangle]
\tikzstyle{yn}=[fill=white, shape=circle, inner sep=2pt]
\tikzstyle{num}=[fill={{gray!20}}, draw=black, shape=circle, inner sep=1pt, outer sep=3pt]

% Edge styles
\tikzstyle{flow}=[->, >={Stealth[length=1.5ex]}, thick]
\tikzstyle{impl}=[->, >=stealth, double]
\tikzstyle{dd}=[-, dashed]
\tikzstyle{line}=[-, thick]
	\resizebox{.99\textheight}{!}{\tikzfig{pp-chart-sym}}
	\caption{Flowchart for the IDEAL characterization of vacuum pp-waves
	of high symmetry, accompanied by Table~\ref{tab:pp-wave-ideal} and
	Theorem~\ref{thm:ideal-pp-wave-isom}.}
	\label{fig:pp-wave-flowchart}
\end{minipage}
}
\end{figure}

\begin{thm}[IDEAL characterization of pp-waves] \label{thm:ideal-pp-wave-isom}
Consider a point $x\in M$, where $(M,g)$ is a Lorentzian metric. Then,
locally at $x$, $(M,g)$ belongs to a given one of the highly symmetric
regular pp-wave isometry types from
Theorem~\ref{thm:pp-wave-isom-classes} if and only if, on a
neighborhood of $x$, the corresponding tensor identities from
Table~\ref{tab:pp-wave-ideal} hold. These identities can isolate every
individual isometry class with one small exception (they cannot
distinguish between $G_{3d;0,\kappa}$ and $G_{3d;\pi,\kappa}$). The
decision process for recognizing and distinguishing these isometry
classes is given by the flowchart in Figure~\ref{fig:pp-wave-flowchart}.
% XXX: add information about which invariants distinguishg the symmetry
%      orbits that $x\in M$ belongs to?
\end{thm}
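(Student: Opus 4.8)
The plan is to prove both directions of the equivalence by traversing the flowchart in Figure~\ref{fig:pp-wave-flowchart}, whose branching structure mirrors that of Figure~\ref{fig:pp-wave-classes} from the proof of Theorem~\ref{thm:pp-wave-isom-classes}. The root of the argument is Proposition~\ref{prop:ideal-pp-wave}: the identities $R_{ab}=0$, $C^\dagger_{ab}{}^{cd}C^\dagger_{cdef}=0$, $T_{abc[d}T_{e]fgh;i}=0$ hold at $x$ if and only if a neighborhood of $x$ is a vacuum pp-wave, with $\bC^\dagger\ne 0$ excluding the flat case. From there I would proceed inductively along the flowchart: at every internal (decision) node the branching condition is a genuine IDEAL or conditional-invariant identity — $|\bK|^2=0$ (equivalently $\bT\wedge\bK=0$ by~\eqref{K-cond-G5G6}), $\Ic=\tfrac12$, $\bN=0$, $\fL=0$, $\bM^{(\oo)}=0$, $\bM^{(i\kappa)}=0$, $\Id=\tfrac12$, or a non-degeneracy condition such as $\nabla\Re\Ia^{(2)}\ne 0$ or $\nabla\Id^{(\oo)}\ne0$ — and, using the coordinate expressions collected in Section~\ref{sec:curv-inv} and~\ref{sec:notation}, it is equivalent to the coordinate condition used at the matching node in the proof of Theorem~\ref{thm:pp-wave-isom-classes}. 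This identifies ``reaching a given leaf'' with ``the formfactor $f(\zeta,u)$ lies in the corresponding subfamily of~\eqref{eq:f-cases}.''

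At each leaf I would verify the two implications separately. For the \emph{only if} direction, a metric in the stated isometry class has $f(\zeta,u)$ in the tabulated canonical form (Table~\ref{tab:pp-wave-isom-classes}), and the full conjunction of conditions labelling that leaf in Figure~\ref{fig:pp-wave-flowchart} (equivalently, the row of Table~\ref{tab:pp-wave-ideal}) is checked by direct substitution into the evaluations already recorded — \eqref{eq:Iab2-eval} for $G_{6a,b}$ and $G_{5',5^\circ}$, \eqref{eq:Iab0-eval}--\eqref{eq:N-eval-G2c} for the $G_{3a,b}$ and $G_{2c}$-type leaves, and \eqref{eq:Ide-eval-G2a}--\eqref{eq:M-eval-G2b} together with \eqref{eq:Id-eval-G3a0}--\eqref{eq:J-eval-G2b} for the $G_{2a,b}$ and $G_{3c,d}$ leaves — combined with the root-branch prescriptions fixed after~\eqref{eq:Ib2-def}. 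For the \emph{if} direction, once the decision chain has placed $f$ in the right subfamily, the leaf equations are precisely the invariant relations integrated in the corresponding numbered step of the proof of Theorem~\ref{thm:pp-wave-isom-classes}: the parameter equations ($\Ia^{(2)}=(i\kappa-1)\alpha^{-1/2}$, $\Ie^{(\oo)}=\tfrac{k-2}{k-1}$, $J=e^{2i\gamma}\cdots$, $|N|=c|\bK|$, and so on) and the signature relations ($\Im\Ib^{(2)}=F(\Im\Ia^{(2)})$, $\ln(|\bK|^2)=2\Re F(\Id^{(\oo)})$ or $2\Re F(-\Id^{(i\kappa)})$ supplemented by $\cA_{\cdot,\Re F}=c$ or $\cG_{\Re F}=\cos\gamma$, etc.) translate into an ODE on $A(u)$, $B(u)$ or $\ff(z)$ whose general solution, modulo the residual gauge freedom~\eqref{eq:pp-wave-residual-gauge}, is exactly that one isometry class. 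Here the \emph{regular} hypothesis of Definition~\ref{def:regular} does the two jobs flagged after it: it keeps $x$ off the branch points of the implicitly defined $\Ia^{(0,2)}$, $\Ib^{(0,2)}$ so the chosen fractional-power branches extend continuously over the needed neighborhood, and it supplies the non-degeneracy (some $\del_u$ or $\del_z$ of a signature argument $\ne 0$) that lets the implicit function theorem produce $F$ on a suitable domain $U$.

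Finally I would settle the single advertised exception. At the $G_{3d}$ leaf one has $\fL=0$, $\bM^{(\oo)}=0$, $\Ie^{(\oo)}=\tfrac{2(i\kappa-1)}{2i\kappa-1}$, and the only remaining invariant carrying information about $\gamma$ is $J$, which by~\eqref{eq:J-eval-G3d} depends on $\gamma$ solely through the factor $e^{2i\gamma}$; hence $\gamma=0$ and $\gamma=\pi$ produce identical values of every available invariant, so $G_{3d;0,\kappa}$ and $G_{3d;\pi,\kappa}$ cannot be separated — in contrast to $G_{3a'},G_{3b'}$, where the auxiliary $\bD$-condition ($\exists\,\bell:-2\bD=\pm\bell\bell$) resolves the analogous sign. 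I expect the main obstacle to be not any single calculation but the global consistency and exhaustiveness: one must confirm that the leaves of Figure~\ref{fig:pp-wave-flowchart} partition the highly symmetric regular pp-waves in exactly the same way as the leaves of Figure~\ref{fig:pp-wave-classes} (inherited from Theorem~\ref{thm:pp-wave-isom-classes}), that at every node the stated tensor condition is well-defined under the accumulated non-vanishing hypotheses (denominators $|\bK|^2$, $\Ic-\tfrac12$, $k(\Ic-\tfrac12)-\Ic$ nonzero), and that the branch choices for all the fractional-power conditional invariants can be made simultaneously coherent along every path from the root to a leaf.
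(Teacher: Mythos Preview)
Your proposal is correct and follows essentially the same approach as the paper: traverse the flowchart in Figure~\ref{fig:pp-wave-flowchart} node by node, translating each tensorial condition into its coordinate counterpart via the evaluations in Section~\ref{sec:curv-inv} and matching to the corresponding step in the proof of Theorem~\ref{thm:pp-wave-isom-classes}, then at each leaf recover the invariant parameters and signature functions from the stated relations. The paper's proof carries this out explicitly through nineteen numbered steps (one per node), including the same treatment of the $G_{3d}$ exception via $e^{2i\gamma}$ in $J$, so your plan aligns with it in both structure and substance.
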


\begin{proof}
We will proceed by going through the flowchart in
Figure~\ref{fig:pp-wave-flowchart} step by step, in the order of the
numbers indicated next to each node. In the corresponding numbered part
of the proof, for decision nodes, we decode the significance of the
given tensorial condition by mapping it to the corresponding branch
point in the proof of Theorem~\ref{thm:pp-wave-isom-classes} (we
denote such references by IC.\#), and also argue why the arrows leaving
the node exhaust all the possibilities. In some cases, the latter
arguments will rely on the fact that some possibilities may have already
been eliminated by previous branches of the flowchart. For each leaf node,
we decode the tensor identities that extract the invariant parameters of
the isometry class, which are often attached to the branch by a dotted
line, again by mapping them to the relevant point in the proof of
Theorem~\ref{thm:pp-wave-isom-classes}. One can jump directly into
any numbered point in the proof below, but in general to get the full
context one may need to retrace to its origin the corresponding branch
in Figure~\ref{fig:pp-wave-flowchart}.

The entry point is the set of tensorial curvature conditions from
Proposition~\ref{prop:ideal-pp-wave} that identify the vacuum pp-wave
class. For the rest of the diagram, we will assume the corresponding
line element~\eqref{eq:pp-wave-ds2} and will rely on the conditional
curvature invariants defined in Section~\ref{sec:curv-inv}.
Some of the scalar invariants defined there required a choice of branch
of a square or quartic root, which was indicated in the definition. We
point out when these choices affect the possible invariant relations
that can be encountered in the proof, for instance when equating an
invariant to a constant. We do not need to be so careful for functional
relations between invariants, as different branch choices can at worst
be absorbed into the free functional parameters of the relation itself.

\begin{enumerate}
\item % 1
Since we are already in vacuum, $R_{ab} = 0$, the vanishing $\bC^\dagger
= 0$ of the self-dual Weyl tensor~\eqref{eq:sd-weyl} implies that the
entire Riemann tensor $R_{abcd} = 0$, which lands us in the trivial case
of Minkowski space. It remains to examine the possibilities where
$\bC^\dagger \ne 0$, which by the evaluation~\eqref{eq:bC-expr} is
equivalent to $f_{,\zeta\zeta} \ne 0$.

\item % 2
The condition $\overline{\bK} \cdot \bK = 0$ by~\eqref{eq:K2-def} is
equivalent to $(\ln f_{,\zeta\zeta})_{,\zeta} =
f_{,\zeta\zeta\zeta}/f_{,\zeta\zeta} = 0$. As we saw in
IC.\ref{itm:K2=0}, this condition lands us in the $G_5$ subclass,
$f(\zeta,u) = A(u)\zeta^2$. It remains to examine both possibilities,
that the equality holds or not.

\item \label{itm:z2-branch} % 3
The condition $\Re \Ia^{(2)} = -\alpha^{-1/2}$, for some $\alpha > 0$%
	\footnote{Recall that according to the choice of branch
	in~\eqref{eq:Ia2-def} $\Re \Ia^{(2)} \le 0$, so $\alpha$ cannot be
	negative.}, %
results in the three possibilities, $\Re \Ia^{(2)} \ne \text{const}$,
$\alpha = \oo$ and $\alpha \ne \oo$. In the first case, the relation
\begin{equation}
	\begin{pmatrix}
		\Re\Ib^{(2)} \\ (\Im\Ia^{(2)})^2
	\end{pmatrix}
	= F\left(\Re\Ia^{(2)}\right) .
\end{equation}
combined with the evaluation~\eqref{eq:Iab2-eval} coincides with the
signature equation~\eqref{eq:G5-signature} and hence by the arguments
from~IC.\ref{itm:K2=0} it isolates the isometry class $G_{5^\circ;F}$.

It remains to examine the possibilities where $\Re\Ia^{(2)}$ is
constant, including both $\alpha=\oo$ and $\alpha \ne \oo$.

\item % 4
The condition $\Im\Ia^{(2)} = 2\lambda$ is either satisfied by some
constant $\lambda \ge 0$%
	\footnote{Recall that according to the choice of branch
	in~\eqref{eq:Ia2-def} $\Im \Ia^{(2)} \ge 0$, so $\lambda$ cannot be
	negative.}, %
or $\Im\Ia^{(2)} \ne \text{const}$. In the latter case, the relation
\begin{equation}
	\Im \Ib^{(2)}
	= F\left(\Im\Ia^{(2)}\right) .
\end{equation}
combined with the evaluation~\eqref{eq:Iab2-eval} coincides with the
signature equation~\eqref{eq:G5-ReIa=0-signature} and hence by the
arguments from~IC.\ref{itm:G5-ReIa=0} or~IC.\ref{itm:G5-ReIa=const} it
isolates the isometry class $G_{5';\oo,F}$ or $G_{5';\alpha,F}$,
depending on the value of $\alpha$.

It remains to examine the possibilities $\Ia^{(2)} = 2i\lambda -
\alpha^{-1/2}$. Taking the evaluation~\eqref{eq:Iab2-eval} into account,
the case $\alpha = \oo$ lands us in the $G_{6b;\lambda}$ class
(cf.~IC.\ref{itm:G6b}), while the case $\alpha \ne \oo$ lands us in the
$G_{6a;\alpha,\kappa}$ subclass with $\kappa = 2\lambda \alpha^{1/2}$
(cf.~IC.\ref{itm:G6a}).

\item % 5
The condition $\Ic = \frac{1}{2}$ combined with the evaluation
in~\eqref{eq:Ic-def} coincides with~\eqref{eq:log-condition}. Hence, by
the arguments from~IC.\ref{itm:G2c}, it lands us in the $f(\zeta,u) =
A(u) \ln\zeta + A_1(u) \zeta + A_0(u)$ subclass. It remains to impose
further conditions to characterize the invariant content of the
$u$-dependent coefficients, or to examine the possibility that $\Ic \ne
\frac{1}{2}$.

\item \label{itm:log-branch} % 6
The condition $\bN = 0$ combined with the evaluation
in~\eqref{eq:N-eval-G2c} coincides with setting $A_1(u)= 0$ as
in~\eqref{eq:G2c-A1=0}, which by the arguments of~IC.\ref{itm:G2c} lands
us in the $G_{2c}$ subclass, which can be further subclassified based on
the invariant content in the $A(u)$ coefficient. It remains to examine
the possibility that $\bN \ne 0$.

\item % 7
The condition $\Re \Ia^{(0)} = -\alpha^{-1/2}$, for some $\alpha > 0$%
	\footnote{Recall that according to the choice of branch
	in~\eqref{eq:Ia0-def} $\Re \Ia^{(0)} \le 0$, so $\alpha$ cannot be
	negative.}, %
results in the three possibilities $\Re \Ia^{(0)} \ne
\text{const}$, $\alpha = \oo$ and $\alpha \ne \oo$. In the first case,
the relation
\begin{equation}
	\begin{pmatrix}
		\Re \Ib^{(0)} \\ \Re\left[(\Ia^{(0)})^2/\Id^{(0)}\right]
	\end{pmatrix}
	= F\left(\Re \Ia^{(0)}\right)
\end{equation}
combined with the evaluation~\eqref{eq:Iab0-eval} coincides with the
signature equation~\eqref{eq:G2c-signature} and hence by the arguments
from~IC.\ref{itm:G2c} it isolates the isometry class $G_{2c^\circ;F}$.

\item % 8
By virtue of its definition~\eqref{eq:J-def}, the invariant $|J| = 1$.
The condition $J = e^{2i\gamma}$ is either satisfied for some unique
constant $\gamma \in [0,\pi)$ or $J\ne \text{const}$. In the latter
case, the relation
\begin{equation}
	\Im \Ia^{(0)}
	= F\left(\Im\left[(\Ia^{(0)})^2/\Id^{(0)}\right]\right) .
\end{equation}
combined with the evaluation~\eqref{eq:Iab0-eval} coincides with the
signature equation~\eqref{eq:G2c0-ReIa=0-signature} and hence by the
arguments from~IC.\ref{itm:G2c0-ReIa=0} and~IC.\ref{itm:G2c0-ReIa=const}
it isolates the isometry class $G_{2c';\oo,F}$ or $G_{2c';\alpha,F}$,
depending on the value of $\alpha$.

It remains to examine the possibility that $J$ is constant, which by the
evaluation in~\eqref{eq:J-eval-G3ab} restricts to the subclass
$f(\zeta,u) = e^{\Re B} e^{i\gamma} \ln\zeta$, under the two different
branches $\alpha = \oo$ ($\Re B = 0$) and $\alpha \ne \oo$ ($\Re B =
\ln4\alpha - 2\ln u$).

\item \label{itm:G3b-branch} % 9
Under the $\alpha = \oo$ condition, we are in the class $f(\zeta,u) =
e^{i\gamma} \ln\zeta$. The condition $e^{2i\gamma} = 1$ distinguishes
between the possibilities $\gamma \in \{0,\pi\}$ (when satisfied) and
$\gamma \in (0,\pi)$ (when not). The latter case by the arguments
in~IC.\ref{itm:G3b} isolates the $G_{3b^\circ;\gamma}$ isometry class.
In the remaining possibility, $e^{i\gamma} = \pm 1$, we can evaluate
\begin{equation}
	-2\bD = \pm \bell \bell .
\end{equation}
We have found no further tensor or scalar invariants that can directly
distinguish between the two signs, but they do not give isometric
geometries. However, it seems that an invariant way of distinguishing
between the two signs. Namely, in the $+$ case $-2\bD$ is the
square a real null vector $\bell$, while in $-$ case $-2\bD$ is
the negative square of a real null vector $\bell$. Recall that the
aligned null vector $\bell$ is not actually known at this stage of our
classification, so its existence is a separate condition. By the
arguments from~IC.\ref{itm:G3b0} These conditions respectively isolate
the $G_{3b';\pm}$ isometry classes.

\item \label{itm:G3a-branch} % 10
Under the $\alpha \ne \oo$ condition, we are in the class $f(\zeta,u) =
\frac{4\alpha}{u^2} e^{i\gamma} \ln\zeta$. The condition $e^{2i\gamma} =
1$ distinguishes between the possibilities $\gamma \in \{0,\pi\}$ (when
satisfied) and $\gamma \in (0,\pi)$ (when not). The latter case by the
arguments in~IC.\ref{itm:G3a} isolates the $G_{3a^\circ;\alpha,\gamma}$
isometry class. In the remaining possibility, $e^{i\gamma} = \pm 1$, we
recall the evaluation~\eqref{eq:Id-eval-G3a0},
\begin{equation}
	\frac{1}{\Id^{(0)}} = \pm \alpha ,
\end{equation}
whose value captures both the $\alpha>0$ and $\pm$ sign invariants. By
the arguments from~IC.\ref{itm:G3a0} this condition isolates the
$G_{3a';\pm\alpha}$ isometry class.

\item % 11
There are only two possibilities to examine, either $\fL = 0$ or $\fL
\ne 0$.

\item \label{itm:G2b00-branch} % 12
We are in the formfactor class $f(\zeta,u) = A(u) \ln\zeta + A_1(u)
\zeta + A_0(u)$. Taking into account the evaluation
in~\eqref{eq:fL-def}, the condition $\fL = 0$ is equivalent to
\begin{equation}
	-\frac{1}{2} \frac{\dot{A}}{A} = 0
\end{equation}
or $A(u) = a^2 e^{i\gamma}$ constant. As we already know, the
coefficient $A_0(u)$ is pure gauge, while the coefficient of $\zeta$ must
take the form $A_1(u) = A_{1,0} e^{i\lambda u}$ to correspond to one of
the highly symmetric isometry classes. Consider the
covariant condition
\begin{equation} \label{eq:G2b-log-generic}
	\left(\nabla N - \frac{N}{2} \overline{\bK}\right)^{\otimes 2}
	= 2\lambda e^{-i\gamma \sgn\lambda} N^2 \bD^{\dagger} .
\end{equation}
with external parameters $\lambda \in \mathbb{R}$ and $\gamma \in
[0,\pi]$; when the the condition is satisfied with $\lambda \ne 0$, it
uniquely fixes the values of $\lambda$ and $\gamma$, in the given
ranges, for which it holds. Within our formfactor class, both sides of
the equality are proportional to $\bell\bell$, while equating the
coefficients gives the condition
\begin{equation}
	\dot{A}_1^2 + \lambda^2 A_1^2 = 0 .
\end{equation}
Either it does not hold and we end up in the generic $G_1$ class, or it
does and it is solved by $A_1(u) = A_{1,0} e^{\pm i\lambda u +
i\gamma}$, where $A_{1,0} \in \mathbb{C}$ is an integration constant. We
can use the $f\mapsto \bar{f}$, $u\mapsto -u$ and $\zeta \mapsto
e^{i\beta} \zeta$ gauge freedoms to restrict to $\gamma \in [0,\pi]$,
$A_{1,0} = c \ge 0$ and the $e^{+i\lambda u}$ possibility.

When $\lambda = 0$, the formula in~\eqref{eq:G2b-log-generic} no longer
depends on $\gamma$ and hence cannot determine its value. In that case,
as long as $A(u) = a^2 e^{i\gamma}$, the following covariant relation
must hold
\begin{equation}
	\bD = e^{2i\gamma} \overline{\bD} ,
\end{equation}
which uniquely determines the value of $\gamma \in [0,\pi]$ as long as
$e^{2i\gamma} \ne 1$. In the latter case, using the evaluation
in~\eqref{eq:N-eval-G2c}, we can differentiate
between the values
\begin{equation}
	-2\bD = +(a\bell)^{\otimes 2} \quad \text{and} \quad
		{-(a\bell)^{\otimes 2}}
\end{equation}
by whether $-2\bD$ is the positive ($\gamma=0$) or negative
($\gamma=\pi$) square of a real null vector ($a\bell$), respectively.
Finally, again relying on the evaluations in~\eqref{eq:N-eval-G2c}, we
can use the following formula to extract the invariant parameter
\begin{equation}
	c = |A_{1,0}| = \frac{|N|}{|\bK|} .
\end{equation}
In this branch of the classification, we are guaranteed to have
$c > 0$, because the $c = 0$ case was already handled by
the branch containing the $G_{3b}$ classes. Thus, we have isolated the
$G_{2b'';\gamma,\lambda,c}$ isometry class.

\item \label{itm:G2a0-branch} % 13
We are in the formfactor class $f(\zeta,u) = A(u) \ln\zeta + A_1(u)\zeta
+ A_0(u)$. The condition
\begin{equation}
	\frac{1}{\Id^{(i\kappa)}} = \alpha e^{i\gamma}
	\quad \iff \quad
	\frac{A^3}{\dot{A}^2} = \alpha e^{i\gamma} ,
\end{equation}
with external parameters $\alpha > 0$, $\gamma\in [0,\pi]$, integrates
to $A(u) = \frac{4\alpha}{(u-u_0)^2} e^{i\gamma}$, where we can set
$u_0=0$ by gauge freedom. Note that the any value of $\kappa$ produces
the same result and $\Id^{(i\kappa)}$ is well-defined
in~\eqref{eq:Idk-def} because we are in the $\Ic=\frac{1}{2}$ branch.

Then, taking into account the evaluation in~\eqref{eq:M-eval-G2a}, the
condition
\begin{equation}
	\bM^{(i\kappa)} = 0 \quad \iff \quad
	-\frac{4}{u\bar{\zeta}} (u\dot{A}_1 - (i\kappa-2) A_1) \bell\bell\bell
	= 0 ,
\end{equation}
with external parameter $\kappa \in \mathbb{R}$ that now enters
non-trivially, integrates to $A_1(u) = A_{1,0} u^{i\kappa-2}$. We can
set $A_{1,0} = c > 0$ by the $\zeta \mapsto e^{i\beta}\zeta$ gauge freedom,
while the $f\mapsto \bar{f}$ freedom has already been fixed by the
requirement that $\gamma \in [0,\pi]$. If either of the above two
conditions fails, we end up in the generic $G_1$ class. Otherwise,
relying on the evaluation in~\eqref{eq:N-eval-G2c} and using the
following formula to extract the invariant parameter
\begin{equation}
	c = |A_{1,0}| = \frac{|N|}{|\bK|} ,
\end{equation}
we have isolated the $G_{2a';\alpha,\gamma,\kappa,c}$ isometry
class.

\item % 14
By the evaluation~\eqref{eq:fL-def}, the condition $\fL = 0$ is
equivalent to
\begin{equation}
	\left(\frac{(\ln f_{,\zeta\zeta})_{,u}}{(\ln
	f_{,\zeta\zeta})_{,\zeta}}\right)_{\zeta} = 0 .
\end{equation}
This equation has already been integrated in~IC.\ref{itm:G2b}
and~IC.\ref{itm:G2b0} around~\eqref{eq:G2b-entry}
and~\eqref{eq:G2b0-entry}. Since the $\Ic = -(1/(\ln
f_{,\zeta\zeta})_{,\zeta}) = \frac{1}{2}$ possibility has already been
handled in another branch, we can assume that $\Ic \ne \frac{1}{2}$ and
as a result $f(\zeta,u) = \ff(\zeta) + A_1(u) \zeta + A_0(u)$. The
additional condition $\bM^{(\oo)} = 0$, relying on
evaluation~\eqref{eq:M-eval-G2b}, is equivalent to
\begin{equation}
	\ff'''(z) = 0  \quad \text{or} \quad
	\dot{A}_1 = 0 .
\end{equation}
The first possibility has already been handled in a previous branch
where $\overline{\bK}\cdot\bK = 0$, so we can assume $\ff'''(z) \ne 0$,
which then restricts $A_1(u) = A_{1,0}$ constant. Recalling that
$A_0(u)$ is pure gauge, we end up in the formfactor class
\begin{equation}
	f(\zeta,u) = \ff(\zeta) .
\end{equation}
We do not need to separately consider the form $\ff(\zeta) +
A_{1,0}\zeta$ because we can simply redefine $\ff(z)$ to absorb the term
proportional to $A_{1,0}$.

If $\fL = 0$, but $\bM^{(\oo)} \ne 0$, while $\Ic \ne \frac{1}{2}$, then
the underlying geometry is not one of the highly symmetric pp-wave
classes, which lands us in the generic $G_1$ class.

It remains to examine the possibility that $\fL \ne 0$.

\item \label{itm:exp-pow-branch} % 15
Within the formfactor class $f(\zeta,u) = \ff(\zeta)$, relying on the
evaluation~\eqref{eq:Ide-eval-G2b} with $\lambda=0$, consider the
condition
\begin{equation} \label{eq:Ieoo-G2b0}
	\Ie^{(\oo)} = \frac{\ff'''(\zeta) \ff'(\zeta)}{\ff''(\zeta)^2}
	= \frac{k-2}{k-1} ,
\end{equation}
where the expression on the right-hand side is just a convenient way for
us to parametrize a complex constant. To exhaust all the possible values
of the constant, we must include $k=\oo$ ($\Ie^{(\oo)} = 1$) and
obviously exclude $k=1$.

When $k=\oo$, the corresponding equation has been integrated
in~IC.\ref{itm:G3c}, which isolates the $G_{3c;\lambda}$ isometry
class, with $\lambda = \frac{1}{2} |\bK| > 0$, using the
evaluation~\eqref{eq:K2-def}.

When $k\ne\oo$, the integration in~IC.\ref{itm:G2b-zk} yields two
special values $k=0$ and $k=2$ (with $k=1$ already excluded above), and
the special subfamily $k=2i\kappa$, $\kappa \in \mathbb{R} \setminus
\{0\}$. The possibility $k=0$ is already handled in the $\Ic =
\frac{1}{2}$ branch, while the possibility $k=2$ is already handled in
the $\overline{\bK} \cdot \bK = 0$ branch. As discussed
in~IC.\ref{itm:G3d}, the special subfamily $k=2i\kappa$ isolates
the $G_{3d;\gamma,\kappa}$ isometry class, with $\gamma \in (0,\pi)$
uniquely fixed by
\begin{equation} \label{eq:G3d-gamma}
	e^{2i\gamma}
	= \frac{(2i\kappa+1) (i\kappa-1)^2}{(2i\kappa-1) (i\kappa+1)^2}
		\left|\frac{\bK}{2(i\kappa-1)}\right|^{4i\kappa} J ,
\end{equation}
relying on the evaluation in~\eqref{eq:J-eval-G3d}. This invariant isolates $\gamma$ in every case except when $\gamma = 0$ or $\pi$. In that case both values of $\gamma$ are fixed by complex conjugation and give $e^{2i\gamma} = 1$. We have found that it is not possible to isolate $e^{i\gamma}$ directly using tensor invariants. However, using the Cartan-Karlhede algorithm, it is possible to isolate $e^{i\gamma}=\pm 1$ and the corresponding isometry classes are definitely distinct. So for this particular formfactor and choice of $\gamma$, the IDEAL characterization is unable to distinguish between them.

The remaining values of $k\ne i\mathbb{R} \cup \{1,2\}$, according to
the discussion in~IC.\ref{itm:G2b-zk} isolates the $G_{2b''';k}$
isometry class.

It remains to explore the possibility that $\Ie^{(\oo)} \ne$ const.

\item \label{itm:G2b0-branch} % 16
Within the $f(\zeta,u) = \ff(\zeta)$ formfactor family, $\Ie^{(\oo)}$
depends holomorphically on $\zeta$, as we see from~\eqref{eq:Ieoo-G2b0}.
Further, in this branch, we can assume that $\Ie^{(\oo)} \ne $ const.
Then, from evaluation~\eqref{eq:K2-def}, $|\bK|^2 = \overline{\cK} \cK$,
where $\cK = \ff'''(\zeta)/\ff''(\zeta)$ obviously also depends
holomorphically on $\zeta$. By the holomorphic implicit function
theorem, there must then exist some holomorphic function $F(y)$ such
that $\ln\cK = F(\Ie^{(\oo)})$, which coincides with the signature
relation~\eqref{eq:G2b0-signature} in~IC.\ref{itm:G2b0}.
Expressed in terms of the $|\bK|$
invariant the relation becomes
\begin{equation}
	\ln(|\bK|^2) = 2 \Re F\left(\Ie^{(\oo)}\right) ,
\end{equation}
where we recognize the harmonic function $\Re F(y)$ as one of the
invariant isometry class parameters discussed in~IC.\ref{itm:G2b0}.
The remaining invariant parameter discussed in~IC.\ref{itm:G2b0} is
isolated by the following relation, where the function $\cG_{\Re F}(w)$
from~\eqref{eq:GF-def} is determined by $\Re F$ (through holomorphic
extension and solving an ODE),
\begin{equation}
	\cG_{\Re F}\left(\Ie^{(\oo)}\right) = \cos\gamma \in [-1,1]
\end{equation}
with the right-hand side constant. Together, these parameters isolate
the isometry class $G_{2b';\gamma,\Re F}$, with $\gamma \in [0,\pi]$
uniquely fixed by the value $\cos\gamma \in [-1,1]$.

\item % 17
In this branch, we only have the inequalities $\overline{\bK}\cdot \bK
\ne 0$, $\fL \ne 0$ and $\Ic - \frac{1}{2} \ne 0$s. To define further
invariants that we could use for further analysis, as indicated in
Section~\ref{sec:notation}, we also need to eliminate the condition
\begin{equation} \label{eq:Ic-denom-cond}
	k (\Ic - \tfrac{1}{2}) \ne \Ic
	\quad \implies \quad
	\left(\frac{1}{(\ln f_{,\zeta\zeta})_{,\zeta}}\right)_{,\zeta}
		= -\frac{k}{2(k-1)}
\end{equation}
for some complex constant $k$, where we have used the evaluation
in~\eqref{eq:Ic-def}. The general 4-parameter solution for different
values of $k$ (interpreting the special value $k=\oo$ as
$\Ic-\frac{1}{2}=0$, and $k=0$ as $\Ic=0$) is given by
\begin{equation}
	f(\zeta,u) = \begin{cases}
		A \ln(\zeta-h) + A_1 \zeta + A_0
		& (k=\oo) , \\
		e^{A(\zeta-h)} + A_1 \zeta + A_0
		& (k=0) , \\
		A (\zeta-h)^{2/k} + A_1 \zeta + A_0
		& (k\ne 0,\oo) ,
	\end{cases}
\end{equation}
where the integration parameters $A$, $A_1$, $A_0$ and $h$ are of course
$u$-dependent. None of these forms belong to our catalog of highly
symmetric pp-wave geometries in~\eqref{eq:f-cases}, unless they have
already been handled in other branches of our decision diagram in
Figure~\ref{fig:pp-wave-flowchart}, namely node~\ref{itm:log-branch} for
the logaritmic form or node~\ref{itm:exp-pow-branch} for the exponential
and power forms.

Thus, if the equality in~\eqref{eq:Ic-denom-cond} is actually satisfied,
the pp-wave geometry must belong belong to the generic $G_1$ class, and
it remains to explore the possibility where the equality does not hold.

\item % 18
Within this branch, the $\bL^{(k)}$ invariant is well-defined. Recall
the evaluation $\bL^{(k)} = -2\Xi^{(k)} \bell$ in~\eqref{eq:Lk-def},
where $\Xi^{(k)}$ is evaluated in~\eqref{eq:Xik-def}. Recall also that in
our normalization, the null vector $\bell$ is covariantly constant,
which makes it easy to evaluate the covariant condition
\begin{equation} \label{eq:G2a-Lk-cond}
	2\nabla \bL^{(k)} = \bL^{(k)} \bL^{(k)}
	\quad \iff \quad
	(\Xi^{(k)})_{,\zeta} = 0, \quad
	(\Xi^{(k)})_{,u} + (\Xi^{(k)})^2 = 0 .
\end{equation}
These differential equations are easily integrated to the general
solution
\begin{equation}
	\Xi^{(k)} = \frac{1}{u-u_0} ,
\end{equation}
where $u_0$ is a real integration constant.

Let us suppose that condition~\eqref{eq:G2a-Lk-cond} is satisfied with
$k=i\kappa$, $\kappa \in \mathbb{R}$. For completeness, we must also
consider the possibility that $\kappa = \oo$, which means that the limit
$\bL^{(\oo)} = \lim_{k\to \oo} \bL^{(k)}$ exists and the appropriately
scaled version of the invariant condition~\eqref{eq:G2a-Lk-cond} is
satisfied. We will examine the $\kappa = \oo$ possibility separately.

Provided $\kappa \ne \oo$, using the evaluation in~\eqref{eq:Xik-def}
and clearing the denominators in $\Xi^{(k)} = 1/(u-u_0)$
straightforwardly results in equation~\eqref{eq:G2a-entry}, which was
integrated in~IC.\ref{itm:G2a}. The formfactor $f(\zeta,u)$ then takes
the form
\begin{equation}
	f(\zeta,u) = \frac{1}{u^2} \ff(\zeta u^{i\kappa}) + A_1(u) \zeta ,
\end{equation}
where we have already applied allowed simplification by residual gauge
freedom. A similar integration is possible for other complex values of
$k\in \mathbb{C} \setminus i\mathbb{R}$, but results in a form that can
only belong to the generic $G_1$ class. If the
condition~\eqref{eq:G2a-Lk-cond} is not satisfied at all (including with
$\kappa=\oo$), the underlying pp-wave geometry cannot belong to any of
the highly symmetric classes, so it must again belong to the generic
$G_1$ class.

Since the invariant $\bL^{(i\kappa)}$ is well-defined, so is
$\bM^{(i\kappa)}$. According to its evaluation in~\eqref{eq:M-eval-G2a},
the following invariant condition specializes as
\begin{equation} \label{eq:G2a-Mk-cond}
	\bM^{(i\kappa)} = 0
		\quad \iff \quad
	u\dot{A}_1 - (i\kappa-2) A_1 = 0 ,
\end{equation}
which results in the form factor $f(\zeta,u) = u^{-2} (\ff(\zeta
u^{i\kappa}) + A_{1,0} \zeta u^{i\kappa})$, meaning that the
$\zeta$-linear term can just be absorbed by a redefinition of $\ff(z)$,
which lands us in the $G_{2a}$ subclass. Otherwise, if
condition~\eqref{eq:G2a-Mk-cond} is not satisfied at all, the underlying
pp-wave geometry cannot belong to any of the highly symmetric classes,
so it must again belong to the generic $G_1$ class.

Within the $G_{2a}$ subclass, as evaluated in~\eqref{eq:Ide-eval-G2a},
\begin{equation}
	\Id^{(i\kappa)} = -\frac{\ff'''(z)^2}{\ff''(z)^3}
		\quad \text{and} \quad
	|\bK|^2 = \frac{\overline{\ff'''(z)}}{\overline{\ff''(z)}}
			\frac{\ff'''(z)}{\ff''(z)} ,
\end{equation}
where $z = \zeta u^{i\kappa}$. Clearly, $\Id^{(i\kappa)}$ depends
holomorically on $z$, while $|\bK|^2 = \overline{\cK} \cK$, where $\cK =
\ff'''(z)/\ff''(z)$ holomorphically depends on $z$.

Consider the possibility that $\Id^{(i\kappa)} = \text{const}$. The
simplest case $\Id^{(i\kappa)} = 0$ is equivalent to $\ff'''(z) = 0$ is
integrated by the general solution $\ff(z) = Az^2 + A_1 z + A_0$. All
corresponding pp-wave classes fall into the $G_5$ subclass and have
already been classified in an earlier branch (namely
node~\ref{itm:z2-branch}) of our decision diagram in
Figure~\ref{fig:pp-wave-flowchart}. Otherwise, the condition
$\Id^{(i\kappa)} \ne 0$ appears as~\eqref{itm:G2a0-entry}, where it has
been integrated in IC.\ref{itm:G2a0}--\ref{itm:G3a0} and shown
correspond to either $G_{2a'}$ or $G_{3a}$ subclasses, which have also
been classified in an earlier branch (namely nodes~\ref{itm:G3a-branch}
and~\ref{itm:G2a0-branch}) of our decision diagram.

Thus, in this branch we can assume that $\Id^{(i\kappa)} \ne
\text{const}$ and hence by the holomorphic implicit function theorem,
there must then exist some holomorphic function $F(y)$ such that $\ln\cK
= F(-\Id^{(i\kappa)})$, which coincides with the signature
relation~\eqref{eq:G2a-signature}. Expressed in terms of the $|\bK|$
invariant, the relation becomes
\begin{equation}
	\ln(|\bK|^2) = 2 \Re F\left(-\Id^{(i\kappa)}\right) ,
\end{equation}
where we recognize the harmonic function $\Re F(y)$ as one of the
invariant isometry class parameters discussed in~IC.\ref{itm:G2a}. The
remaining invariant parameter discussed in~IC.\ref{itm:G2a} is isolated
by the following relation, where the function $\cA_{\kappa, \Re F}(w)$
from~\eqref{eq:AkF-def} is determined by $\Re F$ (through holomorphic
extension and solving an ODE),
\begin{equation}
	\cA_{\kappa,\Re F}(\Id^{(i\kappa)}, \Ie^{(i\kappa)})
	= c > 0
\end{equation}
with the right-hand side constant. Together, these parameters isolate
the isometry class $G_{2a^\circ;\kappa,c,\Re F}$.

\item % 19
Within this branch, the $\bL^{(\oo)}$ invariant is well-defined. Recall
the evaluation $\bL^{(\oo)} = -2\Xi^{(\oo)} \bell$
in~\eqref{eq:Loo-def}, where $\Xi^{(\oo)}$ is evaluated
in~\eqref{eq:Xioo-def}. Recall also that in our normalization, the null
vector $\bell$ is covariantly constant, which makes it easy to evaluate the covariant
condition (which is the $k\to \oo$ scaled limit
of~\eqref{eq:G2a-Lk-cond})
\begin{equation} \label{eq:G2b-Loo-cond}
	2\nabla \bL^{(\oo)} = 0
	\quad \iff \quad
	(\Xi^{(\oo)})_{,\zeta} = 0, \quad
	(\Xi^{(\oo)})_{,u} = 0 .
\end{equation}
These differential equations are easily integrated to the general
solution
\begin{equation}
	\Xi^{(\oo)} = i\lambda ,
\end{equation}
where $\lambda$ is a complex integration constant. Shortly, $\lambda$
will be restricted to real values.

Let us suppose that condition~\eqref{eq:G2b-Loo-cond} is satisfied.

Using the evaluation in~\eqref{eq:Xioo-def}
and clearing the denominators in $\Xi^{(\oo)} = i\lambda$
straightforwardly results in equation~\eqref{eq:G2b-entry}, which was
integrated in~IC.\ref{itm:G2b}. The formfactor $f(\zeta,u)$ then takes
the form
\begin{equation}
	f(\zeta,u) = \ff(\zeta e^{i\lambda u}) + A_1(u) \zeta ,
\end{equation}
where we have already applied allowed simplification by residual gauge
freedom. At this point, we see that only real values of $\lambda$ would
be allowed within the $G_{2b}$ subclass. Hence complex values of
$\lambda \in \mathbb{C} \setminus \mathbb{R}$ result in a form that can
only belong to the generic $G_1$ class. Since $\bL^{(\oo)} = -2i\lambda
\bell$, we can impose $\lambda \in \mathbb{R}$ by the invariant
condition
\begin{equation}
	\Re \bL^{(\oo)} = 0 .
\end{equation}
If the condition~\eqref{eq:G2b-Loo-cond} is not satisfied at all, the
underlying pp-wave geometry cannot belong to any of the highly symmetric
classes, so it must again belong to the generic $G_1$ class.

Since the invariant $\bL^{(\oo)}$ is well-defined, so is
$\bM^{(\oo)}$. According to its evaluation in~\eqref{eq:M-eval-G2b},
the following invariant condition specializes as
\begin{equation} \label{eq:G2b-Moo-cond}
	\bM^{(\oo)} = 0
		\quad \iff \quad
	\dot{A}_1 - i\lambda A_1 = 0 ,
\end{equation}
which results in the form factor $f(\zeta,u) = \ff(\zeta e^{i\lambda u})
+ A_{1,0} \zeta e^{i\lambda u}$, meaning that the $\zeta$-linear term
can just be absorbed by a redefinition of $\ff(z)$, which lands us in
the $G_{2b}$ subclass. Otherwise, if condition~\eqref{eq:G2b-Moo-cond}
is not satisfied at all, the underlying pp-wave geometry cannot belong
to any of the highly symmetric classes, so it must again belong to the
generic $G_1$ class.

Note that at this point we can assume that $\lambda \ne 0$, since the
$\lambda = 0$ was already handled in an earlier branch (namely
node~\ref{itm:G2b0-branch}) in our decision diagram in
Figure~\ref{fig:pp-wave-flowchart}. Moreover, using the residual gauge
transformation $u\mapsto u/\lambda$, we are free to assume the value
$\lambda = 1$.

Within the $G_{2b}$ subclass, as evaluated in~\eqref{eq:Ide-eval-G2b},
\begin{equation}
	\Id^{(\oo)} = \lambda^2\frac{\ff'''(z)^2}{\ff''(z)^3}
		\quad \text{and} \quad
	|\bK|^2 = \frac{\overline{\ff'''(z)}}{\overline{\ff''(z)}}
			\frac{\ff'''(z)}{\ff''(z)} ,
\end{equation}
where $z = \zeta e^{i\lambda u}$. Clearly, $\Id^{(\oo)}$ depends
holomorically on $z$, while $|\bK|^2 = \overline{\cK} \cK$, where $\cK =
\ff'''(z)/\ff''(z)$ holomorphically depends on $z$.

Consider the possibility that $\Id^{(\oo)} = \text{const}$ and recall
that $\lambda \ne 0$. The simplest case $\Id^{(\oo)} = 0$ is equivalent
to $\ff'''(z) = 0$ is integrated by the general solution $\ff(z) = Az^2
+ A_1 z + A_0$. All corresponding pp-wave classes fall into the $G_5$
subclass and have already been classified in an earlier branch (namely
node~\ref{itm:z2-branch}) of our decision diagram in
Figure~\ref{fig:pp-wave-flowchart}. Otherwise, the condition
$\Id^{(\oo)} \ne 0$ appears as~\eqref{eq:G2b00-entry}, where it has
been integrated in IC.\ref{itm:G2b00}--\ref{itm:G3b0} and shown
correspond to either $G_{2b''}$ or $G_{3b}$ subclasses, which have also
been classified in an earlier branch (namely nodes~\ref{itm:G3b-branch}
and~\ref{itm:G2b00-branch}) of our decision diagram.

Thus, in this branch we can assume that $\Id^{(\oo)} \ne \text{const}$
and hence by the holomorphic implicit function theorem, there must then
exist some holomorphic function $F(y)$ such that $\ln\cK =
F(\Id^{(\oo)})$, which coincides with the signature
relation~\eqref{eq:G2b-signature} (recall the simplification $\lambda =
1$). Expressed in terms of the $|\bK|$ invariant, the relation becomes
\begin{equation}
	\ln(|\bK|^2) = 2 \Re F\left(\Id^{(i\kappa)}\right) ,
\end{equation}
where we recognize the harmonic function $\Re F(y)$ as one of the
invariant isometry class parameters discussed in~IC.\ref{itm:G2b}. The
remaining invariant parameter discussed in~IC.\ref{itm:G2b} is isolated
by the following relation, where the function $\cA_{\oo, \Re F}(w)$
from~\eqref{eq:AooF-def} is determined by $\Re F$ (through holomorphic
extension and solving an ODE),
\begin{equation}
	\cA_{\oo,\Re F}(\Id^{(\oo)}, \Ie^{(\oo)})
	= c > 0
\end{equation}
with the right-hand side constant. Together, these parameters isolate
the isometry class $G_{2b^\circ;c,\Re F}$.

\end{enumerate}

Together with the data in Figure~\ref{fig:pp-wave-flowchart} and
Table~\ref{tab:pp-wave-ideal} (which collects the complete IDEAL
characterization identities that can be read off from
Figure~\ref{fig:pp-wave-flowchart} for each isometry class) this
completes the IDEAL characterization of regular highly symmetric pp-wave
geometries.
\end{proof}

\section{Discussion} \label{sec:discuss}

We have investigated the necessity of traditional scalar polynomial curvature invariants in the IDEAL approach to the classification of spacetimes, by considering the class of vacuum pp-wave spacetimes. These spacetimes belong to the larger class of VSI spacetimes for which all polynomial scalar curvature invariants vanish. 

To hold with the primary principles of the IDEAL approach, we relaxed the constraint on the construction of scalar invariants by including conditional invariant tensors and more specifically, conditional invariant scalars as permitted quantities. A conditional invariant tensor or scalar arises when two IDEAL tensors are proportional. With this modification, we are able to characterize all of the highly-symmetric vacuum pp-wave spacetimes in an IDEAL manner and more finely classify these spacetimes into formfactor classes. 

For a given highly symmetric vacuum pp-wave formfactor, there may be parameters or functions that must be fixed to specify a particular metric. This presents the problem of sub-classification within a given formfactor class and whether the IDEAL approach presented here is applicable. We have attempted to determine a minimal set of IDEAL quantities that will pick out the parameters and functions in the metric formfactor for each of the possible symmetry sub-classes. This is possible to do for almost all of the formfactor classes, except in the case of $G_{3d;\kappa, \gamma}$ where the constant $\gamma$ only appears in the form of $e^{2i\gamma}$ in every conditional invariant. Thus, the two cases where $e^{i\gamma} = \pm 1$ cannot be distinguished using an IDEAL approach. In every other subclass, it is possible to uniquely characterize a solution in terms of IDEAL quantities.

%future work: other highly symmetric VSI, general case of pp-waves (G1s) using differential invariants.

\section*{Appendix}

In this appendix, we review some useful coordinate transformations for the modest generalizations of those formfactor classes that require a signature function for the function $A(u) = e^{B(u)}$. In addition to the transformations $$A(u) \left\{ \begin{smallmatrix}
\zeta^k \\ \ln\zeta \end{smallmatrix} \right\} \mapsto \overline{A}(u)
\left\{ \begin{smallmatrix} \zeta^{\bar{k}} \\ \ln\zeta
\end{smallmatrix} \right\},$$
\noindent we have the following expressions:
\begin{equation} \label{eq:A-residual1}
	A(u) \mapsto
	\begin{cases}
	a^2 e^{i k \beta} A\left(a(u+u_0)\right) \zeta^k
			+ a^2 e^{i\beta} A_1\left(a(u+u_0)\right) \zeta
		& (k\ne 0,1,2) \\
		&~ z=A_0 = 0 , \\
	a^2 A\left(a(u+u_0)\right) \ln\zeta
			+ a^2 e^{i\beta} A_1\left(a(u+u_0)\right) \zeta
		& (k=0) \\
		&~ z=A_0 = 0 , \\
	a^2 e^{2i\beta} A\left(a(u+u_0)\right) \zeta^2
		& (k=2) \\
		&~ A_1 = A_0 = 0 ,
	\end{cases}
\end{equation}
where on the right we indicate which integration parameters can be set
to zero by appropriately choosing the gauge parameters $c(u)$, $g(u)$
and $h(u)$ from~\eqref{eq:pp-wave-residual-gauge}, with the dependence
on the remaining gauge parameters $\beta$, $a$ and $u_0$ explicitly
shown. We will need to distinguish two special cases, where $\Re \dot{B}
e^{-\frac{1}{2}\Re B} = \alpha^{-1/2}$ for $\alpha > 0$, in which $\Re
B(u)$ no longer depends on the residual gauge parameter $a$,
\begin{equation} \label{eq:A-residual2}
	A(u) \mapsto
	\begin{cases}
	\frac{\alpha}{(u+u_0)^2} e^{i\Im B\left(a(u+u_0)\right)} \ln\zeta
			+ a^2 A_1\left(a(u+u_0)\right) \zeta
		& (k=0) , \\
	\frac{\alpha}{(u+u_0)^2} e^{i\Im B\left(a(u+u_0)\right)+2i\beta} \zeta^2
		& (k=2) ,
	\end{cases}
\end{equation}
and also the case where $\Re \dot{B} e^{-\frac{1}{2} \Re B} = 0$ and
$\Im \dot{B} e^{-\frac{1}{2} \Re B} = \kappa$ for $\kappa \ge 0$,
\begin{equation} \label{eq:A-residual3}
	A(u) \mapsto
	\begin{cases}
	e^{i\kappa a(u+u_0) + 2\ln a} \ln\zeta
			+ a^2 A_1\left(a(u+u_0)\right) \zeta
		& (k=0) , \\
	e^{i\kappa a(u+u_0) + 2\ln a + 2i\beta} \zeta^2
		& (k=2) .
	\end{cases}
\end{equation}

We will use the residual gauge freedom to set $A_1(u) = A_0(u) = 0$ and
choose the phase of $\lambda$ so that $\lambda > 0$. Hence, the
remaining equation for $\ddot{B}(u)$ integrates to $A(u) = e^{B(u)} =
e^{\beta u + \gamma}$, with complex constants $\beta$, $\gamma$. Under
the remaining residual gauge freedom,
\begin{equation} \label{eq:A-residual4}
	A(u) \mapsto a^2 e^{2\lambda(bu+c)} A\left(a(u+u_0)\right) ,
\end{equation}
with $a$, $u_0$ real and $b$, $c$ complex. The latter two constants
appear because setting $\del_u^2 h(u) = A_1(u)$ determines $h(u)$
uniquely only up to the addition of $bu + c$ with $b$ and $c$ arbitrary
complex constants. Clearly, the integration constants $\beta$ and
$\gamma$ can be absorbed by the residual gauge freedom.
\bibliographystyle{utphys-alpha}
\bibliography{pp-idealbib}

\end{document}